\newcommand{\be}{\begin{equation}}
\newcommand{\ee}{\end{equation}}
\newcommand{\eat}[1]{}
\newtheorem{theorem}{Theorem}
\newtheorem{corollary}[theorem]{Corollary}
\newtheorem{proposition}[theorem]{Proposition}
\newtheorem{remark}[theorem]{Remark}
\newenvironment{proof}[1][Proof]{\noindent\textbf{#1.} }{\ \hfill\rule{0.5em}{0.5em}}
\begin{document}

\title{Rate-loss analysis of an efficient quantum repeater architecture}

\author{Saikat Guha}
\thanks{Email of corresponding author: sguha@bbn.com}
\author{Hari Krovi}
\author{Christopher A. Fuchs$^\dagger$}
\author{Zachary Dutton}
\affiliation{Quantum Information Processing group, Raytheon BBN Technologies, 10 Moulton Street, Cambridge, MA USA 02138 \\
$^\dagger$Present address: Department of Physics, University of Massachusetts Boston, 100 Morrissey Blvd.,
Boston, MA USA 02125}
\author{Joshua A. Slater$^{\dagger\dagger}$}
\author{Christoph Simon}
\author{Wolfgang Tittel}
\affiliation{Institute for Quantum Science and Technology, and Dept. of Physics and Astronomy, University of Calgary, Alberta, T2N 1N4\\
$^{\dagger\dagger}$Present address: Vienna center for quantum science and technology (VCQ), Faculty of Physics, University of Vienna, 1090 Vienna, Austria}


\begin{abstract}
We analyze an entanglement-based quantum key distribution (QKD) architecture that uses a linear chain of quantum repeaters employing photon-pair sources, spectral-multiplexing, linear-optic Bell-state measurements, multi-mode quantum memories and classical-only error correction. Assuming perfect sources, we find an exact expression for the secret-key rate, and an analytical description of how errors propagate through the repeater chain, as a function of various loss and noise parameters of the devices. We show via an explicit analytical calculation, which separately addresses the effects of the principle non-idealities, that this scheme achieves a secret key rate that surpasses the TGW bound---a recently-found fundamental limit to the rate-vs.-loss scaling achievable by any QKD protocol over a direct optical link---thereby providing one of the first rigorous proofs of the efficacy of a repeater protocol. We explicitly calculate the end-to-end shared noisy quantum state generated by the repeater chain, which could be useful for analyzing the performance of other non-QKD quantum protocols that require establishing long-distance entanglement. We evaluate that shared state's fidelity and the achievable entanglement distillation rate, as a function of the number of repeater nodes, total range, and various loss and noise parameters of the system. We extend our theoretical analysis to encompass sources with non-zero two-pair-emission probability, using an efficient exact numerical evaluation of the quantum state propagation and measurements. We expect our results to spur formal rate-loss analysis of other repeater protocols, and also to provide useful abstractions to seed analyses of quantum networks of complex topologies.
\end{abstract}

\keywords{quantum key distribution, quantum repeater, BB84}
\pacs{03.67.Hk, 03.67.Pp, 04.62.+v}

\maketitle

Shared entanglement underlies many quantum information protocols such as quantum key distribution (QKD)~\cite{Eke91}, teleportation~\cite{Ben93} and dense coding~\cite{Ben92}, and is a fundamental information resource that can boost reliable classical and quantum communication rates over noisy quantum channels~\cite{Wil12, Wil12a}. Optical photons are arguably the only candidate for distributing entanglement across long distances. They however are susceptible to loss and noise in the channel, which is the bane of practical realizations of long-distance quantum communication. The maximum entanglement-generation rate over a lossy optical channel with no classical-communication assistance is zero when the total loss exceeds $3$ dB~\cite{Guh08}. With two-way classical-communication assistance, the rates achievable for entanglement generation, as well as those for reliable quantum communication and secret-key generation (i.e., QKD) over a lossy optical channel must decay linearly with the channel's transmittance (i.e., exponentially with optical fiber length), regardless of the specific protocol used, for loss exceeding $\sim 5$ dB~\cite{Tak13}, while the rate plunges to zero at a maximum loss threshold that is determined by the excess noise in the channel and detectors. In order to generate entanglement over long distances at high rates, intermediate nodes equipped with quantum processing power must be interspersed along the lossy channel. {\em Quantum repeaters} are one example of such nodes that can help circumvent the aforesaid linear rate-transmittance fall-off of the unassisted lossy channel---henceforth referred to as the Takeoka-Guha-Wilde (TGW) bound~\cite{Tak13}. However, {\em not} all quantum devices, for example quantum-limited phase-sensitive amplifiers, can serve as effective intermediate nodes for improved quantum communication performance over the unassisted pure-loss channel~\cite{Nam14}.

Several quantum repeater protocols have been proposed, most of which use entanglement swapping by Bell-state measurements, and quantum memories, of some form (see~\cite{San11} for a recent review). The basic quantum repeater protocol probabilistically connects a string of imperfect entangled qubit pairs by using a nested entanglement swapping and purification protocol, thereby creating a single distant pair of high fidelity~\cite{Bri98}. If used for QKD, those final distant entangled pairs are measured by Alice and Bob in randomly-chosen mutually-unbiased bases, followed by sifting, error-correction and privacy amplification over a two-way authenticated classical channel, to generate a shared secret. 

The original repeater protocol~\cite{Bri98} relied on purifying multiple long-distance imperfect shared entangled pairs (into fewer pairs of high fidelity)---a procedure known as {\em entanglement distillation}. As an alternative to entanglement distillation, several forward-quantum-error-corrected protocols have been proposed and analyzed~\cite{Jia09, Bra14}, which can afford a better rate performance at the expense of more frequent memory-based repeaters capable of universal quantum logic. Some of the more recently proposed forward-coded protocols do not even need any matter quantum memories, but come at the expense of requiring fast quantum logic and feedforward at all-optical center stations, as well as a potentially huge overhead in terms of the number of photons used for error correction~\cite{Mun12, Lo13}.

There is therefore a lot of interest in simpler approaches to quantum repeaters that do not use entanglement purification or quantum error correction. The seminal work in this area was the DLCZ protocol \cite{Dua01}, which uses single-photon interference to create entanglement between distant atomic ensemble quantum memories. This entanglement is swapped via linear optics and single-photon detections and finally converted into two-photon entanglement at the two endpoints using the same basic ingredients. The DLCZ protocol triggered a lot of experimental and theoretical activity~\cite{San11}. It has two key shortcomings from a practical point of view. First, the achievable entanglement distribution rate is very low. Second, its reliance on single-photon interference means that interferometric stability over long distances is required. A lot of subsequent work has focused on addressing these two points. One promising approach that addresses the first point is multiplexing. Refs.~\cite{Col07} and \cite{Sim07} proposed the use of spatial and temporal multiplexing respectively. The second point can be addressed by using two-photon interference instead of single-photon interference. Proposals based on two-photon interference include Refs.~\cite{Zha07, Yua08, Che07, San08}. The reader is also encouraged to see Ref.~\cite{San11} for a detailed review of Refs.~\cite{Col07, Sim07, Zha07, Che07, San08} and related work.

A more recent proposal by Ref. \cite{Sin13A} promises high entanglement distribution rates by combining two-photon interference and spectral multiplexing. It uses photon-pair sources, multi-mode quantum memories \cite{Sag11, PRA052329}, linear-optic Bell-state measurements~\cite{Bra95, Lut99}, and classical-only error correction. This protocol does not rely on purification, and does not require hierarchical connection of the elementary links (i.e., multiple connections can proceed simultaneously), and thus the memory coherence time requirements and the system's clock speed are not driven by long-distance classical communication delays. The protocol allows the fidelity (of the end-to-end shared entangled state) to deteriorate as the chain lengthens, and finally uses classical error correction on a long sifted sequence of correlated pairs of classical data generated by measurements by Alice and Bob, to extract quantum-secure shared secret keys.

Despite the practical appeal of the architecture proposed in~\cite{Sin13A}, a rigorous calculation of its achievable rate-vs.-loss performance---both entanglement-distillation and secret-key generation rates---in the presence of various loss and noise detriments, and showing that it can fundamentally outperform the TGW bound has yet to be done, and is the primary purpose of this paper. To our knowledge, we provide one of the first explicit calculations of the rate-vs.-loss function of any quantum repeater protocol, and hence a rigorous achievability proof that this repeater protocol can beat the TGW bound, even with lossy and noisy components. Our compact scaling results could help abstract off the rate-loss function of a linear repeater chain to seed future network theoretic analyses of quantum networks of more complex topologies. We hope that our work will incite similar detailed rate-loss analysis of other repeater protocols, which will enable quantitative resource-performance tradeoff-studies and comparisons of the various protocols.

A big challenge that faces practical designs of long-distance quantum repeater architectures is the quantitative understanding of how the shared entangled state evolves across concatenated swap operations across multiple repeater nodes, which would enable calculating the rates of various quantum communication protocols that may consume the generated shared entanglement. Some recent studies were done to analyze linear chains of quantum relays~\cite{Kha13} and memory-based repeaters~\cite{Raz08, Sin13A}, which have either used extensive numerical simulations, or proposed semi-analytic or approximate theoretical models. Another paper did an elaborate analysis of various prominent quantum repeater protocols from the perspective of evaluating the minimal required parameters to obtain a nonzero secret key at a given range~\cite{Sil13}. Finally, a recent study of a relay architecture constructed using spontaneous parametric downconversion (SPDC) sources and concatenated entanglement swapping~\cite{Kha15} suggests the need of quantum memories to beat the TGW bound.

In this paper, we present a complete analytical characterization of the evolution of the end-to-end shared-entangled state in a concatenated quantum repeater chain and evaluate its performance for QKD. We analyze the scheme proposed in~\cite{Sin13A}. We analyze QKD using the aforesaid repeater chain as an example application, and obtain an exact expression for the secret key rate as a function of loss, number of swap stages, and various loss-and-noise parameters of the channel and detectors. We account for fiber loss, detector dark counts, detector inefficiency, multi-pair emission rates of the entanglement sources, and loss in loading (readout) into (from) the quantum memories. We find a compact scaling law for how the quantum bit error rate (QBER)---the probability that Alice and Bob obtain a mismatched sifted key bit despite measuring their halves of the entangled state in the same bases---scales up with increasing number of swap levels. This analytical scaling has practical importance, since an experimentally measured QBER on a single elementary link can be used to predict the QBER (and hence the key rates) practically obtainable over a long-distance channel that is constructed with multiple elementary links made with identical imperfect devices. Our calculation involves a detailed analysis of the Bell-swap operations by modeling imperfect single-photon detectors with appropriate positive-operator-valued-measure (POVM) elements, and solving a variant of the {\em logistic map}, a non-linear difference equation whose solutions are known to be chaotic in general~\cite{Sch1870}. Our calculations show that the aforesaid repeater chain, even if built using lossy and noisy devices, attains an overall rate-loss scaling for QKD that outperforms the TGW bound---the best performance achievable by any QKD protocol that does not employ quantum repeaters. To be precise, if $\eta \in (0, 1]$ is the end-to-end transmittance of the Alice-to-Bob channel, we show that by dividing up the channel into an optimum number of repeater nodes, the secret key rate achieved by the repeater chain, $R = A\eta^\xi$. The pre-factor $A$ and the power-law exponent $\xi$, $0 < \xi < 1$ are constants that are functions of various loss and noise parameters of the system. This beats the TGW bound's rate-loss scaling, i.e., $R \le \log[(1+\eta)/(1-\eta)] \approx 2.89\eta$ bits/mode, for $\eta \ll 1$~\cite{Tak13}. Furthermore, since we calculate the exact quantum state after every swap stage, our results can be used to calculate any other quantity of interest, such as fidelity (see Appendix~\ref{app:QBER_state}), for other applications of long-distance shared entanglement.

We also do an exact evaluation of the repeater chain numerically---using an efficient routine that employs sparsified matrix representations of bosonic operations---which enables us to go beyond sources with zero two-pair emissions, i.e., $p(2) > 0$. Even for sources with $p(2)>0$, our analytical prediction of QBER propagation through the repeater chain is shown to hold, albeit with a $p(2)$-dependent modification to a pre-factor. Using the above phenomenological model of QBER propagation, we show that positive two-pair probability $p(2)$ is shown to deteriorate the rate-distance function, but in the following way---at any given value of $p(2)$, there is a maximum number $N_{\rm max}(p(2)) \approx 1 + c/p(2)$ of elementary links such that for $N < N_{\rm max}$ links, the rate-loss envelope achieved by the repeater chain remains almost identical to what is achieved by a $p(2)=0$ source ($c$ is a constant), and thus continues to beat the TGW bound's scaling limit. However, for a chain with $N$ links with $N \ge N_{\rm max}(p(2))$, the key rate becomes worse at all range $L$ compared to when fewer than $N$ elementary links are employed. Conversely for a given $N$, as long as $p(2)$ is less than the inverse of the function $N_{\rm max}(p(2))$, the rate-loss envelope remains practically unaffected.

The paper is organized as follows. We begin with a description of the repeater architecture, and set notations, in Section~\ref{sec:Architecture}. In Section~\ref{sec:analysis}, we state our main results, followed by a high-level description of the key steps of our theoretical analysis. All the detailed proofs are deferred to the Appendices. We then summarize our main numerical results in Section~\ref{sec:numerics}, and an empirical analysis of the effect of source imperfections on the scaling of the secret key rate. Finally, we conclude the paper in Section~\ref{sec:conclusions}, with thoughts on open questions and future work.

\section{The repeater architecture}\label{sec:Architecture}

The architecture~\cite{Sin13A} is depicted schematically in Figs.~\ref{fig:architecture},~\ref{fig:connections}, and~\ref{fig:timing_diagram}. The total Alice to Bob range, $L$ km of lossy fiber, is divided into $N = 2^n$ elementary links.
\begin{figure}
\centering
\includegraphics[width=\columnwidth]{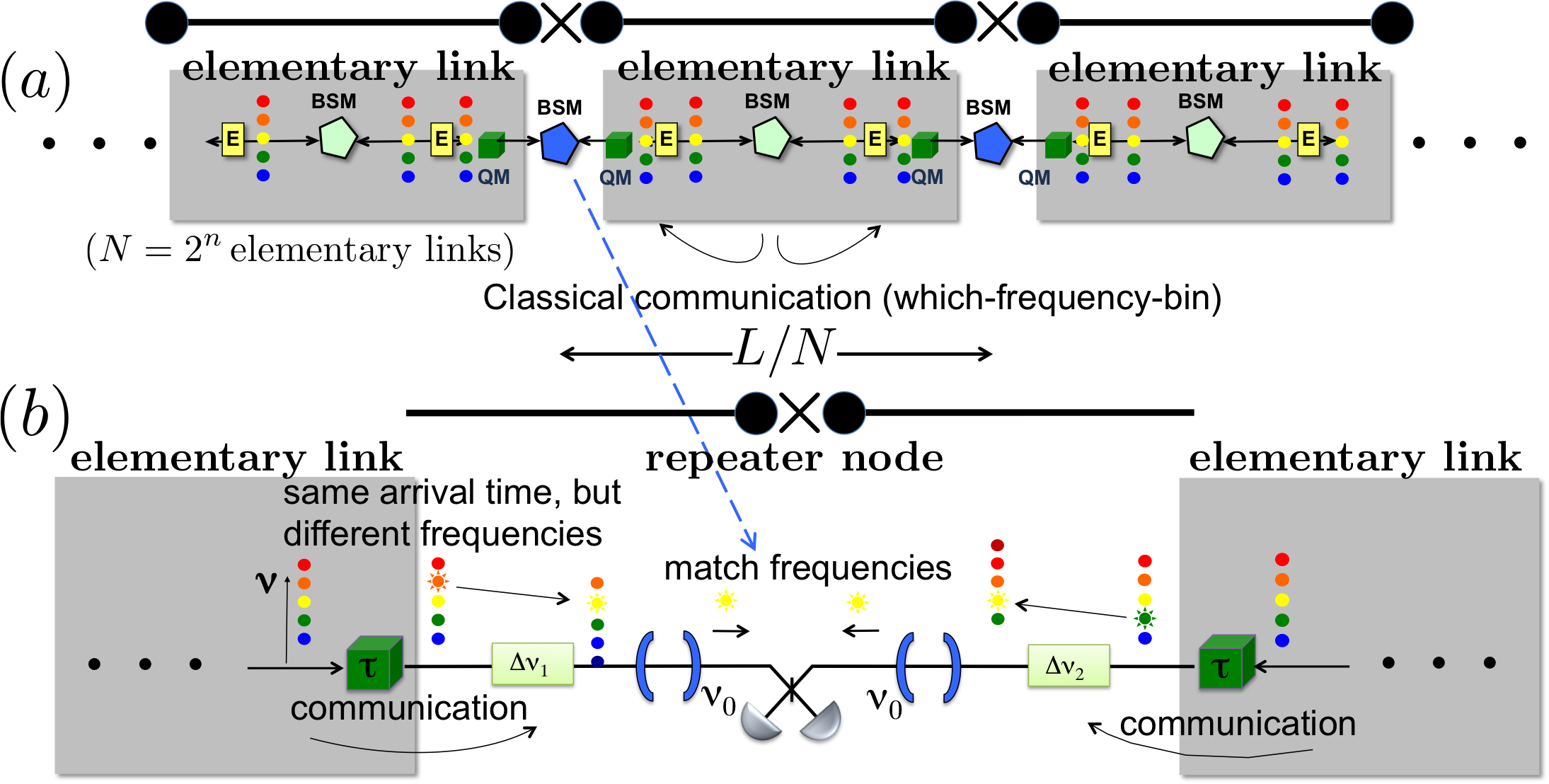}
\caption{(Color online) Schematic of quantum repeater architecture~\cite{Sin13A}.}
\label{fig:architecture}
\end{figure}

{\em The elementary links.}---Entangled photon-pair sources (E) at the two ends of each elementary link produce an $M$-fold tensor product maximally-entangled Bell state, i.e., $|M^{\pm}\rangle^{\otimes M}$, $|M^{\pm}\rangle \triangleq \left[|10,01\rangle \pm |01,10\rangle\right]/\sqrt{2}$, where $M$ is the number of orthogonal frequency modes. The sources then send halves of this entangled state towards the link's center. The other halves are loaded to multi-mode atomic quantum memories (QM) at each end of the elementary link~\cite{Sag11, PRA052329} (see Fig.~\ref{fig:architecture}). Each qubit of the Bell pair is encoded in two time-resolved bins ($\left\{|10\rangle, |01\rangle\right\}$). Each qubit (over all $M$ orthogonal frequency modes) occupies $T_q$ seconds, and undergoes lossy transmission with transmittance $\lambda = 10^{-(\alpha L / 2N)/10}$, where $\alpha$ (in dB/km) is the fiber's loss coefficient. At the center of the link, linear-optic Bell-state measurements (BSMs)~\cite{Bra95} act on the $M$ qubit pairs. The BSM comprises a 50-50 beam-splitter followed by a pair of single-photon detectors (which acts in sequence on each of the two time bins of the qubit) that can spectrally-resolve $M$ frequency modes. We assume however that the detectors have no photon number resolution. The detection efficiency and dark-click probability (per frequency mode and time bin) for each detector is taken to be $\eta_e$ and $P_e$, respectively. A linear-optic BSM is successful with at most $50\%$ probability~\cite{Lut99}. The sources $E$ are assumed to be deterministic~\cite{Dou10, Hua12}, i.e., they generate a copy of $|M^\pm\rangle^{\otimes M}$, every $T_q$ seconds, over the $M$ orthogonal frequency modes. This suffices since any zero-photon emission probability can be subsumed into the detection efficiency $\eta_e$, as we will see later. Non-zero two-pair emission probability $p(2)$ will be addressed in Section~\ref{sec:numerics}. Upon successful projection by the BSM on one of the Bell states in at least one of the $M$ frequencies, which happens with probability $P_s(1) = 1 - (1 - P_{s0})^M$, the BSM communicates the which-frequency-was-successful information to both ends. $P_{s0}$ is the success probability for a single frequency. We denote the (two-qubit four-mode) quantum state of a successfully-created elementary link, $\rho_1$.

{\em Connecting elementary links.}---The two memories at a {\em repeater node}, upon receipt of a pair of which-frequency information from the adjoining elementary links, translate their qubits to one pre-determined common frequency. A BSM at a {\em single} frequency is then performed on this pair~\cite{Sin13A}. The BSMs at the elementary-link centers all proceed simultaneously, and so do the repeater-node BSMs. This is unlike the DLCZ protocol, where BSMs are performed hierarchically, necessitating longer-lifetime memories. We assume a universal synchronized clock is available. The clock-rate of the system ($T_q^{-1}$) is limited by the time it takes to perform the BSMs at the elementary link centers ($\tau_{\rm BSM}$), those at the repeaters ($\tau_{\rm BSM}^\prime$), and the time for loading (readout) of the qubits to (from) the memories, $\tau_{\rm mem}$. There is a latency between entangled pair emissions and secret key generation, but the clock rate is not tied to this latency (see Fig.~\ref{fig:timing_diagram} for the timing diagram). We denote the efficiencies and dark-click probability for each detector used for the repeater-node BSMs, $\eta_r$ and $P_r$, respectively. Let $\lambda_m$ denote the sub-unity efficiency in loading (and retrieving) the photonic qubit into (and from) the memories, and that of frequency shifting and filtering. If this BSM is successful, two elementary links are connected to form a two-qubit entangled state $\rho_2$. Two copies of $\rho_2$ are connected (probabilistically) to produce $\rho_3$, etc. (although, as noted above, the connections do not have to proceed in this hierarchical manner). Given two identical successfully-heralded copies of $\rho_{i-1}$, the probability that a repeater-node BSM successfully heralds a $\rho_i$, is $P_s(i)$, and as we will see later, $P_s(i) = P_s$, $\forall i \in \left\{2, \ldots, n+1\right\}$.

{\em Error probabilities and key rate.}---Say, Alice and Bob make measurements on the two-qubit shared state $\rho_{i}$, either in the computational basis (single-photon detection on each of the two modes of their respective qubits), or the $45$-degrees rotated basis (realized by a 50-50 beamsplitter action on the two modes of their respective qubits, followed by single-photon detection on each mode). The detection efficiency and dark-click probability of their detectors are denoted $\eta_d$ and $P_d$. Alice and Bob then share their detection outcomes over an authenticated public channel. This detection of one copy of $\rho_i$ produces one of $16$ possible outcomes. As an example, the detection outcome ``$1,0;1,1$" means Alice gets a click and a no-click outcome on her qubit, and Bob gets clicks on detection of both modes of his qubit (it is instructive to note here that the ``$1,1$" outcome is possible only if $P_d > 0$). The sift probability $P_1$ is the probability that neither Alice nor Bob get zero clicks on both their detectors (i.e., $9$ of the $16$ possible outcomes), {\em given} they both measure their qubits in the same basis~\footnote{Note that this definition of sifting clearly suggests that, if the entanglement sources have a non-zero two-pair-emission probability $p(2)$, then an improved sifting performance could be obtained if Alice's and Bob's detectors have photon number resolving (PNR) capability, since that will help post-select out erroneous multi-photon events. We explore and analyze this further in~\cite{Kro15}.}. Upon a successful sift, Alice interprets her sifted bit as: ``$0,1$" $\to 0$, ``$1,0$" $\to 1$, and ``$1,1$" $\to 0$ or $1$ with equal probability, whereas Bob interprets his sifted bit as: ``$0,1$" $\to 1$, ``$1,0$" $\to 0$, and ``$1,1$" $\to 0$ or $1$ with equal probability. One may wonder why Alice and Bob do not simply discard all the two-click events as errors (in which case the sift would happen conditioned only on $4$ of the $16$ possible measurement outcomes). Doing so exposes them to a security vulnerability that was identified by L{\" u}tkenhaus in~\cite{Lut99a}. Conditioned on a successful sift, we denote $Q_i$, the QBER, to be the probability that the sifted bits Alice and Bob infer are different. The error correcting code used to extract keys must code around this error rate. If all detectors are noiseless (i.e., $P_e = P_r = P_d = 0$), $Q_i = 0$, $1 \le i \le n+1$. The overall success probability in creating the shared state $\rho_{n+1}$,
$
P_{\rm succ} = P_s(n+1) \,\left(P_s(n)\right)^2 \ldots \left(P_s(2)\right)^{2^{n-1}}\, \left(P_s(1)\right)^{2^n} = P_s^{N-1}P_s(1)^N
$, $N = 2^n$.
Let us assume Alice and Bob make the aforesaid measurement and sifting on $K$ identical copies of the qubit-pair $\rho_{n+1}$, i.e., a shared state created by connecting $N=2^n$ elementary links. In the limit of large $K$, and assuming an optimal error correcting code, Alice and Bob can extract ${P_1P_{\rm succ}R_2(Q_{n+1})}/2$ unconditionally-secure secret key bits per qubit pair. Therefore, the secret-key rate is given by,
\begin{equation}
R = {P_1P_{\rm succ}R_2(Q_{n+1})}/{2T_q}\;{\text{secret-key bits/s}},
\label{eq:keyrate}
\end{equation}
where the factor of $2$ in the denominator accounts for the probability that Alice and Bob use the same basis choice, $R_2(Q) = 1 + 2(1 - Q)\log_2(1 - Q) + 2Q\log_2(Q)$ is the secret-key rate of BB84 in bits per sifted symbol~\cite{ShorPreskill}, with $Q$ the error probability in the sifted bit. Fig.~\ref{fig:timing_diagram} shows a pictorial description of the entire process described in this section. Refs.~\cite{Fer12, Xio12} generalized~\eqref{eq:keyrate} for the case when Alice and Bob use a $d$-dimensional encoding ($d > 2$), and $g$ mutually-unbiased measurement bases, $2 \le g \le d+1$.

\begin{figure}
\centering
\includegraphics[width=\columnwidth]{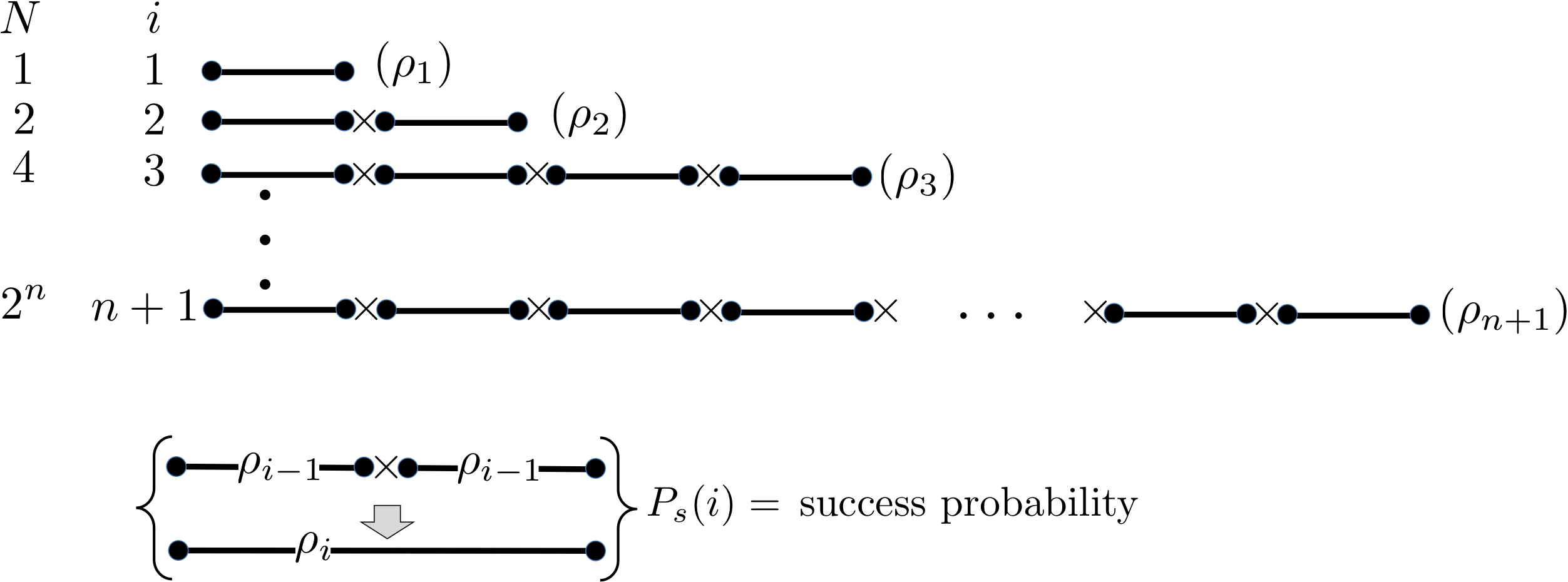}
\caption{Concatenated linking of $N=2^n$ elementary links. Each black dot is one qubit, comprising two temporal modes at one standard center frequency.}
\label{fig:connections}
\end{figure}

\begin{figure*}
\centering
\includegraphics[width=\textwidth]{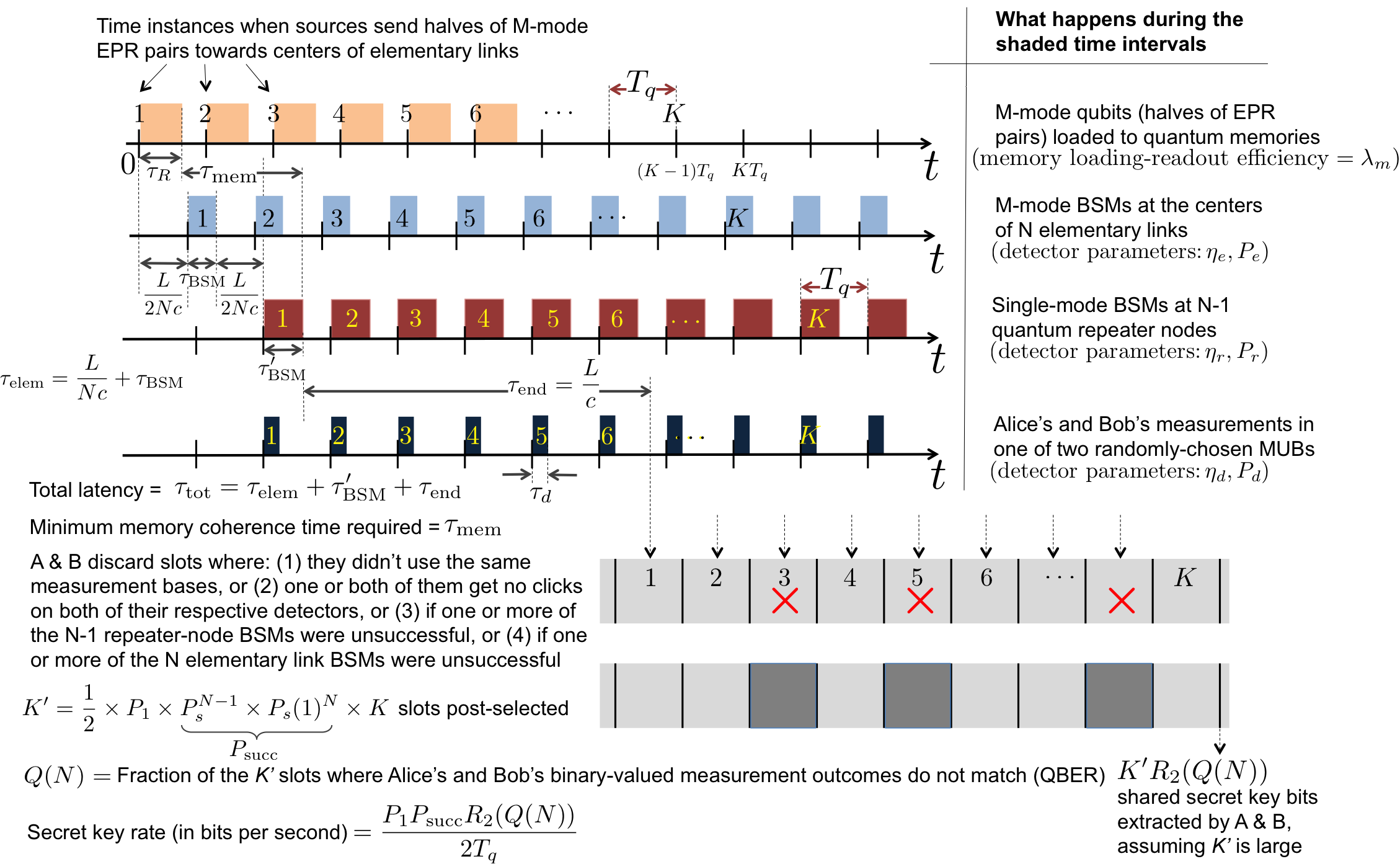}
\caption{(Color online) Timing diagram for the operation of the repeater architecture. At times $t = kT_q$, $k = 0, 1, \ldots$, the sources synchronously generate and send $M$-mode EPR halves towards centers of elementary links (which ideally take time $L/2Nc$ to arrive at the links' centers assuming $c$ to be the speed of light in fiber), while they load the other entangled halves into local quantum memories. The elementary link BSMs takes time $\tau_{\rm BSM}$, and the (classical) which-frequency-succeeded information takes time $L/2Nc$ to arrive back at the repeater nodes. At this point, each repeater node (synchronously) attempts a local BSM at a common frequency across the two qubits held in the two memories linked to the elementary links on its two sides, which takes time $\tau_{\rm BSM}^\prime$. The one-bit classical results of these BSMs take up to $\tau_{\rm end} = L/c$ seconds to reach Alice and Bob. Synchronously with the repeater-node BSMs, Alice and Bob measure the qubits in their respective quantum memories, which takes time $\tau_d \le \tau_{\rm BSM}^\prime$, we assume. Once the one-bit (success or failure) outcomes from all the repeater nodes arrive at Alice and Bob, they begin their classical processing. This involves first discarding the instances when they did not use matching measurement bases, those when they did use the same bases but did not get a successful sift event, and those when not all repeater nodes succeeded. Thereafter they use error correction to sieve out shared secret bits.}
\label{fig:timing_diagram}
\end{figure*}

\section{Theoretical analysis of the quantum repeater chain}\label{sec:analysis}

In Section~\ref{sec:state_propagation}, we will summarize our results on the full analytical characterization of the end-to-end shared entangled state $\rho_i$, $1 \le i \le n+1$, generated by the repeater chain (which could be useful in analyzing other non-QKD applications as well). We summarize explicit formulas for $P_{\rm succ}$, $P_1$, and $Q_{n+1}$, using which we calculate the secret key rate using Eq.~\eqref{eq:keyrate}. In Section~\ref{sec:ratelossenvelope}, we show that the key rate $R_N(L)$ vs. the Alice-to-Bob range $L$ when $N$ equal-length elementary links are employed, is described approximately by a three-segment plot. Using this characterization of $R_N(L)$, we derive the rate-vs.-distance envelope $R(L)$ attained by the repeater chain when an optimal number of elementary links is employed for any given total range, and show that the rate achieved by the repeater protocol is given by $R(L) = A\eta^\xi$, where $\eta = e^{-\alpha L}$, and $\xi < 1$, hence proving that it beats the TGW bound's scaling limit---the best rate-loss scaling achievable without assistance of quantum repeaters (which translates to, $\xi = 1$). Throughout Section~\ref{sec:analysis}, we provide proof sketches, deferring all detailed calculations to the Appendices.

\subsection{Shared state propagation and secret-key rate}\label{sec:state_propagation}

\begin{theorem}\label{thm:Q_main}
Assuming Alice and Bob make a measurement on $\rho_i$ in the same basis,
\begin{enumerate}
\item {\em Sift probability.} The probability Alice and Bob use the same measurement bases is $1/2$. Conditioned on them using the same bases, the probability of a successful sift (i.e., them deeming their measurement outcomes usable for further processing) is given by, $P_1 = (q_1 + q_2 + q_3)^2$, where $q_1 = (1-P_d)A_d$, $q_2 = (1-A_d)P_d$, $q_3 = P_dA_d$, $A_d \equiv \eta_d + (1 - \eta_d)P_d$, are defined in terms of loss and noise parameters of Alice's and Bob's detectors.

\item {\em QBER.} Conditioned on a successful sift, the error probability $Q_i$, i.e., the probability that Alice and Bob obtain mismatched bits, is given by,
\begin{equation}\label{eq:Q_exactform}
Q_i = \frac{1}{2}\left[1 - \frac{t_d}{t_r}\left(t_rt_e\right)^{2^{i-1}}\right], \, 1 \le i \le n+1,
\end{equation}
where $t_e = (1-2w_1)/(1+2w_1)$, $t_r = (1-2w_r)/(1+2w_r)$, and $t_d = ((q_1-q_2)/(q_1+q_2+q_3))^2$ are functions of loss-noise parameters of detectors in the elementary links, memory (repeater) nodes, and Alice-Bob, respectively. The parameters $t_x$ become one when the respective detectors ($x = e, d, r$) have zero dark-click probability, i.e., $P_x = 0$ (but may have sub-unity detection efficiency, i.e., $\eta_x < 1$). $2w_1 = 2c_e/(a_e+b_e)$, is the relative probability of classical correlations to that of pure Bell states in the elementary link state, $\rho_1$. $2w_r = 2c/(a+b)$ is the fractional probability spillovers to the classically-correlated states at each repeater connection. See Proposition~\ref{prop:connections} for definitions of $a_e, b_e, c_e, a, b$ and $c$ in terms of various loss and noise parameters.

\item {\em Successful connection probabilities.} The success probability $P_s(i)$, to prepare $\rho_i$ from two copies of $\rho_{i-1}$, is given by: $P_s(1) = s_1 = a_e + b_e + 2c_e$, and $P_s(i) = s = a + b + 2c$, for $2 \le i \le n+1$. The overall success probability, $P_{\rm succ} = P_s^{N-1}P_s(1)^N$;
\begin{equation}
P_{\rm succ} = \frac{1}{4s}\left[4s\left(1-(1-4s_1)^M\right)\right]^{2^n}.
\end{equation}
\end{enumerate}
\end{theorem}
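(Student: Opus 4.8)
The plan is to reduce the whole theorem to two ingredients. The first is a \emph{normal form} for the heralded two-qubit state after every stage: in the dual-rail/time-bin encoding one should have $\rho_i \propto a_i\,|\Psi^+\rangle\langle\Psi^+| + b_i\,|\Psi^-\rangle\langle\Psi^-| + c_i\,(|00\rangle\langle00| + |11\rangle\langle11|)$, i.e.\ weight on the two ``pure Bell'' components plus total weight $2c_i$ on the classically-correlated part, and moreover the single-qubit marginals of $\rho_i$ should be the \emph{fixed} maximally-mixed one-photon state for every $i$. The second is a scalar recursion in $i$ for $(a_i,b_i,c_i)$ --- equivalently for the single combination that controls the QBER --- which is then solved in closed form. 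All three parts of the theorem follow from these.

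First I would pin down the normal form at the bottom level. Starting from the source state $|M^\pm\rangle^{\otimes M}$, propagate the flying halves through fiber loss $\lambda$ and the memory halves through the loading loss $\lambda_m$, write the 50-50 beam-splitter unitary on the two flying modes in each time bin, and apply the click/no-click POVM of each elementary-link detector (efficiency $\eta_e$, dark-click probability $P_e$); since the detectors are threshold detectors, ``click'' is the projector onto the complement of vacuum and the multi-photon contributions must be summed explicitly. Conditioning the resulting unnormalised state on a heralding pattern --- and absorbing zero-photon-emission and channel/loading losses into $\eta_e$ (and later into $\eta_d,\eta_r$), as announced in Sec.~\ref{sec:Architecture} --- yields $\rho_1$ in the claimed form; I then read off $a_e,b_e,c_e$, hence $s_1=a_e+b_e+2c_e$ and $2w_1=2c_e/(a_e+b_e)$, and verify the marginals are maximally mixed. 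This is the elementary-link content of Proposition~\ref{prop:connections}.

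Next I would analyse one repeater connection: take $\rho_{i-1}\otimes\rho_{i-1}$, frequency-shift the two inner qubits to the common frequency (another $\lambda_m$, absorbed), apply the single-frequency BSM --- beam splitter plus two $(\eta_r,P_r)$ threshold detectors --- to the inner mode-pairs, trace them out, and condition on success. Three things have to be checked. (a) \emph{Closure}: the output is again of the normal form, with $(a_i,b_i,c_i)$ explicit rational functions of $(a_{i-1},b_{i-1},c_{i-1})$; this is where one uses the symmetries of the beam-splitter-plus-detector POVM and of the basis $\{|\Psi^\pm\rangle,|00\rangle,|11\rangle\}$ to rule out partially-correlated or skew terms. (b) \emph{State-independence of the herald probability}: since the BSM touches only the inner qubits, whose marginal is the fixed maximally-mixed state, $P_s(i)=\trace[\Pi_{\rm succ}(\rho^{\rm marg}\otimes\rho^{\rm marg})]$ is the same $s=a+b+2c$ for all $i\ge2$, which decouples the normalisation from the recursion. (c) The induced recursion, a priori a nonlinear rational difference equation (``a variant of the logistic map''), collapses under the M\"obius substitution $x_i\equiv(a_i+b_i-2c_i)/(a_i+b_i+2c_i)$ to $x_1=t_e$, $x_i=t_r\,x_{i-1}^2$, which is solved by taking logarithms: $x_{n+1}=t_r^{\,2^n-1}t_e^{\,2^n}$. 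The square $x_{i-1}^2$ is the usual entanglement-swapping feature (a residual bit flip needs an odd number of erroneous links, so correlations multiply), and $t_r$ is the connector's own correlation factor.

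Finally I would assemble the three parts. The probability Alice and Bob use the same basis is $P(ZZ)+P(XX)=1/2$; conditioned on that, since each of their qubits is a clean dual-rail photon in a random rail, ``at least one click among a party's two detectors'' has probability $q_1+q_2+q_3$ independently of $\rho_i$ and of the other party, so $P_1=(q_1+q_2+q_3)^2$. For the QBER I apply the computational- and rotated-basis POVMs $(\eta_d,P_d)$ to $\rho_i$, run the bit-interpretation rule (including the coin flip on double clicks), and do a short case analysis over the two bases and the three normal-form components; the symmetries of $\rho_i$ collapse this to $Q_i=\tfrac12(1-t_d\,x_i)$ with $t_d=((q_1-q_2)/(q_1+q_2+q_3))^2$, and substituting $x_i$ gives Eq.~\eqref{eq:Q_exactform}; the degenerations $t_x\to1$ at $P_x=0$ are read off directly. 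Part 3 follows by plugging $P_s(1)$ and $P_s=s$ into $P_{\rm succ}=P_s^{\,N-1}P_s(1)^N$ and simplifying with $N=2^n$. The main obstacle is steps (a)--(b) above: carrying the multi-photon terms cleanly through the beam splitter and the non-number-resolving detectors, showing no new state components are generated, and establishing the state-independence of the herald probability --- and then recognising the M\"obius change of variables that linearises the recursion. Everything after that is bookkeeping.
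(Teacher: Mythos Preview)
Your proposal is correct and follows the same overall architecture as the paper's proof: establish a closed normal form for $\rho_i$, derive the recursion induced by a BSM connection, solve it, and read off $P_1$, $Q_i$ and $P_s(i)$ from the terminal state. The differences are in the bookkeeping, and they are in your favour.

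First, the paper carries \emph{four} parameters $(a_i,b_i,c_i,d_i)$, with an explicit $d_i$ weight on $|\psi_1\rangle\langle\psi_1|+|\psi_2\rangle\langle\psi_2|$; since $|\psi_1\rangle\langle\psi_1|+|\psi_2\rangle\langle\psi_2|=|M^+\rangle\langle M^+|+|M^-\rangle\langle M^-|$, your three-parameter form is the same family of states with $d_i$ absorbed into the Bell weights, so closure under the BSM and the existence of a three-parameter recursion are guaranteed. Second, and more substantively, the paper reaches the closed form for $Q_i$ by first deriving a recursion for $w_i=(c_i-d_i)/(a_i+b_i)$ that is a variant of the logistic map, $w_{i+1}=w_r+2(1-2w_r)w_i(1-w_i)$, and then solving it via an ansatz (their Theorem on the modified logistic map). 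Your M\"obius variable $x_i=(\tilde a_i+\tilde b_i-2\tilde c_i)/(\tilde a_i+\tilde b_i+2\tilde c_i)=1-4c_i/s_i$ obeys $x_{i+1}=t_r x_i^2$ directly, which is solved by a single logarithm; this is precisely the paper's error-propagation corollary $(1-2Q_{i+1})=(t_r/t_d)(1-2Q_i)^2$, but used as the primary route rather than deduced after the fact, and it sidesteps the logistic-map detour entirely. Third, your conceptual argument for the $i$-independence of $P_s(i)$ --- the BSM sees only the inner-qubit marginals, which are the fixed maximally-mixed one-photon state for every member of the normal form --- is cleaner than the paper's direct algebraic verification that $s_i=a+b+2c$ for $i\ge2$. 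Both routes still require the same honest labour at your step~(a): the brute-force beam-splitter-plus-POVM calculation (the paper's $C_{j,k,l}$ tensor) that produces the connection coefficients $a,b,c$.
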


\begin{proof}
(sketch)---The proof of Theorem~\ref{thm:Q_main} involves a detailed analysis of how the quantum states $\rho_i$ evolve through successive connections of elementary links (sketched in Fig.~\ref{fig:connections}) and finding the exact solution of a variation of the so called {\em logistic map}, whose solutions are chaotic in general. With the $Q_i$ as defined above, it is easy to see that the following recursive relation holds:
\begin{equation}\label{eq:errorpropagation}
(1 - 2Q_{i+1}) = \frac{t_r}{t_d}(1-2Q_i)^2, 1 \le i \le n.
\end{equation}
The pre-factor $t_r/t_d$ in the above error-propagation law equals one if the detectors at the memory nodes have zero dark clicks ($P_r = 0 \Rightarrow t_r = 1$) {\em and} if the detectors used to measure the end points of $\rho_i$ have zero dark clicks ($P_d = 0 \Rightarrow t_d = 1$). The constant $t_r$ is only a function of the fractional probability transferred to classical correlations ($2c$) to that which goes to one of two Bell states ($a+b$), when two pure Bell states are connected by a linear-optic BSM with lossy-noisy detectors (see Proposition~\ref{prop:connections}). We note that the constant $t_r/t_d$ does not depend upon the parameters that specify the quality of the elementary link, but $Q_1$, the QBER of the elementary link, does depend upon the elementary-link parameters.

We now describe the steps leading up to the proof of the expressions in Theorem~\ref{thm:Q_main}. We will defer several details to Appendices~\ref{app:elem}, \ref{app:success}, \ref{app:sift}, \ref{app:QBER}, and \ref{app:logisticmap}. We assume without loss of generality that the sources always produce the state $|M^+\rangle^{\otimes M}$. In reality, the sources may produce $|M^+\rangle$ or $|M^-\rangle$ in each mode probabilistically, but if the signs are known a posteriori (as in an SPDC source), they can be accounted for in post processing at the error-correction stage. In fact, as long as the sources produce any one of the four Bell-basis states in each $T_q$ second, if it is known which one was produced, it can be accounted for in classical post-processing. Let us first consider calculating $\rho_i$, the two-qubit state after successfully connecting $2^{i-1}$ elementary links.

\begin{proposition}\label{prop:connections}
The quantum state $\rho_i$ obtained after $i$ connection levels, $1 \le i \le n+1$, is given by,
\begin{eqnarray}
\rho_i &=& \frac{1}{s_i}\left[a_i|M^+\rangle\langle M^+| + b_i|M^-\rangle\langle M^-| + c_i|\psi_0\rangle\langle\psi_0|\right. \nonumber \\
&&\left. + d_i|\psi_1\rangle\langle\psi_1| + d_i|\psi_2\rangle\langle\psi_2| + c_i|\psi_3\rangle\langle\psi_3|\right],
\end{eqnarray}
where $|\psi_0\rangle = |01,01\rangle$, $|\psi_1\rangle = |01,10\rangle$, $|\psi_2\rangle = |10,01\rangle$, $|\psi_3\rangle = |10,10\rangle$, $|M^{\pm}\rangle = \left[|\psi_2\rangle \pm |\psi_1\rangle\right]/\sqrt{2}$, $s_i = a_i + b_i + 2(c_i + d_i)$ is a normalization constant, and the coefficients of the state $\rho_{i+1}$ are recursively given as:
\begin{eqnarray}
a_{i+1} &=& \frac{1}{s_i^2}\left[aa_i^2 + (a+b)a_ib_i + bb_i^2\right], \label{eq:aip1}\\
b_{i+1} &=& \frac{1}{s_i^2}\left[ba_i^2 + (a+b)a_ib_i + ab_i^2\right], \label{eq:bip1}\\
c_{i+1} &=& \frac{1}{s_i^2}\left[c(a_i + b_i)^2 + 2(a+b)c_i(a_i + b_i + 2d_i) \right. \nonumber\\
&&\left. + 4c(d_i(a_i + b_i) + c_i^2 + d_i^2)\right], \label{eq:cip1}\\
d_{i+1} &=& \frac{1}{s_i^2}\left[4cc_i(a_i + b_i + 2d_i) + 2(a+b)(d_i(a_i + b_i)\right. \nonumber\\
&&\left. + c_i^2 + d_i^2)\right], \,{\text{with}}\label{eq:dip1}\\
s_{i+1} &=& a_{i+1} + b_{i+1} + 2(c_{i+1} + d_{i+1}),\label{eq:sip1}
\end{eqnarray}
where the parameters,
\begin{eqnarray}
a &=& \frac{1}{8}\left[P_r^2(1-A_r)^2 + A_r^2(1-P_r)^2\right], \label{eq:a}\\
b &=& \frac{1}{8}\left[2A_rP_r(1-A_r)(1-P_r)\right], \label{eq:b}\\
c &=& \frac{1}{8}P_r(1-P_r)\left[P_r(1-B_r)+B_r(1-P_r)\right],\label{eq:c}
\end{eqnarray}
with $A_r = \eta_r\lambda_m + P_r(1 - \eta_r\lambda_m)$, and $B_r = 1 - (1-P_r)(1 - \eta_r\lambda_m)^2$, are functions of the system's loss and noise parameters. For $i=1$ (the elementary link), we have the initial conditions, $a_1 = a_e$, $b_1 = b_e$, $c_1 = c_e$, and $d_1 = 0$, with $s_1=a_e+b_e+2c_e$, where $a_e, b_e, c_e$ are defined exactly as $a, b, c$, with $(P_e, A_e, B_e)$ replacing $(P_r, A_r, B_r)$ in Eqs.~\eqref{eq:aip1},~\eqref{eq:bip1},~\eqref{eq:cip1}, where $A_e = \eta_e\lambda + P_e(1 - \eta_e\lambda)$ and $B_e = 1-(1-P_e)(1 - \eta_e\lambda)^2$, defined similar to $A_r$, $B_r$. Here, $\lambda_m$ is the efficiency of loading (reading) the photonic qubits into (from) the memories, and $\lambda = e^{-\alpha L/2N}$ is the channel transmittance of half of an elementary link.
\end{proposition}

\begin{proof}
(sketch) A detailed proof is given in Appendix~\ref{app:elem}, where we calculate the state $\rho_i$ (i.e., the coefficients $a_i, b_i, c_i, d_i$) explicitly for all $i$ explicitly in terms of the loss and noise parameters. The key steps are: (i) to realize that $\lambda$ and $\lambda_m$ can be subsumed in the detector efficiencies $\eta_e$ and $\eta_r$ of the BSMs, respectively, thereby rendering all qubit transmissions lossless, (ii) realizing that a single-photon detector of efficiency $\eta$ and dark-click probability $P$---when the impinging light is guaranteed to have no more than $2$ photons---is accurately described by the POVM elements (see Fig.~\ref{fig:detmodel} in Appendix~\ref{app:ratelossanalysis}), $F_0=(1-P) \Pi_0 + (1-P)(1-\eta)\Pi_1 + (1-P)(1-\eta)^2\Pi_2$ and $F_1=\mathbb{I} - F_0$, with $\Pi_i = |i\rangle\langle i|$, $i = 0, 1, 2$ being projectors corresponding to the vacuum, single photon and two photon outcomes of an ideal photon-number-resolving measurement, and, (iii) carrying out the mathematics of the linear-optic BSM operation on $\rho_i^{\otimes 2}$ while accounting for the appropriate post-selections as derived in Ref.~\cite{Lut99}.
\end{proof}

Once we have the state $\rho_i$, defined recursively in terms of $\rho_{i-1}$, we calculate the success probabilities, $P_s(i) = 4s_i$, where $s_i = s = a+b+2c$, $\forall i \ge 2$. The success probability of creating $\rho_1$, $P_s(1) = 1 - (1 - P_{s0})^M$, where $P_{s0} = 4s_1$, where $s_1 = a_e+b_e+2c_e$, is the probability of successful creation of an elementary link $\rho_1$ in one of the $M$ frequencies (see Appendix~\ref{app:success} for details).

We next prove that the sift-probability $P_1 = (q_1 + q_2 + q_3)^2, \forall i$, where $q_1 = (1-P_d)A_d$, $q_2 = (1-A_d)P_d$, and $q_3 = P_dA_d$, with $A_d = \eta_d + (1 - \eta_d)P_d$ (which are all functions of the loss and noise parameters of Alice's and Bob's detectors). An intuitive explanation is as follows: $q_2$ is the probability that the noisy detectors `flip' the outcome ($|10\rangle$ detected as (no-click, click), or $|01\rangle$ detected as (click, no-click)); $q_1$ is the probability that the detectors do not flip the outcome ($|01\rangle$ detected as (no-click, click), or $|10\rangle$ detected as (click, no-click)); and $q_3$ is the probability that the detectors generate the (click, click) outcome (regardless of whether $|10\rangle$ or $|01\rangle$ are detected. Since the flip, no-flip, and click-click probabilities are symmetric in the inputs $|01\rangle$ and $|10\rangle$, and each half of $\rho_i$ has exactly one photon (in two modes), regardless of the relative fractions of $|01\rangle$ and $|10\rangle$ in Alice's and Bob's share of the joint state, the probability of a successful sift is the probability they both get one of the above three events, hence $P_1 = (q_1+q_2+q_3)^2$. See Appendix~\ref{app:sift} for a more detailed argument.

The final step is to obtain the error probability
\begin{eqnarray}
Q_i &=& \frac{1}{P_1} \biggl( {\rm Tr} \bigl[ \rho_i(M_{0101} + M_{1010} + \nonumber\\
&& \frac{1}{2}\left\{M_{1101}+M_{1110}+M_{0111}+M_{1011}+M_{1111}\right\} ) \bigr] \biggr) \nonumber
\end{eqnarray}
where $P_1 = {\rm Tr}[\rho_i(M_{0101} + M_{0110} + M_{1001} + M_{1010} + M_{1101} + M_{1110} + M_{0111} + M_{1011} + M_{1111})]$, and $M_{ijkl} \equiv F_i \otimes F_j \otimes F_k \otimes F_l$. It is simple to argue that $Q_i$ is a function only of $2c_i/s_i$ (see Appendix~\ref{app:QBER} for detailed proof). The intuitive argument is that a bit error only arises from $2c_i/s_i$, the fractional probability of the classical correlation terms in $\rho_i$, whereas $(a_i + b_i)$ is the sum fractional probability of the two Bell states $|M^+\rangle$ ($a_i$) and $|M^-\rangle$ ($b_i$), with $s_i = (a_i+b_i) + 2c_i$. Even if the BSM results accidentally in a $|M^-\rangle$ to be formed, there would be no bit error. In order to calculate $c_i$, we calculate $c_i + d_i \equiv y_i$ and $c_i - d_i \equiv u_i$ by adding and subtracting Eqs.~\eqref{eq:cip1} and~\eqref{eq:dip1}, and writing recursions for $y_i$ and $u_i$. The solution to $y_i$ comes out as, $y_i = (s_i - z_i)/2$, with $z_i = (s^2/(a+b))((1+2w_1)(1+2w_r))^{-2^{i-1}}$, where $w_r = c/(a+b)$ and $w_1 = c_e/(a_e+b_e)$. The solution to $u_i$ requires us to solve the following variant of the chaotic {\em logistic map}: $w_{i+1} = w_r + 2(1-2w_r)w_i(1-w_i)$, where $w_i = u_i/z_i$. We derive the exact solution of this quadratic recursion (see Appendix~\ref{app:logisticmap} for proof), and are thus able to evaluate $Q_i = [1 - t_d(1 - 2c_i/s_i)]/2$, which simplifies to the form shown in Eq.~\eqref{eq:Q_exactform} of Theorem~\ref{thm:Q_main}.
\end{proof}

It is easy to account for a probabilistic entanglement source to account for a finite probability of vacuum in each time slot (the numerical calculations in Section~\ref{sec:numerics} further accounts for a non-zero two-pair generation probability). Such a probabilistic entanglement source can be modeled as generating $\rho = (1-p)|{\boldsymbol 0}\rangle\langle{\boldsymbol 0}| + p|M^{\pm}\rangle\langle M^{\pm}|$ in each frequency mode and in every $T_q$ second slot. Since $\rho$ can be regarded as the quantum state obtained by passing $|M^\pm\rangle$ through a beamsplitter of transmittance $p$, we can `push' $p$ through the BSM at the centers of elementary links, and apply our formulas after replacing $\lambda\eta_e$ by $\lambda\eta_e p$, and accordingly modifying the parameters: $a_e, b_e$, and $c_e$.

Finally, even though all the above analysis was done for $N=2^n$ elementary links (with $n$ an integer), we believe that the final formula for $Q$ and rate also hold for any integer $N$. In other words, with an end-to-end optical fiber channel with $N$ elementary links, $N \in {\mathbb Z}^+$,
\begin{equation}\label{eq:rate_general}
R_N(L) = {P_1P_{\rm succ}R_2(Q(N))}/{2T_q}\;{\text{key bits/s}},
\end{equation}
where,
$
Q(N) = \frac{1}{2}\left[1 - ({t_d}/{t_r})\left(t_rt_e\right)^{N}\right].
$
Since $R(Q) = 1-2h_2(Q)$, with $h_2(x) = -x\log_2(x) - (1-x)\log_2(1-x)$ the binary entropy function, the maximum range for which QKD is possible at a non-zero rate is determined by when $Q(N)$ exceeds $Q_{\rm th}$, where $h_2(Q_{\rm th}) = 1/2$ and $Q_{\rm th} \approx 0.1104$. One can invert $Q(N)$ to derive the maximum range as a function of number of elementary links $N$, and all the detector loss and noise parameters:
\begin{eqnarray}
&&L_{\rm max} = \left(\frac{20N}{\alpha}\right) \times \nonumber \\
&&\log_{10}\left[\frac{{\eta_e}\left(\sqrt{2(1-2P_e)H} - 2(1-2P_e)\right)}{4P_e}\right], \label{eq:Lmax}
\end{eqnarray}
where
$
H = 1+{t_r}/{\left[(1-2Q_{\rm th})\frac{t_r}{t_d}\right]^{\frac{1}{N}}}
$
and $\alpha$ is the fiber's loss coefficient, expressed in dB/km units.

\subsection{Rate-vs.-loss performance of the repeater chain}\label{sec:ratelossenvelope}

We defined $R_N(L)$ to be the secret key rate achievable with $N$ equal-length elementary links dividing up the total range $L$. Let us define $R_N^{(0)}(L)$ to be the secret key rate achieved with all the dark click probabilities set to zero, i.e., $P_e = P_r = P_d = 0$. It is reasonable to expect that non-zero dark click probabilities can only decrease the secret key rate (See Appexdix~\ref{app:ratelossenvelope1} for a more detailed discussion), and hence, $R_N(L) \le R_N^{(0)}(L)$. Assuming this to be true, the secret-key rate $R_N(L)$ can be upper bounded, to a very good approximation, by a three-segment rate plot (see Fig.~\ref{fig:rateloss_threepiece}): a constant-rate segment, a linear rate-vs.-transmittance segment, and a zero-rate segment. More specifically, we prove that:

\begin{figure}
\centering
\includegraphics[width=\columnwidth]{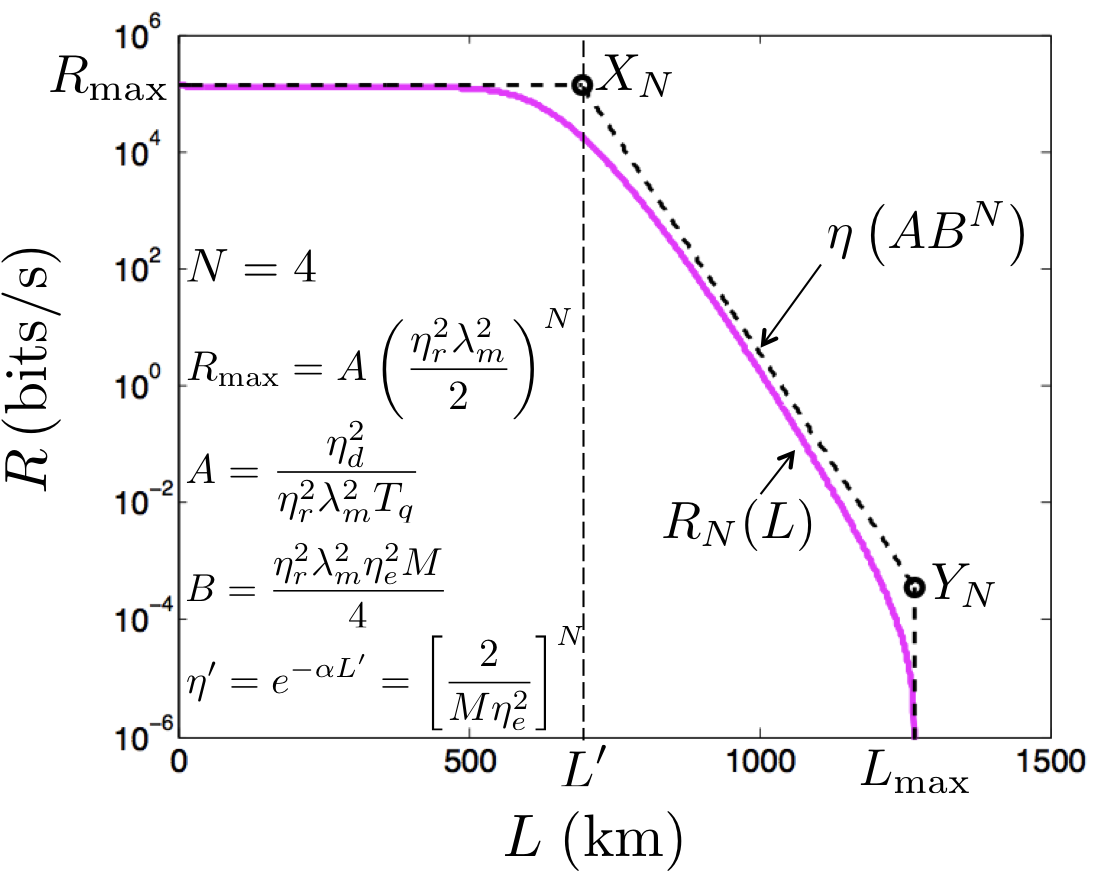}
\caption{(Color online) A `three-piece' upper bound to the rate-vs.-distance $R_N(L)$ achieved by a repeater chain consisting of $N$ elementary links over the range $L$ km. We assume the following parameters: $N=4$, $P_e=P_r=P_d = 3 \times 10^{-5}$, $\eta_e = \eta_r = \eta_d = 0.9$, $M = 1000$, $\lambda_m \equiv 1$ dB, $\alpha \equiv 0.15$ dB/km, $T_q = 50$ ns.}
\label{fig:rateloss_threepiece}
\end{figure}
\begin{theorem}\label{thm:ratelossUB}
The rate-vs.-distance function $R_N(L)$, achieved by a repeater chain comprising $N$ equal-length elementary links, can be upper bounded as:
\begin{equation}
R_N(L) \le R_N^{({\rm UB})}(L) = \left\{
\begin{array}{ll}
R_{\rm max}, & {\text{for}}\; 0 \le L \le L^\prime, \\
\eta \left(AB^N\right), & {\text{for}}\; L^\prime < L < L_{\rm max}, \\
0, & {\text{for}}\; L \ge L_{\rm max},
\end{array}
\right.\label{eq:ratelossUB}
\end{equation}
with $L^\prime = -\log(\eta^\prime)/\alpha$, $\eta^\prime = (2/M\eta_e^2)^N$, and $R_{\rm max} = A\,(\eta_r^2 \lambda_m^2/2)^N$, where the constants $A$ and $B$ are given by, $A = \eta_d^2/(\eta_r^2 \lambda_m^2 T_q)$ and $B = \eta_r^2 \lambda_m^2 \eta_e^2 M/4$, assuming non-zero detector dark-click probabilities cannot improve the key rate achievable by this repeater protocol, i.e., $R_N(L) \le R_N^{(0)}(L)$.
\end{theorem}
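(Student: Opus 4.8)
The plan is to reduce the bound to the dark-count-free rate $R_N^{(0)}(L)$, for which the formulas of Theorem~\ref{thm:Q_main} and Proposition~\ref{prop:connections} collapse to a single transparent product, and then bound that product in two complementary ways together with a noise-threshold argument for the zero-rate tail. First I would invoke the monotonicity hypothesis $R_N(L)\le R_N^{(0)}(L)$ (deferred to Appendix~\ref{app:ratelossenvelope1}), so that it suffices to upper-bound $R_N^{(0)}(L)$. Setting $P_e=P_r=P_d=0$ in Proposition~\ref{prop:connections} gives $A_r=\eta_r\lambda_m$, $b=c=0$, $a=\tfrac18\eta_r^2\lambda_m^2$, hence $P_s=4s=\tfrac12\eta_r^2\lambda_m^2$ and $P_s(1)=1-(1-\tfrac12\eta_e^2\lambda^2)^M$ with $\lambda^{2N}=\eta$; likewise Theorem~\ref{thm:Q_main} yields $t_e=t_r=t_d=1$, so $Q(N)=0$, $R_2(Q(N))=1$, and $P_1=(q_1+q_2+q_3)^2=\eta_d^2$. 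Substituting into~\eqref{eq:rate_general},
\[
R_N^{(0)}(L)=\frac{\eta_d^2}{2T_q}\Bigl(\tfrac12\eta_r^2\lambda_m^2\Bigr)^{N-1}\Bigl(1-(1-\tfrac12\eta_e^2\eta^{1/N})^M\Bigr)^{N}.
\]

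The constant segment is then immediate: the multi-mode factor $\bigl(1-(1-\tfrac12\eta_e^2\eta^{1/N})^M\bigr)^N\le 1$, so $R_N^{(0)}(L)\le\frac{\eta_d^2}{2T_q}(\tfrac12\eta_r^2\lambda_m^2)^{N-1}$, which is exactly $R_{\rm max}=A(\eta_r^2\lambda_m^2/2)^N$ with $A=\eta_d^2/(\eta_r^2\lambda_m^2T_q)$. For the linear segment I would apply Bernoulli's inequality $(1-x)^M\ge 1-Mx$ with $x=\tfrac12\eta_e^2\eta^{1/N}\in(0,1)$, giving $1-(1-x)^M\le Mx$ and hence $\bigl(1-(1-\tfrac12\eta_e^2\eta^{1/N})^M\bigr)^N\le(\tfrac12 M\eta_e^2)^N\eta$; substituting and collecting constants gives $R_N^{(0)}(L)\le\eta\,AB^N$ with $B=\eta_r^2\lambda_m^2\eta_e^2M/4$. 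Both inequalities hold for every $L$, and the binding one is whichever right-hand side is smaller; equating the two bounds, $R_{\rm max}=\eta^\prime AB^N$, fixes the crossover transmittance $\eta^\prime=(2/(M\eta_e^2))^N$, i.e.\ $L^\prime=-\log(\eta^\prime)/\alpha$, with $R_{\rm max}$ tighter for $L\le L^\prime$ and $\eta AB^N$ tighter for $L>L^\prime$. For the zero-rate tail I would use that $Q(N)$ in~\eqref{eq:rate_general} is monotone increasing in $L$ (the elementary-link parameter $w_1$, and hence $t_e$, worsens as $\eta_e\lambda$ shrinks), so $L\ge L_{\rm max}$---with $L_{\rm max}$ the inversion of $Q(N)=Q_{\rm th}$ recorded in~\eqref{eq:Lmax}---implies $Q(N)\ge Q_{\rm th}$, $R_2(Q(N))=1-2h_2(Q(N))\le 0$, and therefore the (non-negative) secret-key rate is $R_N(L)=0$. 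Assembling the three regimes yields~\eqref{eq:ratelossUB}.

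The only genuine obstacle is the monotonicity hypothesis $R_N(L)\le R_N^{(0)}(L)$ invoked at the outset: dark clicks enter the key-rate formula with competing signs---they inflate the Bell-swap success probabilities $P_s(1),P_s$ and can raise the sift probability $P_1$, while simultaneously increasing $Q(N)$ and thereby suppressing $R_2(Q(N))$---so ``more detector noise never helps'' is not automatic and must be established separately in Appendix~\ref{app:ratelossenvelope1}, e.g.\ by showing the $R_2$ degradation dominates the gains in $P_1$ and $P_{\rm succ}$, or by a direct parameter-monotonicity analysis of~\eqref{eq:rate_general} and Proposition~\ref{prop:connections}. A secondary caveat, not affecting the inequality, is that Theorem~\ref{thm:Q_main} is derived for $N=2^n$ and the closed form~\eqref{eq:rate_general} for general integer $N$ is taken on faith; and the word ``approximately'' in the statement refers only to how tightly the three-piece envelope hugs $R_N^{(0)}(L)$ near its corners, since~\eqref{eq:ratelossUB} itself is an exact upper bound once the monotonicity assumption is granted.
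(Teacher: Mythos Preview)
Your proposal is correct and follows essentially the same approach as the paper: reduce to the dark-click-free rate $R_N^{(0)}(L)$ via the assumed monotonicity, use the trivial bound $(1-(1-4s_1)^M)^N\le 1$ for the constant segment, Bernoulli's inequality $(1-4s_1)^M\ge 1-4Ms_1$ for the linear segment, and the QBER threshold for the zero-rate tail. Your discussion of the monotonicity caveat is in fact more thorough than the paper's, which likewise leaves $R_N(L)\le R_N^{(0)}(L)$ as an intuitively-argued but unproven assumption.
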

\begin{proof}
See Appendix~\ref{app:ratelossenvelope1}. The proof proceeds by upper bounding $R_N^{(0)}(L)$ individually by $R_{\rm max}$ and by $\eta \left(AB^N\right)$. The third segment is trivial since $R_N(L) = 0$ for $L \ge L_{\rm max}$, as we showed earlier.
\end{proof}

The third segment in Eq.~\eqref{eq:ratelossUB} disappears when $P_e = P_r = P_d = 0$, since $L_{\rm max} \to \infty$. It is straightforward to solve for the envelope of the points $\left\{X_N\right\}$, $N = 1, 2, \ldots$, where the first two segments of $R_N^{(\rm UB)}(L)$ intersect (see Fig.~\ref{fig:rateloss_threepiece}), and to prove that this envelope $R^{(\rm UB)}(L)$, is an upper bound to the actual rate-loss envelope $R(L)$:

\begin{theorem}\label{thm:ratelossUB_exp}
Assuming $R_N(L) \le R_N^{(0)}(L)$ holds for all $N \ge 1$, the rate-vs.-distance function $R(L)$ achieved by the repeater chain, once optimized over the choice of the number of elementary links $N$ as a function of the range $L$, can be upper bounded as:
\begin{equation}
R(L) \le R^{(\rm UB)}(L) = A\,\eta^t,
\end{equation}
where the power-law exponent $t$ is given by,
\begin{equation}
t = \frac{\log\left(\eta_r^2 \lambda_m^2/2\right)}{\log\left(2/M\eta_e^2\right)}.
\end{equation}
\end{theorem}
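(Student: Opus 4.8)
The plan is to collapse the per-$N$ three-piece bound of Theorem~\ref{thm:ratelossUB} into a single curve by optimizing over the number of elementary links, so that nothing beyond a short envelope computation remains. The first step is to rewrite $R_N^{({\rm UB})}(L)$ in a form adapted to that optimization. Set $\beta \equiv \eta_r^2\lambda_m^2/2 < 1$ and $\gamma \equiv B = \eta_r^2\lambda_m^2\eta_e^2 M/4$, so the plateau height is $R_{\max}=A\beta^N$ and the linear segment is $\eta\,A\gamma^N$. Since $\eta^\prime=(2/M\eta_e^2)^N=(\beta/\gamma)^N$, at $L=L^\prime$ (where $\eta=\eta^\prime$) the linear segment equals $\eta^\prime A\gamma^N=A\beta^N=R_{\max}$, so the two pieces join continuously; moreover $\eta A\gamma^N\gtrless A\beta^N$ according as $\eta\gtrless\eta^\prime$, i.e.\ according as $L\lessgtr L^\prime$. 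Hence, with all dark-click probabilities set to zero (so the zero-rate segment and $L_{\max}$ disappear), Theorem~\ref{thm:ratelossUB} can be restated as
\[
R_N(L)\ \le\ R_N^{({\rm UB})}(L)\ =\ \min\!\big(A\beta^{N},\ \eta\,A\gamma^{N}\big),\qquad N\ge1,
\]
and, under the standing hypothesis $R_N(L)\le R_N^{(0)}(L)$, this bounds the rate of the protocol with noisy detectors as well.

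Next I would pass to the envelope. Because $R(L)=\max_{N\in\mathbb Z^+}R_N(L)\le\max_{N\in\mathbb Z^+}R_N^{({\rm UB})}(L)\le\sup_{N\in\mathbb R}\min(A\beta^{N},\eta A\gamma^{N})$ — the last step merely enlarges the feasible set — it suffices to maximize $g(N)\equiv\min(A\beta^{N},\eta A\gamma^{N})$ over real $N$. In the practically relevant regime $\gamma>1$ (equivalently $M>4/(\eta_e^2\eta_r^2\lambda_m^2)$) the factor $A\beta^{N}$ is strictly decreasing and $\eta A\gamma^{N}$ strictly increasing in $N$, so $g$ is strictly unimodal and attains its maximum at the unique crossing $N^\star$ where $A\beta^{N^\star}=\eta A\gamma^{N^\star}$; solving gives $(\beta/\gamma)^{N^\star}=\eta$, hence $N^\star=\log\eta/\log(2/M\eta_e^2)$.

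Finally, substituting $N^\star$ back yields $R^{({\rm UB})}(L)=A\beta^{N^\star}=A\exp(N^\star\log\beta)=A\,\eta^{\,\log\beta/\log(\beta/\gamma)}=A\,\eta^{\,t}$ with $t=\log(\eta_r^2\lambda_m^2/2)/\log(2/M\eta_e^2)$, the claimed exponent; since both logarithms are negative when $\gamma>1$, we get $t>0$, and $t<1$ precisely when $M>4/(\eta_e^2\eta_r^2\lambda_m^2)$, i.e.\ exactly the regime in which the envelope $A\eta^t$ falls off strictly more slowly with loss than the linear TGW bound. The calculation itself is routine; the two points that need care are (i) being explicit that the integer optimum is dominated by the real one — trivial, but it is what legitimizes treating $N$ as continuous even though $N^\star$ need not be an integer or even exceed $1$, and the bound $R(L)\le A\eta^t$ then holds for \emph{all} $L$ — and (ii) isolating the regime $\gamma>1$ under which $\eta A\gamma^N$ genuinely increases in $N$; outside it $g$ is monotone decreasing, the optimum degenerates to $N=1$, and one obtains no repeater gain and no sub-linear power law. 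The only genuine obstacle was proving Theorem~\ref{thm:ratelossUB} itself; given it, this statement is essentially its geometric corollary.
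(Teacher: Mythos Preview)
Your argument is correct and follows essentially the same route as the paper's. Both proofs compute the envelope by locating, for each $N$, the crossing point $X_N$ of the plateau $A\beta^N$ and the linear piece $\eta A\gamma^N$, and then eliminating $N$ to obtain the locus $A\eta^t$; your framing as $\sup_{N\in\mathbb R}\min(A\beta^N,\eta A\gamma^N)$ and the explicit continuous relaxation make the logic a bit tidier than the paper's parametric elimination, and your handling of the third (zero-rate) segment---simply dropping it, which only weakens the upper bound---cleanly sidesteps the paper's separate discussion of the $\{Y_N\}$ corners.
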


\begin{proof}
See Appendix~\ref{app:ratelossenvelope2} for the proof. We first show that $R_N(L) \le R_N^{(0)}(L), \forall N$ implies $R(L) \le R^{(0)}(L)$, where $R^{(0)}(L)$ is the overall rate-distance envelope, when $P_e = P_r = P_d = 0$. We then derive an upper bound to $R^{(0)}(L)$ by using the result in Theorem~\ref{thm:ratelossUB}.
\end{proof}

The above upper bound already suggests a power-law scaling of the true rate-loss envelope $R(L)$. It is actually possible to derive the zero-dark-click-probability rate-distance envelope $R^{(0)}(L)$ {\em exactly}, and as we show next, it is indeed given by a power law in the total Alice-to-Bob channel transmittance, $\eta$.

\begin{theorem}\label{thm:xiformula}
The rate-vs.-distance $R^{(0)}(L)$ achieved by a repeater chain when all detector dark-click probabilities are zero and an appropriate number of elementary links are used for a given range $L$, is exactly given by:
\begin{equation}
R^{(0)}(L) = A\,\eta^\xi,
\end{equation}
where $A = \eta_d^2/(\eta_r^2 \lambda_m^2 T_q)$, and the exponent $\xi$ is given by:
\begin{equation}
\xi = \frac{\log\left[\beta \left(1 - (1-\gamma z)^M\right)\right]}{\log z},
\label{eq:xi}
\end{equation}
where $z$ is the unique solution of the following transcendental equation in the interval $(0, 1)$:
\begin{eqnarray}
&&\left(1-(1-\gamma z)^M\right)\log \left[\beta(1 - (1-\gamma z)^M)\right] \nonumber\\
&&= \gamma Mz\log z \left(1-\gamma z\right)^{M-1},
\end{eqnarray}
with, $\beta = \eta_r^2\lambda_m^2/2$, and $\gamma = \eta_e^2/2$.
\end{theorem}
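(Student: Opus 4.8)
The plan is to first collapse the full machinery of Theorem~\ref{thm:Q_main} and Proposition~\ref{prop:connections} to the ideal-detector regime, and then perform a single scalar optimisation. Setting $P_e=P_r=P_d=0$ gives $A_d=\eta_d$, $q_1=\eta_d$, $q_2=q_3=0$, hence $P_1=\eta_d^2$; it also gives $t_e=t_r=t_d=1$, so $Q_i=0$ and $R_2(Q_{n+1})=1$; and from \eqref{eq:a}--\eqref{eq:c}, $b_e=c_e=b=c=0$, $a_e=\tfrac{1}{8}\eta_e^2\lambda^2$, $a=\tfrac{1}{8}\eta_r^2\lambda_m^2$, so $P_{s0}=4s_1=\tfrac{1}{2}\eta_e^2\lambda^2$ and $P_s=4s=\tfrac{1}{2}\eta_r^2\lambda_m^2$. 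Substituting $P_{\rm succ}=P_s^{N-1}P_s(1)^N$ with $P_s(1)=1-(1-P_{s0})^M$ into \eqref{eq:keyrate}, and using $\lambda^2=\eta^{1/N}$ (the $2N$ half-links each have transmittance $\lambda$ so $\eta=\lambda^{2N}$), the factors of $\tfrac{1}{2}$ and $T_q$ collect into $A=\eta_d^2/(\eta_r^2\lambda_m^2 T_q)$, yielding the closed form $R_N^{(0)}(L)=A\big[\beta\,(1-(1-\gamma\,\eta^{1/N})^M)\big]^{N}$ with $\beta=\eta_r^2\lambda_m^2/2$ and $\gamma=\eta_e^2/2$. (As a consistency check, in the regime $\gamma\eta^{1/N}\ll1$ this reproduces the linear segment $\eta(AB^N)$ of Theorem~\ref{thm:ratelossUB} with $B=\beta M\gamma$, and the saturated value $A\beta^N=R_{\max}$.)

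Next I would optimise over the number of links, treating $N$ as a positive real variable and reading off the resulting envelope. The natural substitution is $z=\eta^{1/N}\in(0,1)$, i.e.\ $N=\log\eta/\log z$, which decouples $\eta$ from the shape of the curve: taking logarithms, $\log(R_N^{(0)}/A)=\log\eta\cdot f(z)$ where $f(z)=\log[\beta(1-(1-\gamma z)^M)]/\log z$, so $R_N^{(0)}(L)=A\,\eta^{f(z)}$. Since $\log\eta<0$, maximising the rate over $N$ is exactly minimising $f$ over $z\in(0,1)$; at an interior stationary point $f'(z)=0$ gives $z\,g'(z)\log z=g(z)$ with $g(z)=\log[\beta(1-(1-\gamma z)^M)]$ and $g'(z)=\gamma M(1-\gamma z)^{M-1}/(1-(1-\gamma z)^M)$. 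Clearing the denominator turns this into precisely the transcendental equation in the statement, with $\xi=f(z)$ equal to \eqref{eq:xi}.

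The substantive work — and the main obstacle — is to prove this stationarity equation has a \emph{unique} root $z\in(0,1)$ that is the global minimiser, so the envelope really is the single power law $A\eta^\xi$. Here I would combine the boundary behaviour with the sign of $f'$: as $z\to0^+$, $1-(1-\gamma z)^M\sim M\gamma z$ so $g(z)=\log(\beta M\gamma)+\log z+O(z)$ and $f(z)\to1$, while as $z\to1^-$, $g(1^-)=\log[\beta(1-(1-\gamma)^M)]<0$ and $\log z\to0^-$ force $f(z)\to+\infty$. The numerator of $f'$, $h(z):=z\,g'(z)\log z-g(z)$, satisfies $h'(z)=\log z\cdot(z\,g'(z))'$; since $z\mapsto1-(1-\gamma z)^M$ is increasing and concave, its elasticity $z\,g'(z)$ is decreasing (the needed inequality $(1-(1-t/M)^M)(1-t)<t(1-t/M)^M$ reducing to Bernoulli's inequality $(1-t/M)^M>1-t$), so $(z\,g'(z))'<0$, hence $h'(z)>0$ and $h$ is strictly increasing on $(0,1)$. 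In the operating regime $B=M\eta_e^2\eta_r^2\lambda_m^2/4>1$ one has $h(0^+)=-\log B<0$ (equivalently $f$ dips below $1$) and $h(1^-)=-g(1^-)>0$, so monotonicity forces exactly one sign change, from $-$ to $+$: this pins down the unique interior minimiser $z$ solving the transcendental equation and gives $\xi=f(z)<1$, which also re-derives that the chain beats the TGW scaling. I would close by noting that the literal integer-$N$ achievable rate equals $A\eta^\xi$ on the countable set of ranges where the optimal $N$ is an integer and lies just below it in between, so the "exact" envelope is that of the continuous family of curves $\{R_N^{(0)}(L)\}$.
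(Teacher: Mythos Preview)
Your proposal is correct and reaches the same result as the paper, but by a somewhat different and arguably cleaner route. The paper's Appendix~\ref{app:ratelossenvelope3} applies the classical envelope method: writing $f(x,y,t)=(y/A)^t-\beta(1-(1-\gamma x^t)^M)$ with $t=1/N$, imposing $f=0$ and $\partial f/\partial t=0$, then differentiating along the envelope to obtain the ODE $dz/(z\log z)=dx/(x\log x)$ (there $z=y/A$), integrating to $z=x^\xi$, and finally substituting back to pin down $\xi$. You instead substitute $z=\eta^{1/N}$ at the outset and observe directly that $R_N^{(0)}=A\eta^{f(z)}$ with $f(z)=\log[\beta(1-(1-\gamma z)^M)]/\log z$ \emph{independent of $\eta$}, so the envelope reduces to the scalar minimisation $\xi=\min_{z\in(0,1)}f(z)$. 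This bypasses the ODE entirely and makes the power-law form immediate rather than something to be discovered.

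Your approach also buys something the paper leaves unproved: uniqueness of the root. The paper simply asserts that the transcendental equation has a unique solution in $(0,1)$; you show it, via $h'(z)=(zg'(z))'\log z$ and the Bernoulli inequality $(1-u)^M>1-Mu$ forcing $(zg'(z))'<0$, hence $h$ strictly increasing with the correct boundary signs when $B=\beta\gamma M>1$. One small phrasing caveat: ``increasing and concave $\Rightarrow$ decreasing elasticity'' is not a general fact (e.g.\ $\sqrt{z}$ has constant elasticity), so the Bernoulli computation is doing the real work there, not the concavity heuristic. Your closing remark about integer $N$ versus the continuous relaxation is also a point the paper glosses over.
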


\begin{proof}
See Appendix~\ref{app:ratelossenvelope3}.
\end{proof}

\begin{figure}
\centering
\includegraphics[width=\columnwidth]{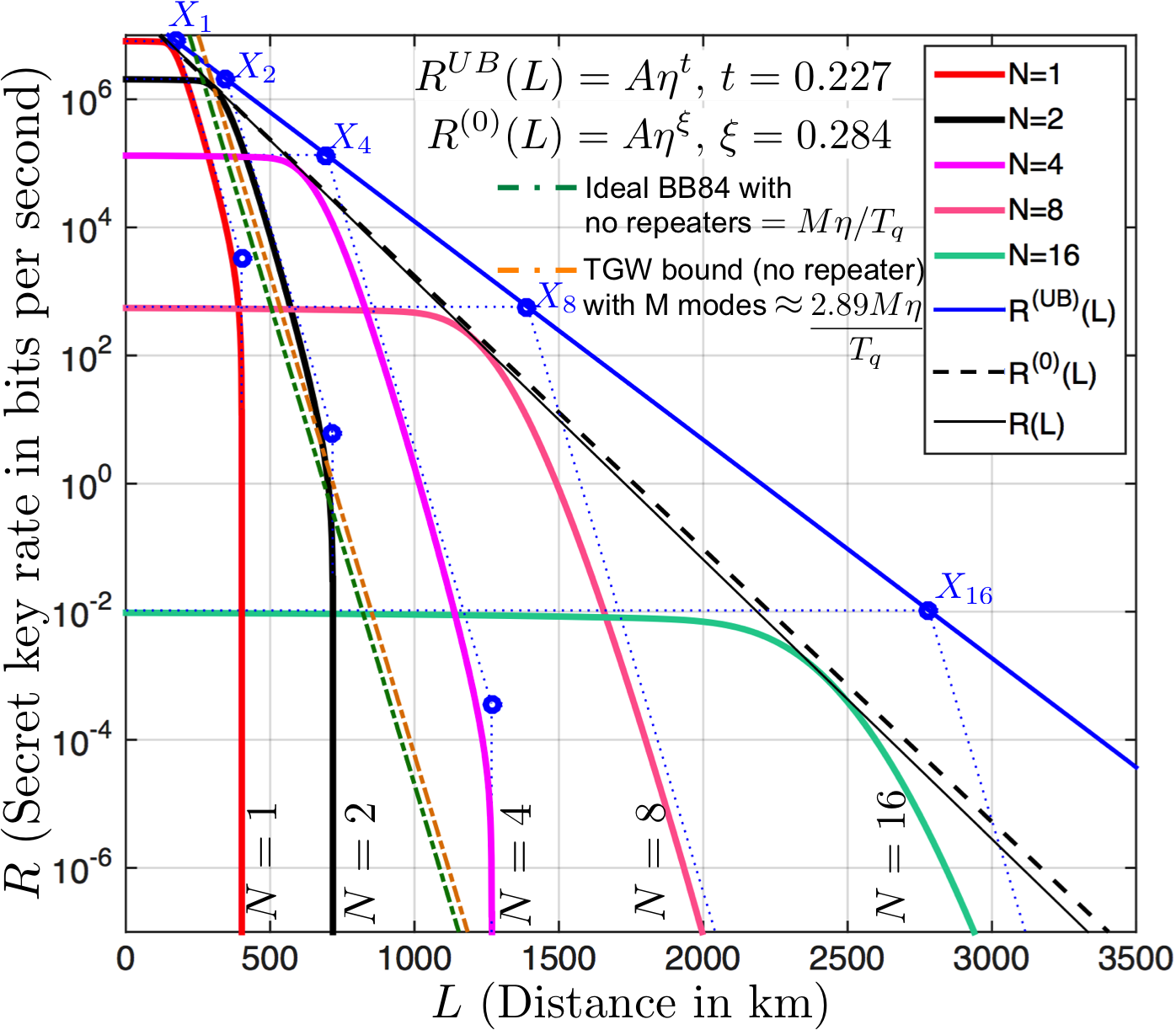}
\caption{(Color online) Secret key rates $R_N(L)$ as a function of range $L$ for $N = 1$, $2$, $4$, $8$ and $16$ elementary links. The rate-distance envelope is seen to outperform what is theoretically achievable by any repeater-less QKD protocol that uses the same time-slot length ($T_q$) and number of frequency channels ($M$), for $L \gtrsim 260$ km. The figure also shows the {\em exact} zero-dark-click-probability rate-distance envelope, $R^{(0)}(L) = A\eta^{\xi}$, where $\xi = 0.284$ (black dashed line). The envelope of the three-piece rate-distance upper bounds, $R^{\rm UB}(L) = A\eta^t$ is also shown (solid blue line), where $t = \log(\eta_r^2 \lambda_m^2/2)/\log(2/M\eta_e^2) = 0.227$. The parameters used are: $P_d = P_r = P_e = 3 \times 10^{-5}$, $\eta_d = \eta_r = \eta_e = 0.9$, $\lambda_m = 1$ dB (memory loss), $M=1000$ (frequency modes), $\alpha = 0.15$ dB/km (fiber loss), $T_q = 50$ ns.}
\label{fig:ratedistance_ideal}
\end{figure}
\begin{figure}
\centering
\includegraphics[width=\columnwidth]{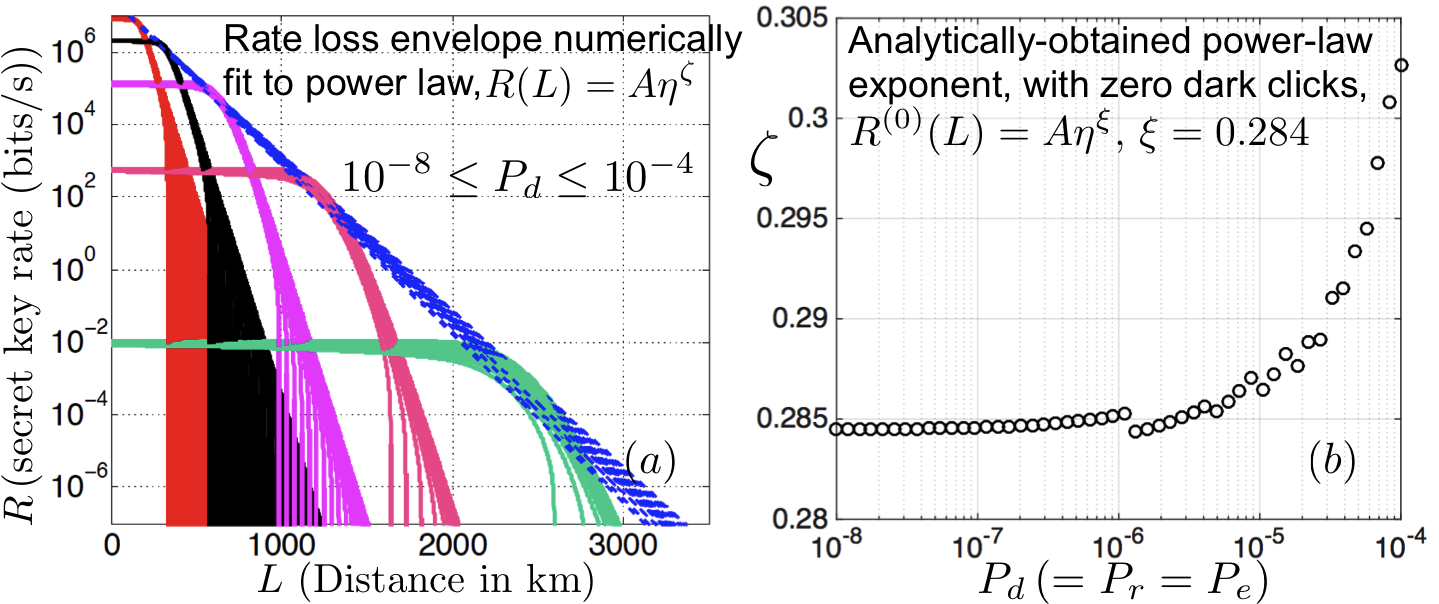}
\caption{(Color online) This figure captures the effect of detector dark click probability on the rate-loss scaling. It is seen that, for a given number of elementary links $N$, increasing the dark click probabilities drastically reduces the maximum range $L_{\rm max}$, however, the overall rate-distance envelope of the repeater chain remains largely unaffected over a significant, and practically feasible, range of detector dark click probability values (see the rate-loss-envelope traces as $P_d$ is varied from $10^{-8}$ to $10^{-4}$).}
\label{fig:exp_vs_Pd}
\end{figure}

In Fig.~\ref{fig:ratedistance_ideal}, we plot $R_N(L)$ as a function of $L$ for $N=2^n$ elementary links, with $n=0, 1, 2, 3, 4$. All the system parameters (listed in the figure caption) are kept the same for each plot. We also plot the three-piece upper bounds $R_N^{({\rm UB})}(L)$ (dotted blue lines), the envelope of those upper bounds $R^{({\rm UB})}(L) = A\eta^t$ (solid blue line), the rate-loss envelope $R^{(0)}(L) = A\eta^\xi$ with all detector dark-click probabilities set to zero (black dashed line), and the true (numerically-evaluated) rate-loss envelope $R(L)$ (black thin solid line). Fig.~\ref{fig:ratedistance_ideal} also shows the TGW bound corresponding to using all $M$ frequency modes (dash-dotted orange line) and the rate obtained by an ideal parallel BB84 implementation (perfect single-photon sources, and detectors) over all $M$ modes, $R = \eta M/T_q$ bits/s (dash-dotted green line). These two plots show that this repeater protocol's rate-loss performance fundamentally outperforms what is achievable without the assistance of quantum repeaters. Following are the main observations from Figs.~\ref{fig:ratedistance_ideal}, and~\ref{fig:exp_vs_Pd}:

{\em Effect of losses to the rate-loss envelope}---As noted in Theorem~\ref{thm:xiformula}, the exact power-law exponent $\xi$ of the true zero-dark-click-rate rate-loss envelope $R^{(0)}(L)$ has a complicated dependence on the system's loss parameters. On the other hand, the rate-loss envelope of the 3-piece upper bounds to $R_N(L)$ has a simple expression, $R^{(\rm UB)}(L) = A\eta^t$, with $A = \frac{\eta_d^2}{\eta_r^2 \lambda_m^2 T_q}$ and $t = \frac{\log\left(\eta_r^2 \lambda_m^2/2\right)}{\log\left(2/M\eta_e^2\right)}$, which makes its exponent $t$ useful to study the effects of various losses in the absence of dark clicks. Note that both the numerator and denominator in the expression for $t$ are negative for typical parameters. When the efficiency of the repeater node $\eta_r \lambda_m$ decreases, $t$ increases (thus making the rate-loss scaling worse; $t=1$ being the TGW limit, performance attainable without repeaters). Note that $(\eta_r \lambda_m)^2$ can be roughly interpreted as the probability of success (for the two memories and two detectors) at a repeater node. On the other hand, $M \eta_e^2$ can be roughly interpreted as the probability of success (for at least one of $M$ spectral modes and the two detectors) at the center of an elementary link. When $M \eta_e^2$ increases, $t$ decreases (thus making the rate-loss scaling better). Finally, note that the efficiency of Alice's and Bob's detectors $\eta_d$ does not affect the rate-loss scaling, but $\eta_d^2$ is an overall multiplier to the rate via the pre-factor $A$ (as expected, due to a $\eta_d^2$ multiplicative reduction in the number of usable time slots for key generation).

{\em Effect of dark click probability}---To examine the effect of detector dark-click probabilities to the secret key rate, we set $P_d = P_r = P_e$. The effect of $P_d$ to $R_N(L)$ is captured primarily by the maximum range $L_{\rm max}$, i.e., the third segment of $R_N^{(\rm UB)}(L)$ in Eq.~\eqref{eq:ratelossUB}. The envelope of the three-piece upper bounds, $R^{(\rm UB)}(L)$, is however completely unaffected by $P_d$, since the envelope is the locus of the corner-points $\left\{X_N\right\}$, $N = 1, 2, \ldots$, while being unaffected by the corner-points $\left\{Y_N\right\}$. We numerically fit the exact rate-distance envelope to the power law $R(L) = A\eta^\zeta$, and show that the exponent $\zeta$ remains largely unaffected over a significant (and practically feasible) range of $P_d$ (see Fig.~\ref{fig:exp_vs_Pd}(a)). In other words, $\zeta(P_d) \approx \xi$, the exact power-law exponent when $P_d = 0$, given in Eq.~\eqref{eq:xi}, over a significant range of $P_d$ (see Fig.~\ref{fig:exp_vs_Pd}(b)). The maximum range $L_{\rm max}$ achieved by a given number of elementary links $N$, however, drastically decreases with increasing $P_d$ (see Fig.~\ref{fig:exp_vs_Pd}(a)). In the regime that $P_d \ll 1$ and the deviations from ideal detection efficiency ($\epsilon_d \equiv 1-\eta_d$) and memory efficiency ($\epsilon_r \equiv 1-\eta_r \lambda_m$) are small, one can show that, to first order in $P_d, \epsilon_d, \epsilon_r$, we have $t_r \approx t_r/t_d \approx 1-4P_d$. This yields a simpler expression for the maximum range, $L_\mathrm{max} \approx  (20 N/ \alpha) \log_{10} \left[\left(\sqrt{2 (1+(1-2 Q_{\rm th})^{-1/N})}-2\right)/4 P_e \right]$, which shows that the first-order dependence of $L_{\rm max}$ to detector dark clicks is via a subtractive term, $-(20N/\alpha)\log_{10}(4P_e)$, which makes $L_{\rm max}$ to go to infinity as $P_e \to 0$, as expected.

{\em Optimal choice of the number of repeaters}---For a given Alice-to-Bob range $L$, it should be divided up into an optimum number of equal-length elementary links, in order to maximize the key rate. At a short range, using too many repeaters diminishes the end-to-end key rate, due to the $50\%$ heralding efficiencies of the linear-optic BSMs at the repeater nodes. Employing higher-efficiency BSMs (by injecting ancilla single photons for instance~\cite{Ewe14}) will increase $R_{\rm max}$ in Fig.~\ref{fig:rateloss_threepiece}, and will hence increase $N^*(L)$ at any given range $L$.

{\em Beating the TGW bound}---The secret key rate of any QKD protocol that does not use quantum repeaters is upper bounded by the TGW bound, $R^{(\rm UB)}_{\rm TGW}(\eta) = \log\left((1+\eta)/(1-\eta)\right)$ bits per mode~\cite{Tak13}, $\eta$ being the total channel transmittance. $R^{(\rm UB)}_{\rm TGW}(\eta) \approx 2.88\eta$, when $\eta \ll 1$ (high loss). The BB84 protocol---both the single-photon based and the weak coherent state implementation employing decoy states---as well as continuous-variable (CV) QKD with a Gaussian input modulation, attain key rates, $R \approx \eta$ bits/mode~\cite{Sca09}, thereby leaving little room for improvement by any other protocol. With $M$ orthogonal frequency channels available, and a qubit duration of $T_q$ seconds, a parallel implementation of an ideal QKD protocol on each of those frequency channels cannot exceed a key rate of $MR^{(\rm UB)}_{\rm TGW}(\eta)/T_q$ bits/s, a plot shown in Fig.~\ref{fig:ratedistance_ideal} (see dash-dotted orange line). The rate-loss function $R(L)$ attained by our repeater architecture distinctly outperforms this fundamental repeater-less rate-loss limit, as is also clear from the power law dependence $R(L) = A\eta^\zeta$ with $\zeta < 1$, whereas the TGW limit corresponds to $\zeta = 1$.

{\em Choice of the number of frequency modes}---An important part of the design of the repeater architecture is choosing $M$, the number of frequency modes that the elementary links use for multiplexing. In Fig.~\ref{fig:xi_vs_M}, we plot the power law exponent $\xi$ of the zero-dark-click rate-loss envelope $R^{(0)}(L) = A\eta^\xi$, as a function of $M$. In order to obtain a desired performance improvement over the TGW bound's scaling limit (i.e., $\xi = 1$), the lower the detector efficiencies $\eta_e$ and $\eta_r$, the higher is the level of frequency multiplexing needed. Note that $\xi$ does not depend upon the efficiency $\eta_d$ of Alice's and Bob's detectors (see Theorem~\ref{thm:xiformula}). Furthermore, as is intuitively clear, and apparent from comparing the $\xi(M)$ plots for $\eta_e =0.5, \eta_r = 0.9$ and $\eta_e =0.9, \eta_r = 0.5$, that it is more important for the repeater-node detectors to have high efficiency as compared to the detectors at the middle of the elementary links, since frequency multiplexing ``helps" the latter detectors. Next, we note that there is a minimum number of frequency modes $M_{\rm min}$ needed for this repeater protocol to be useful (i.e., barely beat the TGW bound's scaling limit), which increases as $\eta_e$ and $\eta_r$ decrease. An interesting, yet intuitive thing to note, is that the blue solid and the black dashed (as well as the red diamonds and the magenta dash-dotted) curves pairwise come close to one another as $M$ increases. This happens because when $M$ becomes sufficiently large, the probability of successful creation of an elementary link $P_s(1) = 1 - (1 - P_{s_0})^M\approx 1$, which has a weak dependence on $\eta_e$, and hence $\xi$ depends more strongly on the losses at the repeater nodes, i.e., $\eta_r$. The exact expression for the power-law exponent of $R^{(\rm UB)}(L)$ --- which is a lower bound to the true exponent $\xi$, i.e., $t = \frac{\log\left(\eta_r^2 \lambda_m^2/2\right)}{\log\left(2/M\eta_e^2\right)} = \frac{1 + 2\log_2(1/\eta_r \lambda_m)}{\log_2 M - \left[1 + 2\log_2 (1/\eta_e)\right]} < \xi$ --- provides a useful guideline for the choice of $M$, as well as illustrates the aforesaid effect (of the dependence of the power-law exponent being primarily on $\eta_r$ when $M$ is high enough).
\begin{figure}
\centering
\includegraphics[width=0.7\columnwidth]{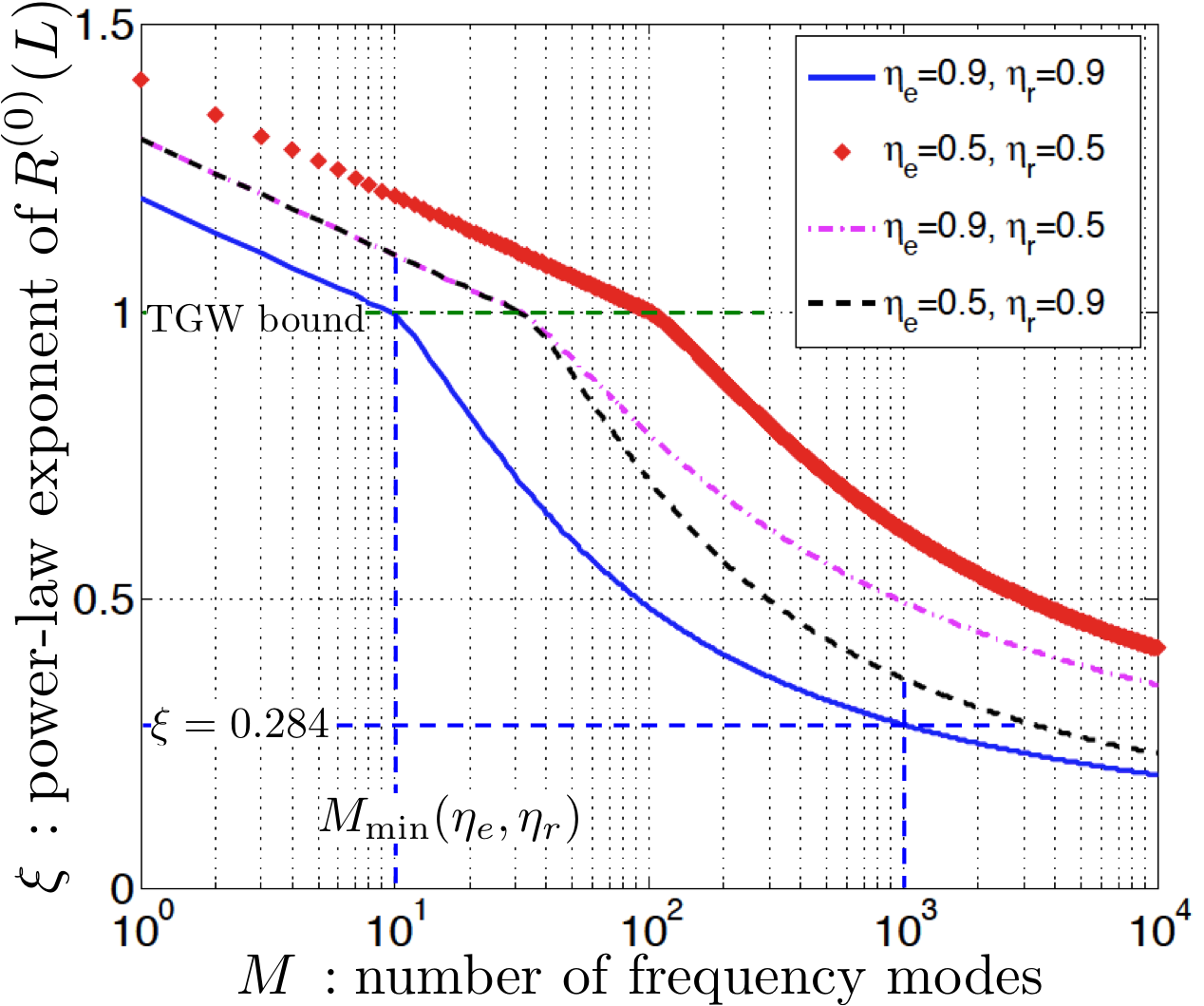}
\caption{(Color online) Here we plot the power law exponent $\xi$ of the zero-dark-click rate-loss envelope $R^{(0)}(L) = A\eta^\xi$, as a function of the number of frequency modes $M$. In order to obtain a desired performance improvement over the TGW rate-loss scaling ($\xi = 1$), lower are the detector efficiencies $\eta_e$ and $\eta_r$, higher is the level of frequency multiplexing needed.}
\label{fig:xi_vs_M}
\end{figure}

\subsection{Entanglement distillation rates}\label{sec:other rates}

The actual end to end shared quantum state after successfully connecting $2^{i-1}$ elementary links is given by (see Appendix~\ref{app:QBER_state} for proof):
\begin{eqnarray}
\rho_i &=& \frac{1}{s_i}\left[a_i|M^+\rangle\langle M^+| + b_i|M^-\rangle\langle M^-| + c_i|\psi_0\rangle\langle\psi_0| \right.\nonumber\\
&&\left. +\; d_i|\psi_1\rangle\langle\psi_1| + d_i|\psi_2\rangle\langle\psi_2| + c_i|\psi_3\rangle\langle\psi_3|\right],\label{eq:stateexpression_maintext}
\end{eqnarray}
where $|\psi_0\rangle = |01,01\rangle$, $|\psi_1\rangle = |01,10\rangle$, $|\psi_2\rangle = |10,01\rangle$, $|\psi_3\rangle = |10,10\rangle$, $|M^{\pm}\rangle = \left[|\psi_2\rangle \pm |\psi_1\rangle\right]/\sqrt{2}$, $s_i = a_i + b_i + 2(c_i + d_i)$, and the coefficients given as:
\begin{eqnarray}
a_i &=& \frac{1}{2}\left[1 + \left(\frac{a-b}{a+b}\right)^{i-1}\left(\frac{a_e-b_e}{a_e+b_e}\right)\right]z_i, \nonumber\\
b_i &=& \frac{1}{2}\left[1 - \left(\frac{a-b}{a+b}\right)^{i-1}\left(\frac{a_e-b_e}{a_e+b_e}\right)\right]z_i, \nonumber\\
c_i &=& \frac{s_i}{4}\left[1 - \frac{z_i}{s_i(1-2w_r)} \left[(1-2w_r)(1-2w_1)\right]^{2^{i-1}}\right], \nonumber\\
d_i &=& \frac{s_i}{4} - \frac{z_i}{2}\left[1 - \frac{1}{2(1-2w_r)}\left[(1-2w_r)(1-2w_1)\right]^{2^{i-1}}\right],\nonumber
\end{eqnarray}
with $w_1 = c_e/(a_e+b_e)$, $w_r = c/(a+b)$, $s_1 = a_e+b_e+2c_e$, $s_i = s = a+b+2c$, $2 \le i \le n+1$, and $z_i$ given by,
\begin{equation}
z_i = \left(\frac{s^2}{a+b}\right)\left(\frac{1}{(1+2w_1)(1+2w_r)}\right)^{2^{i-1}}, \, i \ge 2,
\end{equation}
with $z_1 = a_e + b_e$. The expressions for $a_i$, $b_i$, $c_i$, and $d_i$ reduce to $a_e$, $b_e$, $c_e$, and $0$, respectively, for $i=1$.

\begin{figure}
\centering
\includegraphics[width=0.9\columnwidth]{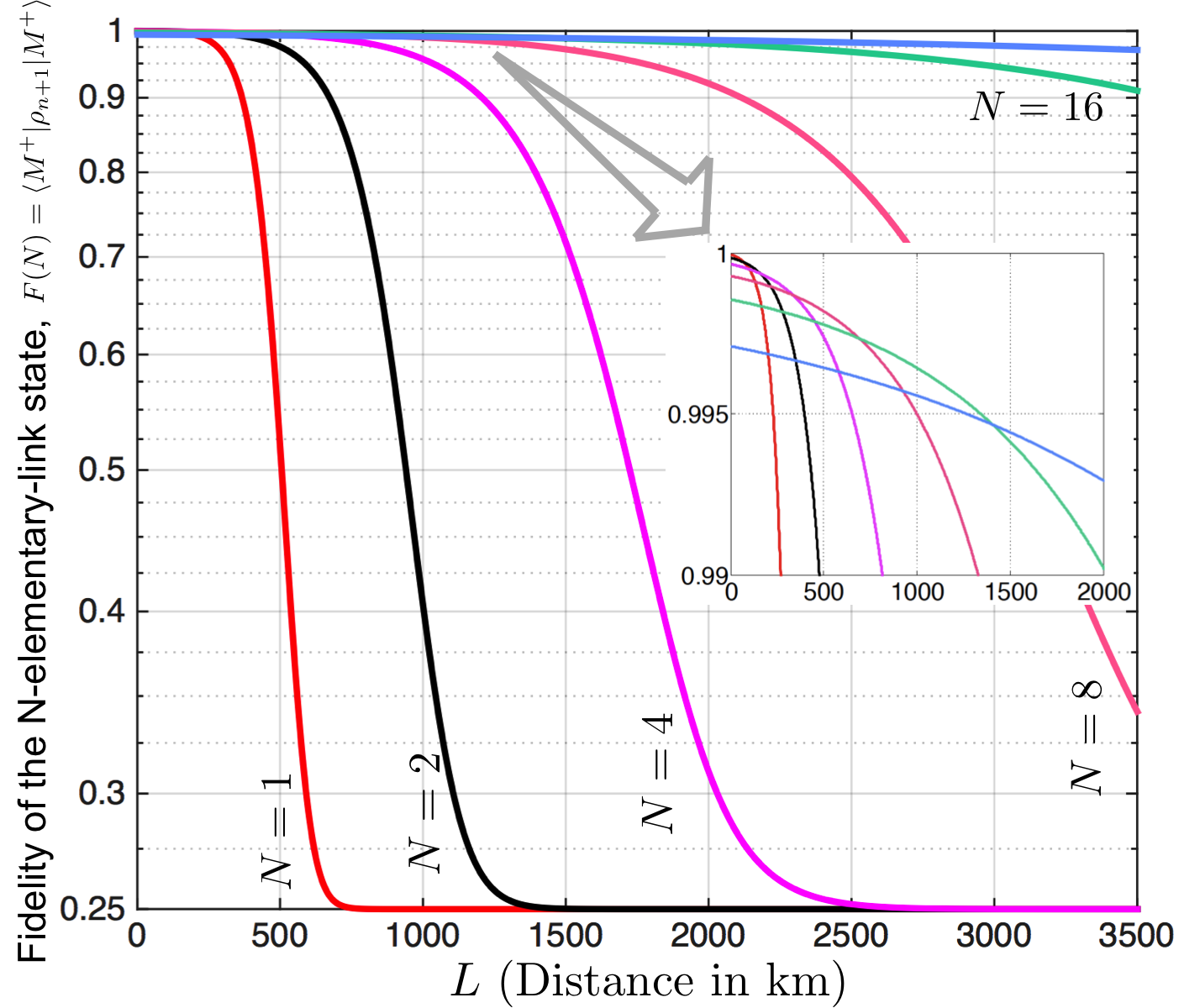}
\caption{(Color online) Fidelity of the $2^n$-link state, $\rho_{n+1}$ for $2^n = N = 1, 2, 4, 8, 16, 32$, with respect to the pure EPR state $|M^+\rangle$. We used $P_e=P_r=P_d = 3 \times 10^{-5}$, $\eta_e = \eta_r = \eta_d = 0.9$, $M = 1000$, $\lambda_m \equiv 1$ dB, and $\alpha \equiv 0.15$ dB/km.}
\label{fig:fidelity}
\end{figure}
The fidelity of the $N=2^n$ elementary link state $\rho_{n+1} \equiv \rho^{AB}(N)$ with respect to $|M^+\rangle$, ${\langle M^+ |\rho_i |M^+\rangle}$,
\begin{equation}
F_N(L) = {(a_{n+1}+d_{n+1})/s}.
\end{equation}
In Fig.~\ref{fig:fidelity}, we plot $F_N(L)$ as a function of the range $L$ for $N=1, 2, 4, \ldots, 32$ elementary link concatenations. Note that the plots show the fidelity of the actual heralded state (the probability $P_{\rm succ}$ of generating $\rho^{AB}_{n+1}$ successfully is not being accounted for). It is seen that the maximum range $L_{\rm max}$ for the secret-key generation rate $R_N(L)$ roughly corresponds to a state fidelity of $F_N(L) \approx 0.86$ for all $N$.

If Alice and Bob have many copies of the state $\rho^{AB}$, with no restriction on their actual quantum measurements and post-processing, and only using one-way classical communication over the public channel, the rate at which they can generate shared entanglement $E_D(\rho^{AB})$---measured in ebits (clean EPR pairs) per copy of $\rho^{AB}$ initially shared---is lower bounded by the coherent information $I(A \rangle B) = H(B) - H(AB)$, also known as the {\em hashing bound}~\cite{Dev04}. The hashing bound for the $N$-link shared rate $\rho^{AB}(N)$ can be evaluated to yield:
\begin{equation}
I_N(A \rangle B) = 1-H\left(\frac{c_{n+1}}{s},\frac{c_{n+1}}{s},\frac{a_{n+1}+d_{n+1}}{s},\frac{b_{n+1}+d_{n+1}}{s}\right) \nonumber,
\end{equation}
where $H(\cdot)$ is the Shannon entropy function. Since $\rho^{AB}(N)$ is heralded with probability $P_{\rm succ}$, and since each qubit occupies $T_q$ seconds, the achievable entanglement-distillation rate is given by:
\begin{equation}
E_N(L) = P_{\rm succ} I_N(A \rangle B)/T_q,
\end{equation}
which is plotted in Fig.~\ref{fig:entanglement} for $N=1,2,\ldots, 16$. It is instructive to compare this with the expression for the secret-key-generation rate:
\begin{equation}
R_N(L) = P_1 P_{\rm succ} R_2(Q_{n+1})/2T_q,
\end{equation}
where $R_2(Q_{n+1}) = 1 - 2H(Q_{n+1}, 1-Q_{n+1})$. When $P_d=P_r=P_e=0$ (all detector dark click rates are zero), $a_i = a$, and $b_i = c_i = d_i = 0$, and therefore $\rho_i = |M^+\rangle \langle M^+ |$ for all $1 \le i \le n+1$. Thus the QBERs, $Q_i = 0$, resulting in $R_2(Q_{n+1})=1$, and $I_N(A \rangle B) = 1$. Therefore, $E_N(L)$ and $R_N(L)$ differ only by a factor of $P_1/2 = \eta_d^2/2$, as intuitively expected. Clearly, the same is true for the zero-dark-click rate-distance envelopes, $E^{(0)}(L)$ and $R^{(0)}(L)$, i.e., $E^{(0)}(L) = (2/\eta_r^2\lambda_m^2T_q)\,\eta^\xi$, where $\xi$ is given by Eq.~\eqref{eq:xi}. Similar to the secret-key-generation rates, when the dark click probabilities are non-zero (however small), there is a finite maximum range for entanglement distillation with $N$ links, but the rate-loss envelope $E(L)$ is only slightly affected. In Fig.~\ref{fig:entanglement}, we plot $E_N(L)$ for $N = 1, 2, \ldots, 16$ for $P_d=P_e=P_r = 3 \times 10^{-5}$, along with the zero-dark-click envelope $E^{(0)}(L)$, showing that the rate-distance envelope is practically the same for this dark click level.

The maximum range for secret-key generation results from the condition $R_2(Q_{n+1}) = 0$, which gives the expression for $L_{\rm max}^{\rm QKD}$ given in Eq.~\eqref{eq:Lmax}. The maximum range for entanglement distillation derives from the condition $I_N(A \rangle B) = 0$, i.e., $H\left(\frac{c_{n+1}}{s},\frac{c_{n+1}}{s},\frac{a_{n+1}+d_{n+1}}{s},\frac{b_{n+1}+d_{n+1}}{s}\right) = 1$. Unlike the key-generation rate, which depends cleanly on one parameter: the QBER, the entanglement distillation rate depends in a more complicated fashion on the shared state $\rho^{AB}_{n+1}$, through the parameters $a_{n+1}, b_{n+1}, c_{n+1}, d_{n+1}$, and hence an analytic formula for the maximum range $L_{\rm max}^{\rm ent-dist}$ is not possible to obtain. The maximum ranges for entanglement distillation, evaluated numerically, work out to be somewhat higher compared with the those for secret-key generation, for identical system parameters. For the parameters considered in Figs.~\ref{fig:ratedistance_ideal} and~\ref{fig:entanglement}, for $N = 1, 2, 4, 8, 16$, we get (rounded to a km):
\begin{eqnarray}
L_{\rm max}^{\rm QKD} &=& [401, \quad 716, \quad 1267, \quad 2208, \quad 3772], \\
L_{\rm max}^{\rm ent-dist} &=& [411, \quad 761, \quad 1389, \quad 2488, \quad 4367].
\end{eqnarray}

In evaluating the above range numbers for the QKD case, we assumed zero dark click rates for the Alice-Bob detectors (i.e., $P_d = 0$, $P_e = P_r = 3 \times 10^{-5}$), in order for an unbiased comparison, i.e., for both cases above, Alice and Bob start with many copies of the noisy EPR state $\rho_{n+1}^{AB}$. It is instructive to note that an achievable shared entanglement generation rate is automatically an achievable secret-key generation rate. Therefore, our results show that the QKD protocol we analyzed is (ever so slightly) suboptimal, in the sense that if Alice and Bob held many copies of the noisy EPR pairs $\rho_{n+1}^{AB}$ in perfect quantum memories, and applied an ideal entanglement distillation protocol~\cite{Dev04}, and then converted those EPR pairs to shared secret key bits, the resulting secret-key rates, and the maximum ranges would be slightly higher compared to what we got. It is remarkable however how close to that ultimate limit a QKD protocol even with a simple measurement and post-processing can get.

\begin{figure}[ht]
\centering
\includegraphics[width=0.9\columnwidth]{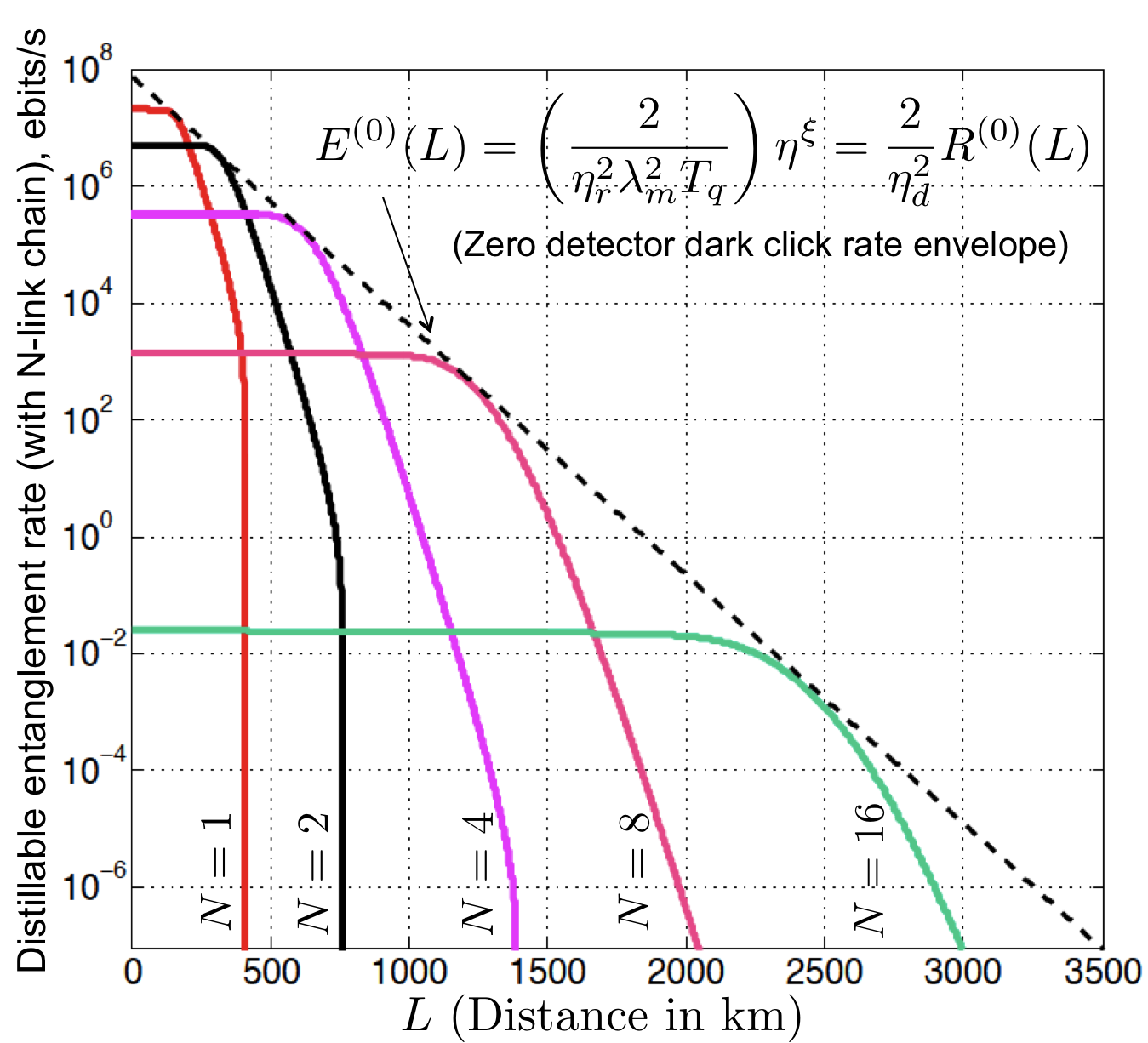}
\caption{(Color online) Achievable entanglement distillation rate (measured in pure EPR pairs per second) using an $N$-link repeater chain, for $N = 1, 2, 4, 8, 16, 32$. We used $P_e=P_r=P_d = 3 \times 10^{-5}$, $\eta_e = \eta_r = \eta_d = 0.9$, $M = 1000$, $\lambda_m \equiv 1$ dB, $\alpha \equiv 0.15$ dB/km, and $T_q = 50$ ns long qubits.}
\label{fig:entanglement}
\end{figure}

\section{The effect of two-pair emissions}\label{sec:numerics}

The entire theoretical analysis in Section~\ref{sec:analysis}, as well as all the calculations in the Appendices, assume that the entangled photon pair sources have a zero probability of multi-pair emission, which is usually not the case in practice, particularly when one employs spontaneous parametric downconversion (SPDC) to generate entangled pairs. The purpose of this section is to extend our analysis to sources whose two-pair probability, $p(2) > 0$. Even though one could in principle attempt a fully analytical calculation of the entangled state propagation through the repeater chain (along the lines of our derivations in Appendix~\ref{app:elem}), such a calculation would be extremely tedious. We instead set up an {\em exact} numerical calculation of the quantum states of the elementary link and the states resulting from successful BSM connections, where we evolve the quantum states in the Fock basis, and use the sparse matrix toolbox of MATLAB to create time-efficient subroutines for beamsplitters, partial trace operations, and photon-number-resolving detectors. We continue to assume however that all detectors in the system have single-photon resolution.

We use this numerical code to evaluate $R_N(L)$ for a particular form of source with $p(2) > 0$ (see Eq.~\eqref{eq:stateform}). We find that for a given $p(2)$, up to a certain maximum number of elementary links, the rate-distance performance remains almost identical to what is attained by an ideal ($p(2)=0$) source (i.e., that evaluated in Section~\ref{sec:analysis}). However, the rate becomes close-to-zero at any range, when $N \ge N_{\rm max}(p(2))$ (see Fig.~\ref{fig:p_2_non_zero}). Our numerical calculations also show that the scaling law in Eq.~\eqref{eq:errorpropagation} for error-propagation through the repeater chain continues to hold---with an appropriate $p(2)-dependent$ modification to the pre-factor $(t_r/t_d)$---even for non-ideal sources (see Fig.~\ref{fig:Q_vs_p2}).

This Section is organized as follows. In subsection~\ref{sec:ratelossp2}, we will show the empirical effect of $p(2)$ on the rate-loss behavior of the repeater architecture. In subsection~\ref{sec:numericalmodel}, we will develop a phenomenological model for QBER scaling  (an extension of Eq.~\eqref{eq:errorpropagation} when $p(2)>0$), which we will use in turn to develop an approximate model to understand the functional form of $N_{\rm max}(p(2))$.

\subsection{Rate-loss behavior with non-ideal sources}\label{sec:ratelossp2}

\begin{figure*}
\centering
\includegraphics[width=\textwidth]{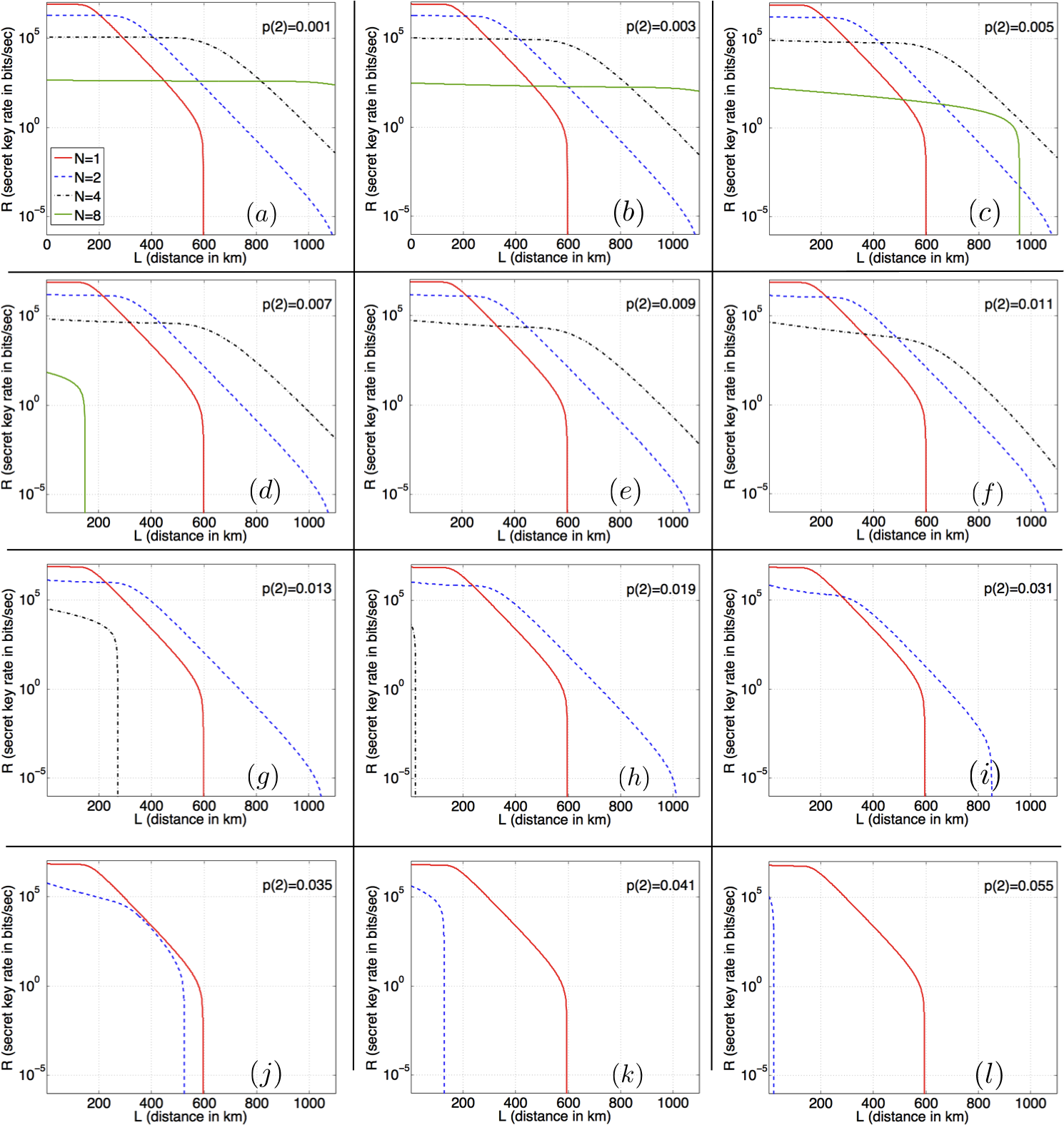}
\caption{(Color online) Secret key rate $R$ (bits/s) vs.~distance $L$ (km), evaluated for $n=0,1,2,3$ ($N=1,2,4,8$ elementary links), for sources with two-pair emission probability $p(2)$ ranging from $0.001$ to $0.055$. At any given value of $p(2)$, there is a certain number of elementary links up until which the rate-loss envelope achieved by the repeater architecture remains almost identical to what is achieved by a $p(2)=0$ entanglement source. However, as soon as $N \ge N_{\rm max}(p(2))$, the rate goes to zero an any range. The parameter values used are: $P_d = P_r = P_e = 10^{-6}$, $\eta_d = \eta_r = \eta_e = 0.9$, $\lambda_m = 1$ dB (memory loss), $M=1000$ (frequency modes), $\alpha = 0.15$ dB/km (fiber loss), and $T_q = 50$ ns. The plots show that, for these parameters, for $p(2)=0.035$, it is best to have a single elementary link between Alice and Bob over the entire range. The rate-loss tradeoff for 2 elementary links is worse at all range values. Similarly, at $p(2)=0.013$, using $4$ elementary links does not yield a better rate compared to what is attained with $2$ elementary links, at all range values. Interestingly however, the rate-distance plots come crashing down from higher $N$ values to lower (number of elementary links) {\em one at a time} as $p(2)$ is increased, while the rate-distance tradeoffs for the lower $N$ values stay almost at their $p(2)=0$ levels. Note that the $N=1$ plot has no perceivable change from $p(2)=0.001$ to $p(2)=0.055$. Similarly, the $N=2$ plot has no perceivable change from $p(2)=0.001$ to $p(2)=0.019$.}
\label{fig:p_2_non_zero}
\end{figure*}
In Fig.~\ref{fig:p_2_non_zero}, we plot the secret key rates $R_N(L)$ for $N=1, 2, 4, 8$ elementary links ($n=0, 1, 2, 3$) with all parameters held constant, $p(1) = 0.9$ and several choices of $p(2)$ ranging from $0.001$ to $0.015$. We model the non-ideal entanglement source as generating the state~\cite{Sli03},
\begin{eqnarray}\label{eq:stateform}
|\psi\rangle &=& \sqrt{1-p(1)-p(2)}\,|00,00\rangle + \sqrt{p(1)}\,|M^+\rangle \nonumber \\
&&+ \sqrt{p(2)/3}\,\left(|20,02\rangle - |11,11\rangle + |02,20\rangle\right),
\end{eqnarray}
where $|M^+\rangle = [|10,01\rangle + |01,10\rangle]/\sqrt{2}$. This particular form of the entangled photon-pair state, and in particular the form of the $4$-photon term, is motivated by parametric down-conversion sources~\cite{PRL187902}. If $p(2)$ is small, the exact form of the two-pair term does not seem to affect the results, notwithstanding that our simulation is easily able to take into account any particular form of the two-pair term, depending upon the physical model of the actual source of entanglement. Finally, we assume that the higher-order multi-pair emission terms ($3$-pair or higher) have significantly lower probabilities compared to the two-pair term, and that $p(2)$ effectively captures the effect of multi-pair emissions to the secret-key rates. One other difference in the rate-loss behavior compared with the $p(2)=0$ theoretical analysis in Section~\ref{sec:analysis} is that the QBER can be now non-zero even when the detector dark click rates are zero. This is because errors in the sifted bit may now be caused by the multi-pair events generated by the entanglement sources.

At a given $p(2)$, there is a maximum number of elementary links up until which the rate-loss envelope achieved by the repeater architecture remains almost identical to what is achieved by a $p(2)=0$ entanglement source. When $N \ge N_{\rm max}(p(2))$, the rate $R(L) = 0$, $\forall L \ge 0$. Seen differently, the rate-distance plots in Fig.~\ref{fig:p_2_non_zero} come crashing down from higher to lower values of $N$ values (number of elementary links) {\em one at a time} as $p(2)$ is increased from $0$ (with $p(1)=0.9$ held constant), while the rate-distance plots for the lower $N$ values stay unaffected, i.e., almost at its $p(2)=0$ level, until $p(2)$ becomes high enough to make the next lower value of $N$ unsustainable. As an example, the $N=1$ plot has no perceivable change from $p(2)=0.001$ to $p(2)=0.055$. Similarly, the $N=2$ plot has no perceivable change from $p(2)=0.001$ to $p(2)=0.019$.

\subsection{Phenomenological model for QBER scaling and maximum usable number of elementary links}\label{sec:numericalmodel}
\begin{figure}
\centering
\includegraphics[width=\columnwidth]{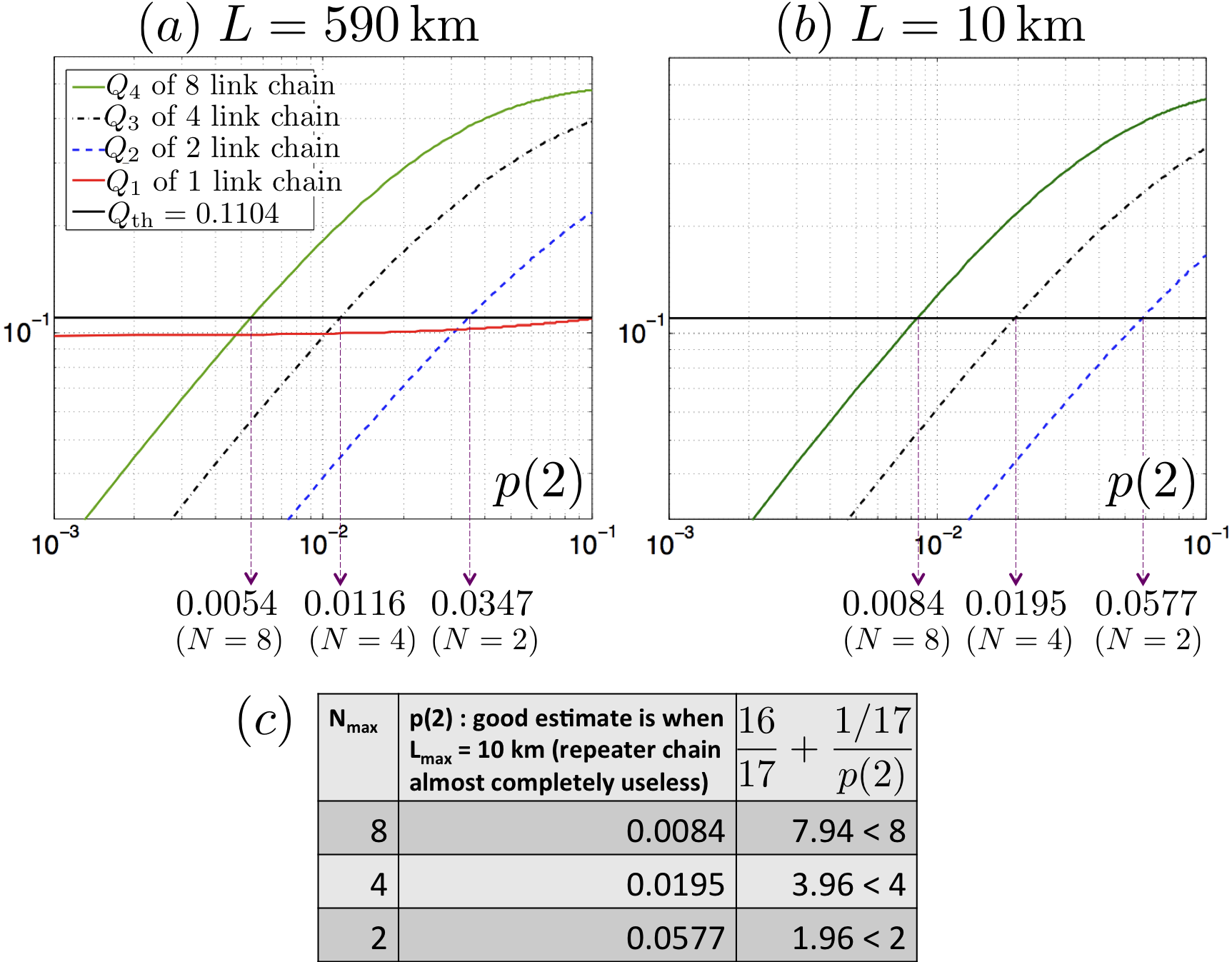}
\caption{(Color online) The purpose of this figure is to gauge the $p(2)$ values where a certain length $N$ (and higher) of the repeater chain becomes ineffective, as depicted in Fig.~\ref{fig:p_2_non_zero}. We choose two fixed {\em maximum} range values: one a number close to zero ($10$ km) to assess $N_{\rm max}(p(2))$, and the other a little below the range of a single elementary link ($590$ km). We divide an overall range $L$ of (a) $10$ km and (b) $590$ km, into $N = 1, 2, 4$ and $8$ elementary links, and plot the end-to-end QBER for each case, as a function of the two-pair-emission probability $p(2)$. The black horizontal line corresponds to $Q_{\rm th} = 0.1104$. The secret key rate goes to zero, when the end to end QBER exceeds $Q_{\rm th}$. It is instructive to tally the $p(2)$ values where $Q_{n+1}$ crosses the $Q_{\rm th}$ line for $n=0, 1, 2, 3$, with the plots in Fig.~\ref{fig:p_2_non_zero}. We assume, $\eta_e=\eta_r=\eta_d = 0.9$, $P_e = P_r = P_d = 10^{-6}$, $\alpha = 0.15$dB/km, and $\lambda_m = 1$dB.}
\label{fig:Q_vs_p2_new}
\end{figure}

Before we develop a phenomenological model for $N_{\rm max}(p(2))$, let us get a feel for the dependence by extracting estimates of $N_{\rm max}(p(2))$ from the rate-loss plots shown in Fig.~\ref{fig:p_2_non_zero}. A good estimate can be obtained by assessing the value of $p(2)$ when an $N$-link concatenation becomes next to useless, one way to quantify which is when the maximum range for the $N$-link concatenation becomes less than $10$ km. Another way to quantify $N_{\rm max}$ would be to use the value of $p(2)$ for which the $N$-link concatenation's maximum range falls below the maximum range obtained with $N=1$ (that range threshold could be used as $590$ km for the parameters used in Fig.~\ref{fig:p_2_non_zero}, since the maximum range with $N=1$ is $600$ km).

In Fig.~\ref{fig:Q_vs_p2_new}(a) and (b), we plot the end-to-end QBER when a fixed overall range $L$ (of $590$ km, and $10$ km, respectively) is divided up into $1$, $2$, $4$ or $8$ elementary links. The color convention is the same as the one used for the secret key rate plots in Fig.~\ref{fig:p_2_non_zero}. The black horizontal lines correspond to $Q_{\rm th} = 0.1104$. The secret key rate goes to zero when the end to end QBER exceeds $Q_{\rm th}$. It is instructive to tally the $p(2)$ values where $Q_{n+1}$ crosses the $Q_{\rm th}$ line for $n=0, 1, 2, 3$, with the plots in Fig.~\ref{fig:p_2_non_zero}. The $p(2)$ value when the $8$-elementary-link chain's maximum range is $590$ km, is $0.0054$, and that when it is $10$ km is $0.0084$, both of which match well with the plots (c) and (d) of Fig.~\ref{fig:p_2_non_zero}. Similarly, the $p(2)$ value when the $4$-elementary-link chain's maximum range is $590$ km, is $0.0116$, and that when it is $10$ km is $0.0195$, which match well with plot (g) of Fig.~\ref{fig:p_2_non_zero}. Finally, the $p(2)$ value when the $2$-elementary-link chain's maximum range is $590$ km, is $0.0347$, and that when it is $10$ km is $0.0577$, which match well with plots (j), (k) and (l) of Fig.~\ref{fig:p_2_non_zero}. In the table in Fig.~\ref{fig:Q_vs_p2_new}(c), we record the values of $p(2)$, using the $10$ km estimate rule, corresponding to $N_{\max} = 8, 4$ and $2$. Our goal for the remainder of this section, will be to extract a phenomenological model for $N_{\max}(p(2))$---by quantifying how the QBER propagation law in Eq.~\eqref{eq:errorpropagation} must be modified when $p(2)>0$---that closely matches the estimates in Fig.~\ref{fig:Q_vs_p2_new}(c).

\begin{figure}
\centering
\includegraphics[width=\columnwidth]{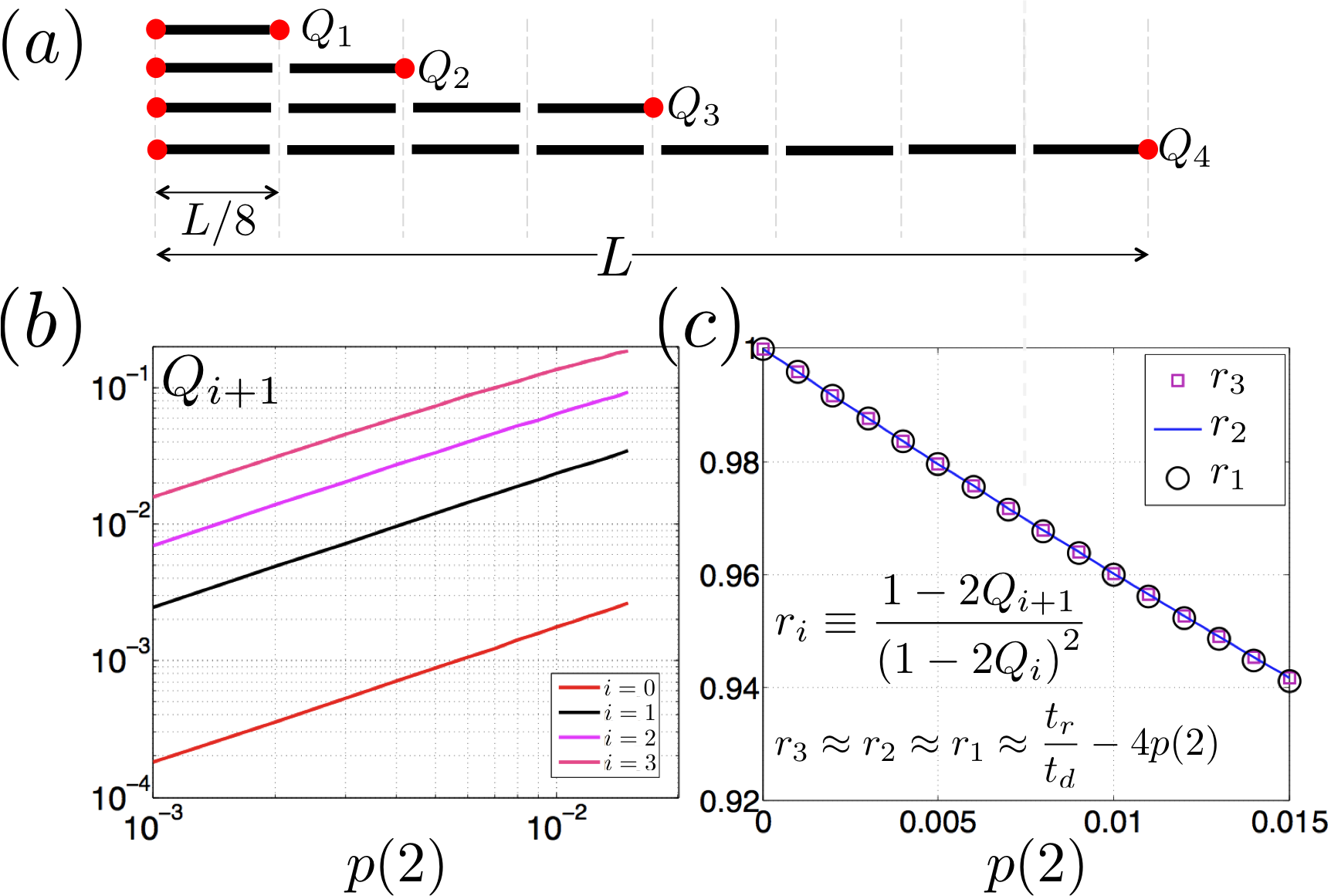}
\caption{(Color online) (a) Schematic showing the chain with $1$, $2$, $4$, and $8$ links. $Q_i$ is the QBER if Alice and Bob were to make measurements across a $2^i$-link chain. (b) $Q_{i+1}$ vs. two-pair-emission probability $p(2)$, for different numbers of swaps ($i=0, 1, 2, 3$) at a fixed distance of $L=50$ km (a short range is chosen to ensure that for all four cases the elementary-link quality is very good for the entire $p(2)$ range we consider, so that we cleanly capture the effect of $p(2)$ on the QBER). (c) Here we plot the ratio $r_i = (1-2Q_{i+1})/(1-2Q_i)^2$ as a function of $p(2)$, which shows that the ratio $r_i$ remains unchanged over $i=1, 2, 3$, hence suggesting that the QBER scaling law in Eq.~\eqref{eq:errorpropagation} holds even when $p(2) > 0$. For all plots, we assume, $\eta_e=\eta_r=\eta_d = 0.9$, $P_e = P_r = P_d = 10^{-6}$, $M=1000$, $\alpha = 0.15$dB/km, $\lambda_m = 1$dB, and $T_q = 50$ns.}
\label{fig:Q_vs_p2}
\end{figure}
{\em QBER propagation}---In Fig.~\ref{fig:Q_vs_p2}(a), we depict our $L$-km-range, $N=2^n$ elementary-link construction, for $n=3$. The Alice-to-Bob range $L$ is divided up into $N = 2^n$ elementary links, and $Q_i$ is defined as the error probability {\em if} Alice and Bob were to measure the state $\rho_i$ (which is formed after successfully connecting $2^{i-1}$ elementary links, each of length $L/N$), $1 \le i \le n+1$. In Fig.~\ref{fig:Q_vs_p2}(b), we plot $Q_i$ as a function of $p(2)$, when $p(1) = 0.9$ is held fixed, with $p(0) = 1-p(1)-p(2)$, for $N=2^n$, with $n=3$. At each value of $i \in \left\{0, 1, 2, 3\right\}$, the respective QBER $Q_{i+1}$ seems to grow almost linearly with $p(2)$ when $p(2)$ is small, for chosen system parameters as mentioned in the caption of Fig.~\ref{fig:Q_vs_p2}. In Fig.~\ref{fig:Q_vs_p2}(c), we plot the ratio, $C(p(2)) = (1-2Q_{i+1})/(1-2Q_i)^2$ for $i = 1, 2, 3$, as a function of $p(2)$. For the ideal source ($p(2)=0$), we proved that the QBER ratio $C(p(2)) = t_r/t_d$, which is independent of $i$; see Eq.~\eqref{eq:errorpropagation}. For the aforesaid loss and noise parameters, $t_r/t_d = 1 - \epsilon$, with $\epsilon = 1.39 \times 10^{-5}$. We see here numerically, that $C(p(2))$ is independent of $i$, {\em even for an imperfect source}, for any value of $p(2) \in [0, 0.055]$. The ratio has a good fit to the line, $C \approx (t_r/t_d) - 4p(2)$ for the above range of $p(2)$. The $p(2)$-dependence of $C$ deviates from linear as $p(2)$ becomes higher. This is quite interesting, as this gives us a way to predict the end-to-end QBER on long repeater chains by making a physical measurement on one noisy elementary link, if similar devices are used to construct each elementary link.

\begin{figure}
\centering
\includegraphics[width=\columnwidth]{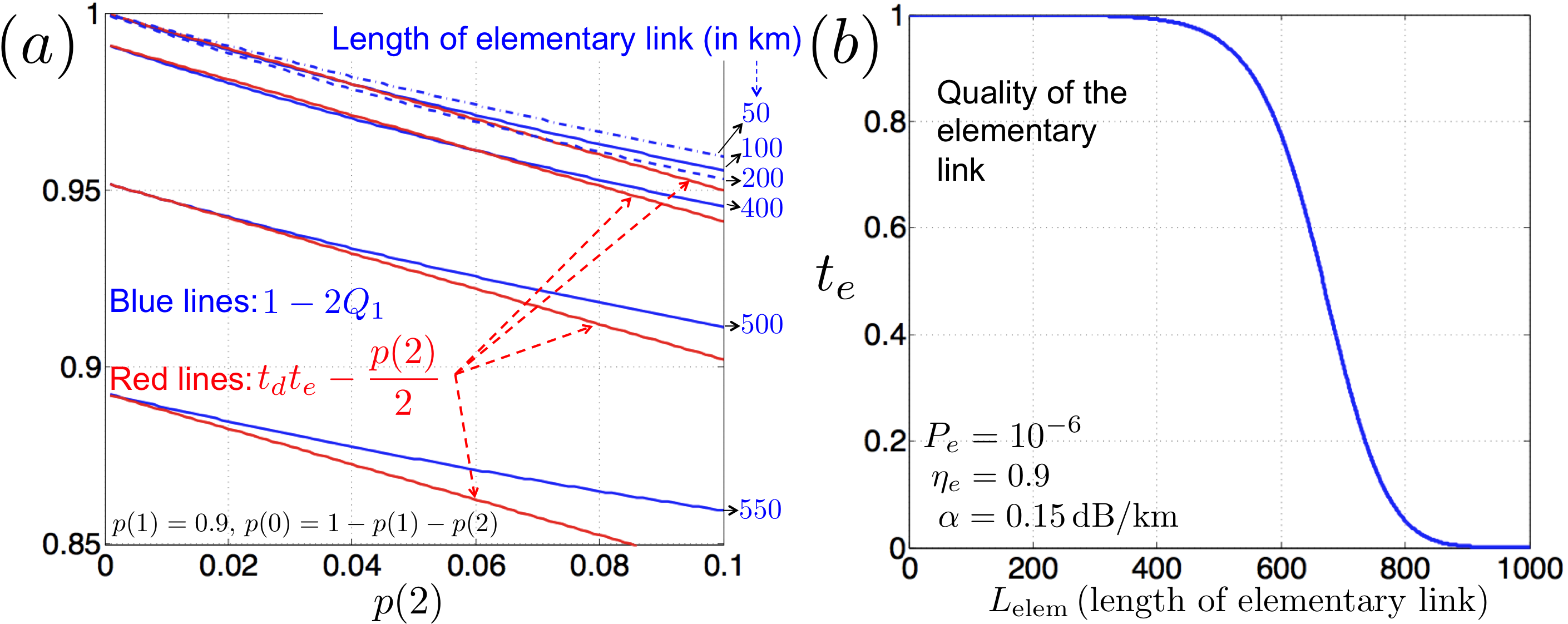}
\caption{(Color online) (a) Plot of $1- 2Q_1$, where $Q_1$ is the QBER of one elementary link (of range $L_{\rm elem}= L/N$, chosen in the range $50$ km to $550$ km), as a function of $p(2)$. It is seen that, $1 - 2Q_1 \gtrsim t_dt_e - \frac12{p(2)}$. (b) Plot of $t_e$, the `quality' of one elementary link, as a function of $L_{\rm elem}$, for $P_e = 10^{-6}$, $\eta_e = 0.9$, and $\alpha = 0.15$ dB/km. For $L_{\rm elem} < 400$ km, $t_e$ is seen to remain close to $1$.}
\label{fig:elemlinkquality}
\end{figure}
{\em QBER of one elementary link}---In Fig.~\ref{fig:elemlinkquality}(a), we plot $1- 2Q_1$, with $Q_1$ the QBER of one elementary link (of range $L_{\rm elem}= L/N$, chosen in the range $50$ km to $550$ km), as a function of $p(2)$. It is seen that,
\begin{equation}
1 - 2Q_1 \gtrsim t_dt_e - \frac12{p(2)}.\label{eq:Q1}
\end{equation}
This linear approximation seems good for $L_{\rm elem} \lesssim 400$ km, and for $p(2) < 0.02$. We next put this together with the linear approximation of the constant in the QBER scaling law, i.e.,
\begin{equation}
1 - 2Q_{i+1} \gtrsim \left(\frac{t_r}{t_d} - 4p(2)\right)(1-2Q_i)^2, \,i \ge 1.\label{eq:Q2}
\end{equation}
Simplification of the recursion in Eq.~\eqref{eq:Q2} yields,
\begin{eqnarray}
1 - 2Q_i &\ge& \left(\frac{t_r}{t_d} - 4p(2)\right)^{2^0 + 2^1 + \ldots + 2^{i-2}}(1-2Q_1)^{2^{i-1}} \nonumber \\
&=&  \left(\frac{t_r}{t_d} - 4p(2)\right)^{2^{i-1} - 1}(1-2Q_1)^{2^{i-1}},
\end{eqnarray}
which combined with Eq.~\eqref{eq:Q1} yields
\begin{equation}
1-2Q_i \gtrsim \left(\frac{t_r}{t_d} - 4p(2)\right)^{2^{i-1} - 1}\,\left(t_dt_e - \frac{1}{2}p(2)\right)^{2^{i-1}}.
\end{equation}
Taking logarithms, rearranging the terms, and noting that each of the three terms $\log(1-2Q_i)$, $\log(t_r/t_d - 4p(2))$, and $\log(t_dt_e - \frac12 p(2))$ are negative, we get the following:
\begin{equation}
2^{i-1} \gtrsim \frac{\left| \log(1-2Q_i) + \log(t_r/t_d - 4p(2))\right|}{\left|\log(t_dt_e - \frac12{p(2)}) + \log(t_r/t_d - 4p(2))\right|}.\label{eq:Nmaxfullexp}
\end{equation}
Note now that $Q_i \equiv Q(N)$ is the QBER if Alice and Bob were to make an end-to-end measurement on $N = 2^{i-1}$ elementary links (see Fig.~\ref{fig:Q_vs_p2}(a)). Hence the condition on $2^{i-1}$ to be the {\em maximum} total number of elementary links (i.e., $N = N_{\max}$) for which a barely non-zero key rate can be obtained, is that $Q(N) = Q_{\rm th}$.

{\em Phenomenological model for $N_{\max}$}---Substituting $\log(1-2Q_i) = \log(1-2Q_{\rm th}) \approx -0.25$, $\log(1-x) \approx -x-x^2/2$, and $t_r = t_d = t_e = 1$ (in order to capture the $N_{\rm max}(p(2))$ dependence, and do so in the low-noise regime of the elementary links) in Eq.~\eqref{eq:Nmaxfullexp}, and ignoring the ${\mathcal O}(p(2)^2)$ terms, we obtain the following approximate lower estimate to $N_{\rm max}$,
\begin{equation}
N_{\rm max} \gtrsim (8/9) + \frac{1/18}{p(2)},
\label{eq:Nmaxp2}
\end{equation}
which is roughly a shifted inverse-proportional dependence in $p(2)$. The above interpretation of $N_{\rm max}$ is that it is the maximum number of length $L_{\rm elem}$ elementary links that can be connected before the concatenation becomes useless for QKD (while using $N < N_{\rm max}$ links is capable of attaining the $p(2)=0$ rate-distance function $R_N(L)$ derived in Section~\ref{sec:analysis}). The `quality' of the elementary link is captured by the parameter $t_e$---defined for the $p(2)=0$ analysis in Section~\ref{sec:analysis}---which is $1$ when the dark click probability of the detectors at the center of the elementary link, $P_e = 0$. In Fig.~\ref{fig:elemlinkquality}(b), we plot $t_e$ as a function of the length of the elementary link $L_{\rm elem} \equiv L/N$, for $P_e = 10^{-6}$, $\eta_e = 0.9$, and $\alpha = 0.15$ dB/km. For $L_{\rm elem} < 400$ km, $t_e$ is seen to remain close to $1$. This justifies substituting $t_e = 1$ in order to arrive at Eq.~\eqref{eq:Nmaxp2}. The table in Fig.~\ref{fig:Q_vs_p2_new}(c) shows that the $N_{\rm max}(p(2))$ lower estimate we obtained indeed matches pretty well with the exact values obtained numerically shown in Figs.~\ref{fig:Q_vs_p2_new}(a--b). We must note here, that we do not consider the effect of the number of modes $M$ on $N_{\rm max}$ (which we hold fixed for the above development).

\section{Conclusions}\label{sec:conclusions}
Long-distance entanglement distribution at high rates is of paramount importance to many quantum communication protocols, the realization of which requires building a network of quantum repeaters. Several quantum repeater protocols have been proposed~\cite{Bri98, Dua01, San11, Jia09, Sin13A, Lo13}, all of which use some source of entanglement, some form of quantum memories, and linear-optics-based Bell-state measurements. We analyzed the architecture proposed in~\cite{Sin13A}, which is a repeater protocol that has a superior classical communication overhead, and does not rely on purification of noisy shared entangled pairs~\cite{Kro14}. We believe that our analysis technique would carry over to other repeater architectures in a straightforward manner.

We exactly solved for the quantum state after connecting a given number of elementary links in a concatenated quantum-repeater chain that uses frequency multiplexing to create two-qubit four-photon elementary link states, and heralded linear-optic Bell-state measurements (BSM) at a pre-determined frequency across two qubit memories at repeater nodes. We exploited the fact that if we start with an ideal single-pair entanglement source, the post-selected state after a successful BSM remains in a subspace spanned by only single photon terms, and we recursively evaluated the end-to-end entangled state using a POVM to model lossy-noisy single-photon detectors. This calculation required us to exactly solve a variant of the {\em logistic map} from chaos theory. Using our expression for the quantum state, we determined quantities such as the success probability of entanglement swapping at any given swap level, the error rate of the raw bits obtained by Alice and Bob in a QKD application if they were to measure this state in the same bases, and the sifting probability. One can find any other quantity of interest from the quantum state, such as the entanglement of formation or the fidelity with a maximally entangled state (see Appendix~\ref{app:QBER} for the exact expression of fidelity of the $N$-elementary link end-to-end state). Our analysis took into account all major imperfections of the detectors (such as sub-unity detection efficiencies, and dark click probabilities) and the channel (such as transmissivity and thermal noise, where the latter can be included into an effective dark-click probability term). We also evaluated an exact scaling law for how the quantum bit error rate (QBER) evolves from one swap level to the next, which is of great practical importance since it gives us a way to predict the QBER on long repeater chains by making a physical measurement on one noisy elementary link.

We evaluated the rate-vs.-loss envelope attained by this repeater-chain architecture, and showed that the secret-key rate achieved can be expressed as $R = A\eta^\xi$, where $\eta$ is the overall Alice-to-Bob channel transmittance, and $A$ and $\xi < 1$ are constants that depend upon various loss and noise parameters of the system. This in turn proved that the repeater chain's performance beats the TGW bound, a fundamental rate-loss upper bound that no QKD protocol can exceed without the use of quantum repeaters~\cite{Tak13}, which imposes a linear rate-transmittance decay (i.e., $\xi =1$). This, to our knowledge, is one of the first rigorous proofs of the efficacy of any quantum repeater protocol.

We then extended our theoretical analysis to the case when the entangled photon pair sources have a non-zero two-pair emission probability, $p(2)$. For this, we used an efficient numerical model we developed for simulating bosonic states, linear-optic unitaries, and noisy measurements. We found that when $p(2) > 0$, the rate-distance tradeoff plots---with $N$ elementary links dividing up the entire range $L$ km---are almost unaffected (i.e., remain almost at their $p(2)=0$ levels at any range $L$), for all $N$ up to below a maximum value $N_{\rm max}$, where $N_{\rm max}(p(2))$ decreases as $p(2)$ is increased. If $N_{\rm max}(p(2))$ or more elementary links are used, the key rate is worse at all range $L$ compared to when fewer elementary links are used. Finally, we developed a phenomenological model for $N_{\rm max}(p(2))$ by an empirical extension of the aforesaid QBER scaling law for the $p(2)>0$ case. One of the most commonly employed optical entanglement sources uses spontaneous parametric downconversion (SPDC) devices heralded by single photon detectors~\cite{Sli03}. SPDC sources have a high enough non-zero $p(2)$ to render them ineffective as sources for the repeater protocol as described in this paper. In a subsequent paper~\cite{Kro15}, we show how photon number resolving detectors can be employed to obtain an improved sifting performance by post-selecting out erroneous multi-photon events stemming from non-zero $p(2)$, and thereby making it possible to retrieve the good rate-vs.-distance scaling. 

One can in principle replace the linear-optic entanglement swapping scheme with more advanced schemes with improved heralding efficiencies, such as the one proposed in Ref.~\cite{Gri11} that injects entangled states into a beamsplitter network and heralds the total number of clicks from an array of photon-number-resolving detectors, one that uses inline squeezers to beat the $50\%$-efficiency limit of a linear-optic BSM~\cite{Zai13}, and another proposal that can attain $75\%$ or higher heralding efficiencies via linear-optics and injection of (un-entangled) single-photons~\cite{Ewe14}. Our theoretical technique can be readily used to analyze the repeater-chain when the BSMs are replaced by one of the aforesaid schemes. At each swap stage, after the post-selection by the BSM, the projected shared state will still lie in the span of the $4$-mode $2$-qubit `dual-rail' basis, but there will be two extra coefficients to track, since the advanced BSMs can identify all four Bell states (as opposed to only two by the linear-optic scheme~\cite{Lut99}). It is quite likely that the final expression for $Q_i$, and the error-propagation law will still depend upon $t_d$, $t_r$, and $t_e$, where the latter two are the same functions of the fractional probability transfer to classical correlations at each swap stage (which should be smaller compared to when the linear-optic BSM is used). Finally, our numerical model allows us to evaluate these enhanced schemes as well, and also introduce other non-idealities such as finite memory times at the repeaters, non-linearities in the fiber and memories, and temporal non-idealities of single photon detectors such as timing jitter and after-pulsing probabilities. The analysis of quantum repeater protocols that use these advanced BSM schemes, a possible extension where multiplexing extends across elementary links (i.e. using more than one connection between elementary links), and protocols that may use quantum purification at intermediate stages, are left for future work. Furthermore, we hope that the compact rate-loss scaling results we developed in this paper for a linear repeater chain will help seed future network theoretic analyses, for instance optimal rate regions for multi-flow routing, traffic scheduling, and resource allocation, in a quantum network with more complex topologies. Finally, we expect our work to incite similar rate-loss analysis of other quantum repeater protocols, which will enable quantitative resource-performance tradeoff-studies and meaningful comparisons of different protocols.

\begin{acknowledgments}
{The authors would like to thank Khabat Heshami, Gregory Kanter and Yuping Huang for useful discussions. SG thanks Rodney Van{ }Meter and Mohsen Razavi for detailed feedback on an earlier version of this manuscript, and thanks Masahiro Takeoka and Donald Towsley for useful discussions. This paper is based on research funded by the DARPA Quiness program subaward contract number SP0020412-PROJ0005188, under prime contract number W31P4Q-13-1-0004. WT, a senior fellow of the Canadian Institute for Advanced Research (CIFAR), also acknowledges support from Alberta Innovates Technology Futures (AITF). The views and conclusions contained in this document are those of the authors and should not be interpreted as representing the official policies, either expressly or implied, of the Defense Advanced Research Projects Agency, or the U.S. Government.}
\end{acknowledgments}

\appendix

\section{Proof of Proposition \ref{prop:connections}: Quantum state of the elementary link, and entangled state propagation through a sequence of swap stages}\label{app:elem}

\subsection{The elementary link}\label{app:elemlink}
We first prove Proposition~\ref{prop:connections} for the case $i=1$, and derive the post-selected quantum state of the elementary link. Let us first consider how we should model non-ideal photodetectors. Ideally we would like to say that each of the four detectors required for the BSM individually measures a Hermitian operator with eigen-projectors $\{\Pi_0,\Pi_1, \Pi_2\}$, the $\Pi_n=|n\rangle\langle n|$ signifying the presence of $n$ photons. Next we note that we are allowed to limit ourselves to a three-dimensional subspace of the Fock space because we know we will never have more than two photons at a detection site (since we limit the theoretical part of analysis to the case when the sources have $p(2)=0$ and assume that any thermal noise in the channel is negligible at typical optical frequencies). The detectors are assumed to have a sub-unity detection efficiency $\eta_e$---which may be thought of as arising from a beamsplitter with transmissivity ${\eta_e}$ just in front of an ideal detector---and independently there may also be a probability $P_e$ for the detector to trigger in the absence of a photon. This means the ``no click'' and ``click'' events in the individual detectors really correspond to a two-outcome POVM $\{F_0,F_1\}$, with
\begin{eqnarray}
F_0 &=& (1-P_e)\Pi_0 + (1-A_e)\Pi_1 + (1-B_e)\Pi_2
\\
F_1 &=& P_e\Pi_0 + A_e\Pi_1 + B_e\Pi_2\;.
\end{eqnarray}
where we take
\begin{eqnarray}
A_e &=& 1-(1-P_e)(1-\eta)
\\
B_e &=& 1-(1-P_e)(1-\eta)^2\;.
\end{eqnarray}
The way to understand $F_0$, the ``no click'' signal for instance, is this:  If there are no actual photons present, one will get this outcome with probability $1-P_e$, the probability for no false alarm at the detector.  On the other hand, if there is a single photon present both it must disappear and there still be no false alarm; hence a coefficient $(1-P_e)(1-\eta)$ in front of $\Pi_1$. Finally, for the case that two photons are present, both of them must be lost and yet no false alarm must appear; hence a coefficient of $(1-P_e)(1-\eta)^2$.

We next note that we may incorporate the channel transmittance $\lambda$ (corresponding to propagation loss of each of the halves of the Bell pairs from two ends of the elementary link) directly into the detection efficiency $\eta_e$, by defining an effective detection efficiency $\eta_e\lambda$ while assuming the channel is lossless, rather than accounting for the channel loss in our description of the quantum states arriving at them.  One can see this through a simple bosonic mode-operator analysis including two stages of loss, but the intuition should be clear. Consequently, at the center of an elementary link we can assume the state it will attempt to link is a clean $|M^+\rangle|M^+\rangle$, while the four detectors in the BSM are working at efficiency
\begin{equation}
\eta=\eta_e\lambda\;.
\end{equation}
This greatly simplifies the analysis by not having to treat the states to be linked as mixed states.

For the purposes of the derivations in this subsection, let us label the four spatial modes involved in an elementary link by $a$, $b$, $c$, and $d$, so that the initial quantum state is more explicitly $|M_{ab}^+\rangle|M_{cd}^+\rangle$.  The BSM will be applied to modes $b$ and $c$.  What this entails is that the modes first impinge on a 50-50 beamsplitter, which enacts a mode transformation
\begin{equation}
b^\dagger_j \; \longrightarrow \; \sqrt{\frac{1}{2}}\,\Big(b^\dagger_j+c^\dagger_j\Big) \quad \mbox{and} \quad
c^\dagger_j \; \longrightarrow \; \sqrt{\frac{1}{2}}\,\Big(b^\dagger_j-c^\dagger_j\Big)\;.
\end{equation}
The consequence of this is that the state presented to the photo detectors is a massively entangled one:
\begin{eqnarray}
|\,\mbox{swap}\rangle &=& \frac{1}{4}\,\left[|10,11,00,01\rangle - |10,01,10,01\rangle \right. \nonumber \\
&&+ \sqrt{2}|10,02,00,10\rangle+|10,10,01,01\rangle
\nonumber\\
&& - |10,00,11,01\rangle - \sqrt{2}|10,00,02,10\rangle \nonumber\\
&&+ \sqrt{2}|01,20,00,01\rangle + |01,11,00,10\rangle
\nonumber\\
&& - |01,10,01,10\rangle - \sqrt{2}|01,00,20,01\rangle \nonumber\\
&&\left. + |01,01,10,10\rangle - |01,00,11,10\rangle\right]
\label{swapper}
\end{eqnarray}

Ideally then, if one were to obtain a 1-2 coincidence or a 3-4 coincidence in the detectors at the four dual-rail modes, a successful entanglement swap would be declared and a new state $|M^+_{ad}\rangle$ would be ascribed to the photons in quantum memory.  However with noisy detectors, one should use L\"uders' rule for the POVM above to get the new state.  For instance, suppose we were to detect a 1-2 coincidence in the detectors.
Then, this is signified by the POVM element
\begin{eqnarray}
&&F_1\otimes F_1 \otimes F_0\otimes F_0 \nonumber\\
&=& P_e^2(1-P_e^2)^2\Pi_0\otimes\Pi_0\otimes\Pi_0\otimes\Pi_0 \nonumber\\
&+& P_e^2(1-P_e^2)(1-A_e)\Pi_0\otimes\Pi_0\otimes\Pi_0\otimes\Pi_1+\ldots
\end{eqnarray}
and the new state for the $a$-$d$ system will be
\begin{widetext}
\be
\rho^\prime_{ad}=\frac{1}{{\rm Prob}(F_1\otimes F_1 \otimes F_0\otimes F_0)}{\rm tr}_{bc}\!\left( \sqrt{F_1\otimes F_1 \otimes F_0\otimes F_0}\, |\mbox{swap}\rangle\langle\mbox{swap}| \, \sqrt{F_1\otimes F_1 \otimes F_0\otimes F_0}\right).
\ee
\end{widetext}

From here on out is just a question of brute-force calculation. At the end of it, one finds:
\begin{eqnarray}
\rho_{ad}^\prime &=&
\frac{1}{8s_1}\left\{
\Big[A_e^2 (1-P_e)^2 + P_e^2 (1-A_e)^2\Big]\,|M^+_{ad}\rangle\langle M^+_{ad}| \right. \nonumber\\
&&+ \; 2A_e P_e(1-A_e)(1-P_e)|M^-_{ad}\rangle\langle M^-_{ad}|
\nonumber\\
&&+\;
P_e(1-P_e)\Big[P_e(1-B_e)+B_e(1-P_e)\Big] \times \nonumber\\
&&\left. \Big(|01,01\rangle\langle01,01|+|10,10\rangle\langle10,10|\Big)\right\},
\end{eqnarray}
where, the success probability to herald an elementary link $\rho_1$, $P_{s0} = {{\rm Prob}(F_1\otimes F_1 \otimes F_0\otimes F_0)} = 4s_1$, where
\begin{eqnarray}
s_1 &=& \frac{1}{8} \left[ (A_e + P_e - 2 A_e P_e)^2 \right.\nonumber\\
&&\left. + P_e(1-P_e)(B_e + P_e - 2 B_e P_e)\right]\;.
\end{eqnarray}
Thus one has mostly the swap expected.  But with some probability one gets an unexpected swap, and with some probability an induced classical correlation between the photons in the memory. By symmetry one has the same result for a 3-4 coincidence, and for 1-4 and 2-3 coincidences, one just interchanges the roles of $|M^+_{ad}\rangle$ and $|M^-_{ad}\rangle$ in this expression. We therefore have the state of an elementary link given by:
\begin{eqnarray}
\rho_1 &=& \frac{1}{s_1}\left[a_1|M^+\rangle\langle M^+| + b_1|M^-\rangle\langle M^-| + c_1|\psi_0\rangle\langle\psi_0| \right.\nonumber\\
&&\left. +\; d_1|\psi_1\rangle\langle\psi_1| + d_1|\psi_2\rangle\langle\psi_2| + c_1|\psi_3\rangle\langle\psi_3|\right],
\end{eqnarray}
where $|\psi_0\rangle = |01,01\rangle$, $|\psi_1\rangle = |01,10\rangle$, $|\psi_2\rangle = |10,01\rangle$, $|\psi_3\rangle = |10,10\rangle$, $|M^{\pm}\rangle = \left[|\psi_2\rangle \pm |\psi_1\rangle\right]/\sqrt{2}$, $s_1 = a_1 + b_1 + 2(c_1 + d_1)$ is a normalization constant, and the coefficients $a_1$, $b_1$, $c_1$, $d_1$ are given by:
\begin{eqnarray}
a_1 &\equiv& a_e = \frac{1}{8}\left[P_e^2(1-A_e)^2 + A_e^2(1-P_e)^2\right],\nonumber \label{eq:ae}\\
b_1 &\equiv& b_e = \frac{1}{8}\left[2A_eP_e(1-A_e)(1-P_e)\right], \label{eq:be}\nonumber\\
c_1 &\equiv& c_e = \frac{1}{8}P_e(1-P_e)\left[P_e(1-B_e)+B_e(1-P_e)\right], \label{eq:ce}\nonumber\\
d_1 &\equiv& d_e = 0,\nonumber
\end{eqnarray}
where $A_e = \eta_e\lambda + P_e(1 - \eta_e\lambda)$ and $B_e = 1-(1-P_e)(1 - \eta_e\lambda)^2$.

\subsection{Connections through swap stages at the quantum repeater nodes}
Next we consider the case $i\geq 2$. The proof proceeds as follows. We first realize, by term-by-term evaluation of connecting two copies of $\rho_1$, that the state $\rho_i$ never goes outside the span of $|\psi_0\rangle, |\psi_1\rangle, |\psi_2\rangle, |\psi_3\rangle$. It is convenient to express the state $\rho_i$ as:
\begin{eqnarray}
\rho_i &=&\frac{1}{s_i}\left[r_1^{(i)}|M^+\rangle\langle M^+| + r_2^{(i)}|M^-\rangle\langle M^-| + r_3^{(i)}|\psi_0\rangle\langle\psi_0| \right.\nonumber\\
&&\left. +\; r_4^{(i)}|\psi_1\rangle\langle\psi_1| + r_5^{(i)}|\psi_2\rangle\langle\psi_2| + r_6^{(i)}|\psi_3\rangle\langle\psi_3|\right],
\end{eqnarray}
where $s_i = \sum_{l=1}^6r_l^{(i)}$. Then, we realize that each subsequent connection evolves the state as,
\begin{equation}
r_l^{(i+1)} = \sum_{j=1}^6 \sum_{k=1}^6 C_{j,k,l}r_j^{(i)}r_k^{(i)},
\end{equation}
with the matrix $C$ given by (each term of which is calculated by brute-force algebra):
\begin{eqnarray}
C(1,1,:) &=& [a, b, c, 0, 0, c] \nonumber \\
C(1,2,:) &=& [b, a, c, 0, 0, c]  \nonumber \\
C(1,3,:) &=& [0, 0, a+b, 0, 2c, 0]  \nonumber \\
C(1,4,:) &=& [0, 0, 0, a+b, 0, 2c]  \nonumber \\
C(1,5,:) &=& [0, 0, 2c, 0, a+b, 0]  \nonumber \\
C(1,6,:) &=& [0, 0, 0, 2c, 0, a+b],  \nonumber
\end{eqnarray}
\begin{eqnarray}
C(2,1,:) &=& [a, b, c, 0, 0, c]  \nonumber \\
C(2,2,:) &=& [b, a, c, 0, 0, c]  \nonumber \\
C(2,3,:) &=& [0, 0, a+b, 0, 2c, 0]  \nonumber \\
C(2,4,:) &=& [0, 0, 0, a+b, 0, 2c]  \nonumber \\
C(2,5,:) &=& [0, 0, 2c, 0, a+b, 0]  \nonumber \\
C(2,6,:) &=& [0, 0, 0, 2c, 0, a+b],  \nonumber
\end{eqnarray}
\begin{eqnarray}
C(3,1,:) &=& [0, 0, a+b, 2c, 0, 0]  \nonumber \\
C(3,2,:) &=& [0, 0, a+b, 2c, 0, 0]  \nonumber \\
C(3,3,:) &=& [0, 0, 4c, 0, 0, 0]  \nonumber \\
C(3,4,:) &=& [0, 0, 0, 4c, 0, 0]  \nonumber \\
C(3,5,:) &=& [0, 0, 2(a+b), 0, 0, 0]  \nonumber \\
C(3,6,:) &=& [0, 0, 0, 2(a+b), 0, 0],  \nonumber
\end{eqnarray}
\begin{eqnarray}
C(4,1,:) &=& [0, 0, 2c, a+b, 0, 0]  \nonumber \\
C(4,2,:) &=& [0, 0, 2c, a+b, 0, 0]  \nonumber \\
C(4,3,:) &=& [0, 0, 2(a+b), 0, 0, 0]  \nonumber \\
C(4,4,:) &=& [0, 0, 0, 2(a+b), 0, 0]  \nonumber \\
C(4,5,:) &=& [0, 0, 4c, 0, 0, 0]  \nonumber \\
C(4,6,:) &=& [0, 0, 0, 4c, 0, 0],  \nonumber
\end{eqnarray}
\begin{eqnarray}
C(5,1,:) &=& [0, 0, 0, 0, a+b, 2c]  \nonumber \\
C(5,2,:) &=& [0, 0, 0, 0, a+b, 2c]  \nonumber \\
C(5,3,:) &=& [0, 0, 0, 0, 4c, 0]  \nonumber \\
C(5,4,:) &=& [0, 0, 0, 0, 0, 4c]  \nonumber \\
C(5,5,:) &=& [0, 0, 0, 0, 2(a+b), 0]  \nonumber \\
C(5,6,:) &=& [0, 0, 0, 0, 0, 2(a+b)], \nonumber
\end{eqnarray}
\begin{eqnarray}
C(6,1,:) &=& [0, 0, 0, 0, 2c, a+b]  \nonumber \\
C(6,2,:) &=& [0, 0, 0, 0, 2c, a+b]  \nonumber \\
C(6,3,:) &=& [0, 0, 0, 0, 2(a+b), 0]  \nonumber \\
C(6,4,:) &=& [0, 0, 0, 0, 0, 2(a+b)]  \nonumber \\
C(6,5,:) &=& [0, 0, 0, 0, 4c, 0]  \nonumber \\
C(6,6,:) &=& [0, 0, 0, 0, 0, 4c],
\end{eqnarray}
where the ``$\colon$" sign indicates all entries $C(j,k,l)$ for $1 \le l \le 6$. The rest is just writing out $r_l^{(i+1)}$ explicitly, and realizing that,
\begin{eqnarray}
r_3^{(i)} &=& r_6^{(i)}, \,{\text{and}}\\
r_4^{(i)} &=& r_5^{(i)},
\end{eqnarray}
and hence the fact that we can rename the coefficients as: $r_1^{(i)} = a_i, r_2^{(i)} = b_i, r_3^{(i)} = r_6^{(i)} = c_i$, and $r_4^{(i)} = r_5^{(i)} = d_i$.

\section{Evaluating the success probabilities}\label{app:success}

It is easy to realize from the derivation of the states $\rho_i$ that the success probability (to connect two copies of $\rho_{i-1}$ to obtain one copy of $\rho_i$) is simply given by $P_s(i) = 4s_i$, for $i \ge 2$. The probability an elementary link is successfully created is $P_s(1) = 1 - (1 - P_{s0})^M$, where $P_{s0} = 4s_1$ is the probability of successful creation of an elementary link $\rho_1$ in one of the $M$ frequencies at the center of the elementary link, where $s_1 = a_e+b_e+2c_e$. It is simple now to calculate the success probabilities $P_s(i)$ by proving that $s_i = s$, $\forall i \ge 2$. We thus have the following proposition.
\begin{proposition}
The success probability of connecting two copies of $\rho_{i-1}$ to produce a usable copy of $\rho_i$, $P_s(i) = 4s_i$, where
\begin{equation}
s_i = a + b + 2c \triangleq s, \, 2 \le i \le n+1.
\end{equation}
\end{proposition}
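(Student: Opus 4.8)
The plan is to read the claim off directly from the state recursions of Proposition~\ref{prop:connections}. Since $s_{i+1} = a_{i+1} + b_{i+1} + 2(c_{i+1} + d_{i+1})$ and each of $a_{i+1}, b_{i+1}, c_{i+1}, d_{i+1}$ is a homogeneous quadratic in $(a_i, b_i, c_i, d_i)$ divided by $s_i^2$ (Eqs.~\eqref{eq:aip1}--\eqref{eq:dip1}), the whole task is to add these four expressions, clear the common denominator $s_i^2$, and verify that the resulting quadratic form collapses to exactly $(a+b+2c)\,s_i^2$. Then the factor $s_i^2$ cancels and $s_{i+1} = a + b + 2c$, independently of the value of $s_i$; this is precisely the fixed-point-of-the-swap-map statement we want.

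Concretely, I would first add Eqs.~\eqref{eq:aip1} and~\eqref{eq:bip1}, obtaining $(a_{i+1}+b_{i+1})\,s_i^2 = (a+b)(a_i+b_i)^2$. Then I would add Eqs.~\eqref{eq:cip1} and~\eqref{eq:dip1} and regroup the right-hand side by its two coupling constants. Writing $p \equiv a_i+b_i$ and $q \equiv c_i+d_i$, so that $s_i = p + 2q$, the terms carrying the factor $c$ reassemble as $c\,(p^2 + 4pq + 4q^2) = c\,(p+2q)^2 = c\,s_i^2$, while the terms carrying the factor $(a+b)$ collapse to $2(a+b)\,q(p+q)$; hence $(c_{i+1}+d_{i+1})\,s_i^2 = c\,s_i^2 + 2(a+b)\,q(p+q)$. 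Adding twice this to $(a_{i+1}+b_{i+1})\,s_i^2 = (a+b)\,p^2$ and using $p^2 + 4q(p+q) = (p+2q)^2 = s_i^2$,
\begin{equation}
s_{i+1}\,s_i^2 = 2c\,s_i^2 + (a+b)\,(p+2q)^2 = (a+b+2c)\,s_i^2 ,
\end{equation}
so $s_{i+1} = a+b+2c = s$. These recursions carry the repeater-node parameters $a, b, c$ (Eqs.~\eqref{eq:a}--\eqref{eq:c}) and govern every connection $\rho_i \mapsto \rho_{i+1}$ for $i \ge 1$; taking $i=1$ with the elementary-link initial data $a_1 = a_e$, $b_1 = b_e$, $c_1 = c_e$, $d_1 = 0$ already gives $s_2 = s$, and the same identity applied at each higher level gives $s_i = s$ for all $2 \le i \le n+1$ by a one-line induction. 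Since $P_s(i) = 4 s_i$, it follows that $P_s(i) = 4(a+b+2c)$ for all $i \ge 2$.

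As a sanity check --- and an alternative, computation-free argument --- note the physical reason: every $\rho_{i-1}$ lies in the $4$-mode dual-rail subspace in which each qubit carries exactly one photon spread over its two modes, and the linear-optic BSM at a repeater node heralds on a 1-2 or 3-4 detector coincidence, whose probability depends only on ``one photon enters from each side'' together with the fixed detector loss and noise parameters, not on the particular superposition or mixture within that subspace. Hence the swap-success probability $4s_i$ is the same at every level.

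The only real obstacle is bookkeeping: one must regroup the eight monomials appearing in $(c_{i+1}+d_{i+1})\,s_i^2$ correctly and recognize the combination $[(a_i+b_i)+2(c_i+d_i)]^2 = s_i^2$ in two places. No genuinely hard estimate is involved; the entire content is the observation that the normalization constant is invariant under the (unnormalized) swap map.
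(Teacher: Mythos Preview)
Your proof is correct and follows essentially the same route as the paper: both add the recursions \eqref{eq:aip1}--\eqref{eq:dip1}, pass to auxiliary partial sums, and show that the quadratic form collapses to $(a+b+2c)\,s_i^2$. The only cosmetic difference is the choice of auxiliary variables---the paper uses $x_i = a_i+b_i+c_i+d_i$ and $y_i = c_i+d_i$ (so $s_i = x_i+y_i$), whereas you use $p = a_i+b_i$ and $q = c_i+d_i$ (so $s_i = p+2q$)---but the underlying computation is the same.
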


\begin{proof}
Denoting $x_i = a_i + b_i + c_i +d_i$, and $y_i = c_i + d_i$, using Eqs.~\eqref{eq:aip1},~\eqref{eq:bip1},~\eqref{eq:cip1},~\eqref{eq:dip1}, we have,
\begin{eqnarray}
x_{i+1} &=& \frac{1}{s_i^2}\left[(a+b+c)(x_i^2 + y_i^2) + 2cx_iy_i\right],\label{eq:x}\\
y_{i+1} &=& \frac{1}{s_i^2}\left[c(x_i-y_i)^2 + 2(a+b+2c)x_iy_i\right],\label{eq:y}
\end{eqnarray}
with $s_i = x_i + y_i$ by definition. It is easy to now see that $x_{i+1} + y_{i+1} = a+b+2c \equiv s$, for all $i \in \left\{2, 3, \ldots, n+1\right\}$. Note that $P_s(1) = 1 - (1 - 4s_1)^M$, with $s_1 = a_e + b_e + 2c_e$ for the elementary link.
\end{proof}

\section{Evaluating the sift probability}\label{app:sift}

In this Appendix, we derive $P_1$, the probability that Alice and Bob get a successful `sift', i.e., they decide to use their click outcomes for further processing to extract a key when they measure their halves of the shared entangled state $\rho_{n+1}$ (given $N = 2^n$ elementary links have been connected successfully).

Let us first assume Alice and Bob share the state $\rho_i$, and they make a measurement (in the same basis). We proceed as follows.
\begin{proposition}\label{prop:P1}
The sift probability $P_1$ is the probability that Alice and Bob both get clicks on at least one of each of their detectors (i.e., neither gets a no-click event on both detectors). Regardless of the value of $i$,
\begin{equation}
P_1 = (q_1 + q_2 + q_3)^2,
\end{equation}
where $q_1 = (1-P_d)A_d$, $q_2 = (1-A_d)P_d$, $q_3 = P_dA_d$, with $A_d = \eta_d + (1 - \eta_d)P_d$, functions of the detection efficiency ($\eta_d$) and dark-click probability ($P_d$) of each of the four single-photon detectors involved (two of Alice's and two of Bob's).
\end{proposition}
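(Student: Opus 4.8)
The plan is to exploit the structure of $\rho_i$ established in Proposition~\ref{prop:connections}: every term in $\rho_i$ — whether $\ket{M^\pm}$ or one of the classically-correlated states $\ket{\psi_0},\ldots,\ket{\psi_3}$ — lives in the ``dual-rail'' subspace in which Alice's two modes carry exactly one photon and Bob's two modes carry exactly one photon. Consequently the marginals $\rho_A \equiv \trace_B\rho_i$ and $\rho_B \equiv \trace_A\rho_i$ are each supported entirely on the one-photon sector of a pair of modes. Writing out the nine ``sift'' outcomes as in the main text, $P_1 = \trace[\rho_i\,(G_A\otimes G_B)]$, where $G = \mathbb{I} - F_0\otimes F_0$ is the ``at least one click'' effect on a party's two detectors (preceded by a 50--50 beamsplitter $U$ if that party measures in the rotated basis). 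Since the two parties' detectors and basis choices are independent, it suffices to show $\trace[\rho_A\,G_A] = q_1+q_2+q_3$ irrespective of $i$ and of the measurement basis, and likewise for Bob; then $P_1 = (q_1+q_2+q_3)^2$.

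The key observation is that $F_0\otimes F_0$ acts as a scalar on each photon-number sector of a party's two modes that can actually occur here: on the zero-photon sector it equals $(1-P_d)^2\,\mathbb{I}$, and on the one-photon sector $\mathrm{span}\{\ket{10},\ket{01}\}$ it equals $(1-A_d)(1-P_d)\,\mathbb{I}$, because $F_0$ assigns the weight $(1-P_d)$ to the empty mode and $(1-A_d)$ to the mode holding the photon, in either ordering. A 50--50 beamsplitter is photon-number preserving, so it commutes with any operator that is a scalar on each number sector; hence the rotated-basis effect $U^\dagger(F_0\otimes F_0)U$ equals $(1-A_d)(1-P_d)\,\mathbb{I}$ on the one-photon sector just as the computational-basis effect does. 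Since $\rho_A$ lies entirely in the one-photon sector, $\trace[\rho_A\,(F_0\otimes F_0)] = (1-A_d)(1-P_d)$ and therefore $\trace[\rho_A\,G_A] = 1-(1-A_d)(1-P_d)$, independent of $i$, of the $\ket{01}$-versus-$\ket{10}$ content of Alice's marginal, and of her basis choice.

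It then remains only to check the algebra $1-(1-A_d)(1-P_d) = A_d+P_d-A_dP_d = (1-P_d)A_d + (1-A_d)P_d + P_dA_d = q_1+q_2+q_3$, after which independence of the two parties gives $P_1 = (q_1+q_2+q_3)^2$. This also reproduces the intuitive reading given in the main text: for either incoming single-photon state $\ket{10}$ or $\ket{01}$, the four exclusive per-party outcomes ``no flip'', ``flip'', ``double click'', and ``unusable (no-click, no-click)'' carry probabilities $q_1$, $q_2$, $q_3$, and $(1-A_d)(1-P_d)$ respectively, which sum to one, and only the last is discarded.

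The only point requiring genuine care — and the one I would flag as the crux — is the treatment of the rotated basis: a priori the beamsplitter can mix the $\ket{01}$ and $\ket{10}$ amplitudes (and their coherences) in Alice's marginal into the click statistics, which is precisely why I would argue at the level of the effect operator $F_0\otimes F_0$, showing it is a multiple of the identity on the relevant number sector and hence beamsplitter-invariant, rather than by propagating the transformed state. Everything else is a short computation, and the restriction to ``at most one photon per half'' that makes $F_0\otimes F_0$ sector-wise scalar is exactly the $p(2)=0$ assumption under which Proposition~\ref{prop:connections}, and hence this proposition, is being proved.
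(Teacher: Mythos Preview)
Your proof is correct and rests on the same structural fact the paper uses---that every term of $\rho_i$ has exactly one photon on each party's side---but your packaging is somewhat cleaner. The paper's argument works at the state level: it computes, for the two inputs $\ket{10}$ and $\ket{01}$ separately, the probabilities $q_1,q_2,q_3$ of ``no flip'', ``flip'', and ``double click'', observes these are symmetric in the input, and concludes that the per-party usable probability is $q_1+q_2+q_3$ regardless of the $\ket{10}/\ket{01}$ content of the marginal. You instead work at the effect-operator level, showing $F_0\otimes F_0$ restricts to the scalar $(1-A_d)(1-P_d)$ on the one-photon sector, which immediately gives $\trace[\rho_A G_A]=1-(1-A_d)(1-P_d)=q_1+q_2+q_3$. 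The payoff of your route is the rotated-basis case: because a scalar on a photon-number sector automatically commutes with any number-preserving unitary, the beamsplitter drops out for free, whereas the paper's state-level argument tacitly assumes the computational basis and does not explicitly address how the superpositions produced by the beamsplitter still yield the same sift probability. Both arguments are short; yours is marginally more robust.
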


\begin{proof}
This can be shown rigorously by simply evaluating $P_1 = {\rm Tr}[\rho_i(M_{0101} + M_{0110} + M_{1001} + M_{1010} + M_{1101} + M_{1110} + M_{0111} + M_{1011} + M_{1111})]$, and $M_{ijkl} \equiv F_i \otimes F_j \otimes F_k \otimes F_l$, where the POVM elements of a lossy-noisy single-photon detector, $F_0$ and $F_1$ are defined above, using the expression of $\rho_i$ in Eq.~\eqref{eq:stateexpression_app}. Here we will sketch a more intuitive proof. Note that $\rho_i \in {\rm span}(|\psi_0\rangle, |\psi_1\rangle, |\psi_2\rangle, |\psi_3\rangle)$, with $|\psi_0\rangle = |01,01\rangle$, $|\psi_1\rangle = |01,10\rangle$, $|\psi_2\rangle = |10,01\rangle$, $|\psi_3\rangle = |10,10\rangle$, since $|M^{\pm}\rangle = \left[|\psi_2\rangle \pm |\psi_1\rangle\right]/\sqrt{2}$. Therefore, Alice's and Bob's reduced density operators always have exactly one photon in one of two modes. Let us define $q_1 \triangleq P[{\rm no flip}]$ to be the probability that a $|01\rangle$ state is detected as ``$(0, 1)$" by the lossy-noisy detector, where $(0, 1)$ stands for (no-click, click). Clearly, $q_1$ is also the probability that $|10\rangle$ is detected as ``$(1, 0)$". In order for ``no flip" to happen, no dark click should appear in the mode in the vacuum state (this happens with probability $1 - P_d$), and that the photon in the other mode should either be detected by the lossy detector (happens with probability $\eta_d$, in which case it does not matter whether or not a dark click appears), or the photon does not get detected, {\em and} a dark click appears (which happens with probability $(1-\eta_d)P_d$). Therefore, $q_1 = (1-P_d)A_d$, with $A_d = \eta_d + (1-\eta) P_d$. Similarly, we define $q_2 \triangleq P[{\rm flip}]$ to be the probability that $|01\rangle$ is detected as ``$(1, 0)$" (or, $|10\rangle$ is detected as ``$(0, 1)$"). For a ``flip" event to happen, a dark click should appear in the vacuum mode (probability $P_d$), and the photon containing mode should not be detected {\em and} a dark click must not appear (happens with probability, $(1-\eta_d)(1-P_d)$). Therefore, $q_2 = (1-\eta_d)(1-P_d)P_d = (1-A_d)P_d$. Finally, define $q_3$ to the probability that the ``$(1, 1)$" detection is obtained (either for a $|10\rangle$ or a $|01\rangle$ input). This is given by the probability that a dark click appears in the vacuum mode ($P_d$) and the probability that the single photon generates a click, i.e., $\eta_d + (1-\eta_d)P_d = A_d$. Therefore, $q_3 = P_dA_d$. Clearly, $q_1 + q_2 + q_3$ need not add up to $1$ in general, since one of two detectors may output the ``$(0, 0)$" outcome, which is when Alice and Bob discard the measurement---a failed sift event. Therefore $(q_1+q_2+q_3)^2$ is the probability that Alice and Bob obtain a {\em usable} detection outcome, i.e., both of them collectively obtain one of the nine detection outcomes: $(0, 1; 0, 1), (0, 1; 1, 0), (1, 0; 0, 1), (1, 0; 1, 0), (0, 1; 1, 1)$, $(1, 0; 1, 1), (1, 1; 0, 1), (1, 1; 1, 0), (1, 1; 1, 1)$. This is true regardless of the actual fraction of $|10\rangle$ and $|01\rangle$ in Alice's and Bob's states. Hence, $P_1 = (q_1+q_2+q_3)^2$.
\end{proof}

\section{The QBER and secret key rate}\label{app:QBER}

In this Appendix, we will evaluate the explicit formula for $Q_i$, the quantum bit-error rate (QBER), which is the probability that Alice and Bob obtain a mismatched raw key bit, despite the fact that they make measurements in the same bases on a successfully-created copy of $\rho_i$, {\em and} that they both get exactly single-clicks (on the two modes of their respective qubits). The first step in doing so is to solve for the quantum state $\rho_i$ more explicitly than what the recursions in Proposition~\ref{prop:connections} give us.

\subsection{Explicit solution for the quantum state, $\rho_i$}\label{app:QBER_state}

Recall that we proved above that $s_i = a + b + 2c \triangleq s, \, 2 \le i \le n+1$, by defining $x_i = a_i + b_i + c_i +d_i$, and $y_i = c_i + d_i$, and using Eqs.~\eqref{eq:aip1},~\eqref{eq:bip1},~\eqref{eq:cip1},~\eqref{eq:dip1}, to obtain $x_{i+1} + y_{i+1} = a+b+2c \equiv s$, for all $i \in \left\{2, 3, \ldots, n+1\right\}$, and that $s_1 = a_e + b_e + 2c_e$ for the elementary link. Let us now proceed to calculate the coefficients $a_i$, $b_i$, $c_i$ and $d_i$, all explicitly as a function of $i$, $1 \le i \le n+1$, and the system's loss and noise parameters.

\begin{proposition}\label{prop:aiplusbi}
$a_i + b_i \equiv z_i$ is given by,
\begin{equation}\label{eq:zi}
z_i = \nu\left(\frac{z_1}{\nu} \times \frac{s}{s_1}\right)^{2^{i-1}}, \, i \ge 2,
\end{equation}
where $z_1 = a_e + b_e$, $s_1 = a_e + b_e + 2c_e$, $\nu = s^2/(a+b)$, and $s \triangleq s_i$, for $i \ge 2$.
\end{proposition}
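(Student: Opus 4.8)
The plan is to reduce the four coupled recursions \eqref{eq:aip1}--\eqref{eq:dip1} to a single closed quadratic recursion for $z_i = a_i + b_i$ and then linearize it. First I would simply add Eqs.~\eqref{eq:aip1} and \eqref{eq:bip1}: the cross terms $(a+b)a_ib_i$ appear identically in both, and the $aa_i^2+bb_i^2$ and $ba_i^2+ab_i^2$ contributions combine, so every coefficient collapses to the common factor $(a+b)$ and one obtains
\begin{equation}
z_{i+1} = \frac{a+b}{s_i^2}\,(a_i+b_i)^2 = \frac{a+b}{s_i^2}\,z_i^2 .
\end{equation}
The point of this step is that the resulting recursion involves only $z_i$ (and the normalization $s_i$), decoupling it completely from $c_i$ and $d_i$.

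Next I would invoke the fact, already established in Appendix~\ref{app:success}, that $s_i = s = a+b+2c$ for every $i\ge 2$, whereas $s_1 = a_e+b_e+2c_e$ is governed by the distinct elementary-link parameters. Thus for $i\ge 2$ the recursion reads $z_{i+1} = z_i^2/\nu$ with $\nu = s^2/(a+b)$, a pure quadratic map. Introducing $v_i := z_i/\nu$ linearizes it, $v_{i+1} = v_i^2$, so that $v_i = v_2^{\,2^{i-2}}$ for all $i\ge 2$.

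It then remains only to compute the seed $v_2$ by applying the one-step recursion across the first (elementary-link) stage, where $s_1$ rather than $s$ appears: $z_2 = (a+b)z_1^2/s_1^2$, hence, using $1/\nu = (a+b)/s^2$,
\begin{equation}
v_2 = \frac{z_2}{\nu} = \frac{(a+b)^2 z_1^2}{s^2 s_1^2} = \left(\frac{z_1}{\nu}\cdot\frac{s}{s_1}\right)^{2}.
\end{equation}
Substituting back, $z_i = \nu v_i = \nu\, v_2^{\,2^{i-2}} = \nu\left(\frac{z_1}{\nu}\cdot\frac{s}{s_1}\right)^{2^{i-1}}$, which is exactly \eqref{eq:zi}.

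I do not expect a genuine obstacle here: once one notices that $a_i+b_i$ decouples, the statement is a bookkeeping exercise. The only point requiring care is not to conflate $s_1$ with $s$ — the first swap stage must be handled separately, and this is precisely the origin of the asymmetric $s/s_1$ factor sitting inside the double-exponential. (By contrast, the companion combination $u_i = c_i-d_i$ does \emph{not} decouple in this way and genuinely requires solving the logistic-map variant of Appendix~\ref{app:logisticmap}; that is where the real difficulty of the QBER calculation lies.)
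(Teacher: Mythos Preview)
Your proposal is correct and essentially mirrors the paper's argument: both reduce to the closed quadratic recursion $z_{i+1}=(a+b)z_i^2/s_i^2$ and then solve it using $s_i=s$ for $i\ge2$ with the distinguished seed $s_1$. The only cosmetic difference is that the paper obtains the recursion by subtracting the auxiliary combinations $x_i=a_i+b_i+c_i+d_i$ and $y_i=c_i+d_i$ (so $z_i=x_i-y_i$), whereas you add \eqref{eq:aip1} and \eqref{eq:bip1} directly; your route is slightly more economical and your explicit linearization $v_i=z_i/\nu$ and handling of the $s_1\neq s$ seed are exactly what the paper leaves implicit.
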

\begin{proof}
The proof follows by realizing that with the definitions in Eqs.~\eqref{eq:x} and~\eqref{eq:y}, $x_i - y_i = a_i + b_i$, and,
\begin{equation}
x_{i+1} - y_{i+1} = \frac{1}{s_i^2}(a+b)(x_i - y_i)^2.
\end{equation}
\end{proof}

\begin{remark}
Note that since $x_i + y_i = s_i$, and $x_i - y_i = z_i$, we have,
\begin{equation}
y_i = c_i + d_i = \frac{1}{2}\left[s_i - \nu\left(\frac{sz_1}{s_1\nu}\right)^{2^{i-1}}\right].
\end{equation}
As we will see in the next subsection, the error probability $Q_i$ depends only on $2c_i/s_i$---the fractional probability of the classical correlations when two copies of $\rho_{i-1}$ are connected. Note that $(a_i + b_i)$ is the sum fractional probability of the Bell states $|M^+\rangle$ ($a_i$) and $|M^-\rangle$ ($b_i$) when two copies of $\rho_{i-1}$ are connected, and $s_i = (a_i+b_i) + 2c_i$. Since we already have $c_i + d_i$ explicitly available, let us calculate $c_i - d_i \equiv u_i$.
\end{remark}

\begin{proposition}\label{prop:logistic}
The difference $c_i - d_i \equiv u_i$ can be found as the solution to the following quadratic difference equation,
\begin{equation}
w_{i+1} = w_r + 2(1 - 2w_r)w_i(1-w_i),\label{eq:diff}
\end{equation}
where $w_i \triangleq u_i/z_i$, $w_r = c/(a+b)$, and $w_1 = c_e/(a_e + b_e)$.
\end{proposition}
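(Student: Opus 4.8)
The plan is to derive a recursion directly for the difference $u_i = c_i - d_i$ from Eqs.~\eqref{eq:cip1}--\eqref{eq:dip1}, and then rescale by $z_i = a_i+b_i$ using the exact recursion for $z_i$ already obtained inside the proof of Proposition~\ref{prop:aiplusbi}. The payoff of this change of variables is that the resulting difference equation for $w_i = u_i/z_i$ has constant coefficients, which is exactly the logistic-type map quoted in Eq.~\eqref{eq:diff}.

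First I would subtract Eq.~\eqref{eq:dip1} from Eq.~\eqref{eq:cip1}. Writing $z_i = a_i+b_i$ and introducing the shorthand $\Phi_i \equiv a_i+b_i+2d_i = z_i + 2d_i$ and $\Psi_i \equiv d_i(a_i+b_i)+c_i^2+d_i^2 = d_i z_i + c_i^2 + d_i^2$, the two recursions read $s_i^2\,c_{i+1} = c z_i^2 + 2(a+b)c_i\Phi_i + 4c\Psi_i$ and $s_i^2\,d_{i+1} = 4c\,c_i\Phi_i + 2(a+b)\Psi_i$. Subtracting, the $c_i\Phi_i$ and $\Psi_i$ pieces organize themselves around a common factor $(a+b)-2c$:
\begin{equation}
s_i^2\,u_{i+1} = c z_i^2 + 2\big((a+b)-2c\big)\big(c_i\Phi_i - \Psi_i\big).
\end{equation}
The crux of the whole computation, and the only step I expect to be non-routine, is to notice that $c_i\Phi_i - \Psi_i$ collapses to a product: $c_i\Phi_i - \Psi_i = c_i(z_i+2d_i) - (d_i z_i + c_i^2 + d_i^2) = z_i(c_i-d_i) - (c_i-d_i)^2 = u_i(z_i - u_i)$. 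Hence $s_i^2\,u_{i+1} = c z_i^2 + 2\big((a+b)-2c\big)\,u_i(z_i-u_i)$.

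Next I would divide through by $z_{i+1}$. The proof of Proposition~\ref{prop:aiplusbi} establishes the exact recursion $x_{i+1}-y_{i+1} = s_i^{-2}(a+b)(x_i-y_i)^2$, i.e.\ $z_{i+1} = (a+b)z_i^2/s_i^2$, so the $s_i^{-2}$ factors cancel cleanly and
\begin{equation}
w_{i+1} = \frac{u_{i+1}}{z_{i+1}} = \frac{c}{a+b} + \frac{2\big((a+b)-2c\big)}{a+b}\cdot\frac{u_i(z_i-u_i)}{z_i^2} = w_r + 2(1-2w_r)\,w_i(1-w_i),
\end{equation}
with $w_r = c/(a+b)$ and $w_i = u_i/z_i$, which is Eq.~\eqref{eq:diff}. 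Finally the base case is immediate from the elementary-link initial conditions of Proposition~\ref{prop:connections}, namely $c_1 = c_e$, $d_1 = 0$, $a_1+b_1 = a_e+b_e$, giving $w_1 = u_1/z_1 = c_e/(a_e+b_e)$ as claimed. Apart from the factorization $c_i\Phi_i-\Psi_i = u_i(z_i-u_i)$ and the appeal to Proposition~\ref{prop:aiplusbi}, every remaining step is straightforward bookkeeping; the subsequent (genuinely hard) task of actually \emph{solving} this quadratic recursion in closed form is what is deferred to Appendix~\ref{app:logisticmap}.
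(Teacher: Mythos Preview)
Your proposal is correct and follows exactly the approach the paper sketches: subtract Eqs.~\eqref{eq:cip1} and~\eqref{eq:dip1}, rescale by $z_i$ using the recursion $z_{i+1}=(a+b)z_i^2/s_i^2$ from Proposition~\ref{prop:aiplusbi}, and simplify. You have simply made explicit the algebra (in particular the factorization $c_i\Phi_i-\Psi_i=u_i(z_i-u_i)$) that the paper leaves to the reader.
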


\begin{proof}
The proof follows from simply writing down $c_{i+1} - d_{i+1}$ using Eqs.~\eqref{eq:cip1} and~\eqref{eq:dip1}, substituting $w_i = u_i/z_i$, and simplifying.
\end{proof}

\begin{remark}\label{rem:logeq_specialcase}
The difference equation equation~\eqref{eq:diff} reduces to the famous Logistic Map, when $w_r = 0$. The solution to the logistic map $w_{i+1} = Rw_i(1-w_i)$, $w_i \in (0, 1)$, is in general chaotic, but for $R=2$ (which is exactly what ~\eqref{eq:diff} reduces to when $w_r = 0$) was found exactly by Ernst Schr{\" o}der in 1870, as:
\begin{equation}\label{eq:exactsolution}
w_i = \frac{1}{2}\left[1 - (1-2w_1)^{2^{i-1}}\right].
\end{equation}
\end{remark}

\begin{theorem}\label{thm:solutionlogisticmap}
The quadratic difference equation, $w_{i+1} = w_r + 2(1 - 2w_r)w_i(1-w_i)$, which is a variant of the logistic map $w_{i+1} = Rw_i(1-w_i)$ with $R=2$, can be exactly solved, and the solution is given by:
\begin{equation}
w_i = \frac{1}{2}\left[1 - \frac{1}{\beta}\left[\beta(1-2w_1)\right]^{2^{i-1}}\right],
\end{equation}
where $\beta = 1 - 2w_r$. This correctly reduces to~\eqref{eq:exactsolution} when $w_r = 0$.
\end{theorem}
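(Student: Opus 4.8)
The plan is to linearize the quadratic recursion by an affine change of variables that converts the \emph{variant} logistic map into a pure squaring map, which then telescopes in one line. First I would set $\beta = 1 - 2w_r$ (equivalently $w_r = (1-\beta)/2$) and pass to the centered variable $x_i \triangleq 1 - 2w_i$. Substituting $w_i = (1-x_i)/2$ into $w_{i+1} = w_r + 2(1-2w_r)w_i(1-w_i)$ and using $w_i(1-w_i) = (1-x_i^2)/4$, the inhomogeneous term and the prefactor conspire: $w_{i+1} = \tfrac{1-\beta}{2} + \tfrac{\beta}{2}(1-x_i^2) = \tfrac{1}{2}(1 - \beta x_i^2)$, so that $x_{i+1} = 1 - 2w_{i+1} = \beta x_i^2$.

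Next I would rescale once more, putting $y_i \triangleq \beta x_i$; then $y_{i+1} = \beta x_{i+1} = \beta^2 x_i^2 = y_i^2$. This is the crucial simplification: the recursion has become $y_{i+1} = y_i^2$, whose solution is the obvious $y_i = y_1^{\,2^{i-1}}$, established by a one-line induction ($y_{i+1} = (y_1^{2^{i-1}})^2 = y_1^{2^i}$). Unwinding the two substitutions gives $x_i = \tfrac{1}{\beta}\,(\beta x_1)^{2^{i-1}}$, and since $x_1 = 1 - 2w_1$ we recover the claimed closed form $w_i = \tfrac{1}{2}\big[1 - \tfrac{1}{\beta}(\beta(1-2w_1))^{2^{i-1}}\big]$. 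To finish I would verify the base case ($i=1$ returns $w_1$) and the limiting case $w_r = 0$: then $\beta = 1$ and the formula reduces to $w_i = \tfrac{1}{2}[1-(1-2w_1)^{2^{i-1}}]$, i.e. Schr{\"o}der's solution~\eqref{eq:exactsolution}, as required.

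There is no genuine obstacle once the substitution $x_i = 1 - 2w_i$ is in hand; the only insight needed is recognizing that the shift which removes the inhomogeneity $w_r$ is exactly the one that also trivializes the nonlinearity, collapsing the whole family of maps to iterated squaring. The remaining work---expanding $w_i(1-w_i)$, collecting terms, and the induction for $y_{i+1}=y_i^2$---is routine algebra that I would relegate to Appendix~\ref{app:logisticmap}.
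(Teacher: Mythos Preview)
Your proof is correct and, in fact, cleaner than the paper's own argument. Both arrive at the same closed form, but by different routes. The paper (Appendix~\ref{app:logisticmap}) starts from Schr\"oder's $w_r=0$ solution and posits the ansatz $w_i = \tfrac{1}{2}\big[1-(1-2w_1)^{2^{i-1+\xi_i}}\big]$ with an auxiliary sequence $\xi_i$; it then derives a recursion $y_{i+1}=\mu^2 y_i^2$ for $y_i=(1-2w_1)^{2^{i+\xi_i}}$, solves that, back-solves a somewhat opaque expression for $\xi_i$, and finally substitutes into the ansatz. Your route is more direct: the single substitution $x_i=1-2w_i$ immediately collapses the map to $x_{i+1}=\beta x_i^2$, the rescaling $y_i=\beta x_i$ reduces it to pure squaring, and no ansatz or auxiliary sequence is needed. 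What the paper's approach buys is a visible connection to the classical $R=2$ logistic-map solution (the ansatz is a perturbation of it); what your approach buys is transparency and the observation that the entire one-parameter family is conjugate to $y\mapsto y^2$ via an explicit affine map.
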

\begin{proof}
See next Section for the proof.
\end{proof}

Next, we find $c_i$. We add the following two expressions:
\begin{eqnarray}
c_i + d_i &=& (s_i - z_i)/2, \,{\text{and}} \\
c_i - d_i &=& u_i = \frac{z_i}2\left[1 - \frac{1}{\beta}\left[\beta(1-2w_1)\right]^{2^{i-1}}\right],
\end{eqnarray}
and divide by $2$, to obtain:
\begin{equation}\label{eq:ci}
c_i = \frac{s_i}{4}\left[1 - \frac{z_i}{\beta s_i} \left[\beta(1-2w_1)\right]^{2^{i-1}}\right].
\end{equation}

At this point, since we have $c_i$, it is sufficient to calculate $Q_i$ (see next subsection). However, let us go ahead and evaluate $a_i$ and $b_i$ as well, so that we have a complete characterization of the quantum state $\rho_i$, which can be used to calculate other quantities of interest, such as the fidelity, entanglement of formation, etc.

Since we already have $a_i +b_i = z_i$ from Proposition~\ref{prop:aiplusbi}, we need to calculate $a_i - b_i$.
\begin{proposition}
$a_i - b_i \equiv v_i$ is given by the following recursion,
\begin{equation}
v_i = \frac{1}{s_i^2}(a-b)z_iv_i,
\end{equation}
which can be solved to obtain:
\begin{equation}
v_i = \left(\frac{a-b}{a+b}\right)^{i-1}\left(\frac{a_e-b_e}{a_e+b_e}\right)z_i,
\end{equation}
where $z_i$ is given by Eq.~\eqref{eq:zi}.
\end{proposition}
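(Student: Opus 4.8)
The plan is to subtract the recursion for $b_{i+1}$ from the one for $a_{i+1}$ and then to normalize by $z_i = a_i + b_i$, so that the unknown overall factor $1/s_i^2$ cancels. First I would form $v_{i+1} = a_{i+1} - b_{i+1}$ directly from Eqs.~\eqref{eq:aip1} and~\eqref{eq:bip1}: the cross term $(a+b)a_ib_i$ appears identically in both $a_{i+1}$ and $b_{i+1}$ and cancels, leaving $v_{i+1} = \frac{1}{s_i^2}\big[(a-b)a_i^2 - (a-b)b_i^2\big] = \frac{a-b}{s_i^2}(a_i+b_i)(a_i-b_i) = \frac{a-b}{s_i^2}\,z_i v_i$, which is precisely the recursion claimed in the proposition (up to the obvious index shift on the left-hand side).

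Next I would recall from the proof of Proposition~\ref{prop:aiplusbi} that $z_i = a_i + b_i = x_i - y_i$ satisfies $z_{i+1} = \frac{a+b}{s_i^2}\,z_i^2$, i.e. $z_{i+1}/z_i = (a+b)z_i/s_i^2$. Dividing the $v$-recursion by this $z$-recursion, the common factor $z_i/s_i^2$ drops out and I obtain the clean scalar iteration $\frac{v_{i+1}}{z_{i+1}} = \frac{a-b}{a+b}\cdot\frac{v_i}{z_i}$. This is the key manoeuvre: it removes any need to know $s_i$ explicitly and reduces the problem to a one-step geometric recursion valid uniformly in $i\ge 1$, including the $i=1\to 2$ transition where $s_1 \ne s$. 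Iterating gives $v_i/z_i = \big(\tfrac{a-b}{a+b}\big)^{i-1}(v_1/z_1)$, and inserting the initial conditions $a_1 = a_e$, $b_1 = b_e$ (so $v_1 = a_e - b_e$, $z_1 = a_e + b_e$) yields $v_i = \big(\tfrac{a-b}{a+b}\big)^{i-1}\big(\tfrac{a_e-b_e}{a_e+b_e}\big)\,z_i$, with $z_i$ given by Eq.~\eqref{eq:zi}.

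There is no genuine obstacle here; the computation is short. The only point requiring care is not to substitute the closed form of $z_i$ or assume $s_i = s$ prematurely—normalizing by $z_i$ rather than by $s_i$ is what keeps the derivation uniform and typo-free. As sanity checks I would verify the $i=1$ case (the exponent $i-1$ vanishes and $v_1 = a_e - b_e$ is reproduced) and confirm $|v_i| \le z_i$, consistent with $a_i, b_i \ge 0$ being probability weights; combining this $v_i$ with $z_i = a_i + b_i$ then gives $a_i$ and $b_i$ individually, completing the explicit characterization of $\rho_i$.
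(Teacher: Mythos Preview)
Your argument is correct and is exactly the approach the paper takes: subtract Eq.~\eqref{eq:bip1} from Eq.~\eqref{eq:aip1} to get $v_{i+1} = \tfrac{a-b}{s_i^2}z_iv_i$, then simplify using the $z$-recursion to obtain the geometric iteration in $v_i/z_i$. Your normalization by $z_i$ is a clean way to carry out the ``simplifying'' step the paper leaves implicit, and you correctly note the index typo in the displayed recursion.
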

\begin{proof}
The proof follows simply by subtracting the expressions for $b_{i+1}$ from that of $a_{i+1}$, given in Proposition~\ref{prop:connections}, and simplifying.
\end{proof}

With that, we finally have the state $\rho_i$ as,
\begin{eqnarray}
\rho_i &=& \frac{1}{s_i}\left[a_i|M^+\rangle\langle M^+| + b_i|M^-\rangle\langle M^-| + c_i|\psi_0\rangle\langle\psi_0| \right.\nonumber\\
&&\left. +\; d_i|\psi_1\rangle\langle\psi_1| + d_i|\psi_2\rangle\langle\psi_2| + c_i|\psi_3\rangle\langle\psi_3|\right],\label{eq:stateexpression_app}
\end{eqnarray}
where $|\psi_0\rangle = |01,01\rangle$, $|\psi_1\rangle = |01,10\rangle$, $|\psi_2\rangle = |10,01\rangle$, $|\psi_3\rangle = |10,10\rangle$, $|M^{\pm}\rangle = \left[|\psi_2\rangle \pm |\psi_1\rangle\right]/\sqrt{2}$, $s_i = a_i + b_i + 2(c_i + d_i)$, and the coefficients given as:
\begin{eqnarray}
a_i &=& \frac{1}{2}\left[1 + \left(\frac{a-b}{a+b}\right)^{i-1}\left(\frac{a_e-b_e}{a_e+b_e}\right)\right]z_i, \nonumber\\
b_i &=& \frac{1}{2}\left[1 - \left(\frac{a-b}{a+b}\right)^{i-1}\left(\frac{a_e-b_e}{a_e+b_e}\right)\right]z_i, \nonumber\\
c_i &=& \frac{s_i}{4}\left[1 - \frac{z_i}{s_i(1-2w_r)} \left[(1-2w_r)(1-2w_1)\right]^{2^{i-1}}\right], \nonumber\\
d_i &=& \frac{s_i}{4} - \frac{z_i}{2}\left[1 - \frac{1}{2(1-2w_r)}\left[(1-2w_r)(1-2w_1)\right]^{2^{i-1}}\right],\nonumber
\end{eqnarray}
with $w_1 = c_e/(a_e+b_e)$, $w_r = c/(a+b)$, $s_1 = a_e+b_e+2c_e$, $s_i = s = a+b+2c$, $2 \le i \le n+1$, and $z_i$ given by,
\begin{equation}
z_i = \left(\frac{s^2}{a+b}\right)\left(\frac{1}{(1+2w_1)(1+2w_r)}\right)^{2^{i-1}}, \, i \ge 2,
\end{equation}
with $z_1 = a_e + b_e$. The expressions for $a_i$, $b_i$, $c_i$, and $d_i$ correctly reduce to $a_e$, $b_e$, $c_e$, and $0$, respectively, for $i=1$. As an example calculation, the fidelity of $\rho_i$ (with respect to $|M^+\rangle$), $F_i = \sqrt{\langle M^+ |\rho_i |M^+\rangle}$ is given by, $F_i = \sqrt{(a_i+d_i)/s_i}$.

\subsection{Evaluating the formula for QBER}

\begin{proposition}\label{prop:Qformula}
Assume that Alice and Bob have made a measurement on $\rho_i$, $i \in \left\{1, \ldots, n+1\right\}$. Conditioned on the fact that they get exactly one click each on their qubits (which happens with probability $P_1$, as proven in Proposition~\ref{prop:P1}), the probability $Q_i$, that they obtain a mismatched bit (a bit error) is given by,
\begin{equation}\label{eq:Q_exactform_app}
Q_i = \frac{1}{2}\left[1 - \frac{t_d}{t_r}\left(t_rt_e\right)^{2^{i-1}}\right], \, 1 \le i \le n+1,
\end{equation}
where $t_e = (a_e+b_e-2c_e)/(a_e+b_e+2c_e)$, $t_r = (a+b-2c)/(a+b+2c)$, and $t_d = ((q_1-q_2)/(q_1+q_2+q_3))^2$ are loss-noise parameters of detectors in the elementary links, memory nodes, and Alice-Bob, respectively.
\end{proposition}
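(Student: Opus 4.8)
The plan is to first reduce the sixteen‑outcome measurement calculation to a function of the single scalar $\phi_i := 2c_i/s_i$, the total weight of the classically‑correlated terms $|\psi_0\rangle\langle\psi_0|+|\psi_3\rangle\langle\psi_3|$ in $\rho_i$, and only afterwards substitute the closed form of $\rho_i$. For the first step I would split $\rho_i$ (as in Proposition~\ref{prop:connections}) into its ``rail‑anticorrelated'' part --- the span of $|M^+\rangle,|M^-\rangle,|\psi_1\rangle,|\psi_2\rangle$, of total weight $a_i+b_i+2d_i = s_i-2c_i$, on which Alice's and Bob's dual‑rail occupation patterns are opposite --- and its ``rail‑correlated'' part, the span of $|\psi_0\rangle,|\psi_3\rangle$, of weight $2c_i$. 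Since each of Alice's and Bob's reduced states is exactly one photon spread over two rails, the action of a noisy detector pair on a qubit is summarized by the three conditional probabilities $p_k = q_k/(q_1+q_2+q_3)$, $k=1,2,3$ (no‑flip, flip, double‑click) of Proposition~\ref{prop:P1}, independently of the relative populations of $|01\rangle$ and $|10\rangle$ in that party's share. Enumerating the nine sifted joint outcomes and using Bob's flipped‑bit convention, one finds that a rail‑anticorrelated component yields a bit error with probability $2p_1p_2+p_3-p_3^2/2$ and a rail‑correlated component with probability $p_1^2+p_2^2+p_3-p_3^2/2$; averaging with weights $1-\phi_i$ and $\phi_i$ gives $Q_i = \bigl(2p_1p_2+p_3-p_3^2/2\bigr) + \phi_i(p_1-p_2)^2$. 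Using $p_1+p_2+p_3=1$ the constant term collapses to $\tfrac12\bigl(1-(p_1-p_2)^2\bigr)$, so that
\[
1 - 2Q_i = (p_1-p_2)^2\,(1-2\phi_i) = t_d\,(1-2\phi_i), \qquad t_d = \Bigl(\tfrac{q_1-q_2}{q_1+q_2+q_3}\Bigr)^2 ,
\]
which is the relation quoted in the Section~\ref{sec:state_propagation} proof sketch (with $2c_i/s_i$ standing for the classical‑correlation fraction $\phi_i$). Concretely this is the trace $Q_i = \mathrm{Tr}[\rho_i(M_{0101}+M_{1010}+\tfrac12\{M_{1101}+M_{1110}+M_{0111}+M_{1011}+M_{1111}\})]/P_1$ with $P_1=(q_1+q_2+q_3)^2$, carried out term by term; it is most transparent in the computational basis, and the $45^\circ$ basis reduces to the same enumeration.

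For the second step I would insert the explicit solution for $c_i$ from Theorem~\ref{thm:solutionlogisticmap} (equivalently, the displayed formula for $\rho_i$), together with $s_i = s = a+b+2c$ for $i\ge 2$, $z_i = \tfrac{s^2}{a+b}\bigl((1+2w_1)(1+2w_r)\bigr)^{-2^{i-1}}$, and the identities $t_e = \tfrac{1-2w_1}{1+2w_1}$, $t_r = \tfrac{1-2w_r}{1+2w_r}$, $\tfrac{s^2}{a+b} = s(1+2w_r)$. A short computation then gives $1-2\phi_i = 1-4c_i/s_i = \tfrac{1}{t_r}(t_rt_e)^{2^{i-1}}$ for $2\le i\le n+1$, and combining with the relation above yields $1-2Q_i = \tfrac{t_d}{t_r}(t_rt_e)^{2^{i-1}}$, i.e.\ Eq.~\eqref{eq:Q_exactform_app}. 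The base case $i=1$ is checked directly: there $d_1=0$ and $s_1 = a_e+b_e+2c_e$, so $1-2\phi_1 = \tfrac{a_e+b_e-2c_e}{a_e+b_e+2c_e} = t_e = \tfrac{1}{t_r}(t_rt_e)^{2^0}$, and the formula extends to $i=1$ as stated.

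The main obstacle is the first step: pinning down the affine relation $1-2Q_i = t_d(1-2\phi_i)$ with exactly the right $t_d$. The bookkeeping is delicate because the double‑click outcomes (present only when $P_d>0$) must be tracked through all nine sifted events and assigned a random bit, one must verify that the flip/no‑flip/double‑click decomposition really captures the measurement statistics irrespective of the (unknown) $|01\rangle$–$|10\rangle$ mixture in each party's marginal, and the rotated‑basis measurement must be shown to contribute the same $Q_i$. Once $1-2Q_i = t_d(1-2\phi_i)$ is in hand, the remainder is substitution of the already‑derived closed form for $c_i/s_i$ and routine algebra.
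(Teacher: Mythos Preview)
Your proposal is correct and follows essentially the same two-step structure as the paper: first establish $1-2Q_i = t_d(1-2\phi_i)$ with $\phi_i = 2c_i/s_i$, then substitute the closed form for $c_i$ coming from the logistic-map solution. The paper carries out the first step by a direct term-by-term evaluation of the seven traces ${\rm Tr}[\rho_i M_{jklm}]$ in the $r_1,\ldots,r_6$ expansion and then collapses the sum; your decomposition into rail-correlated versus rail-anticorrelated components with the normalized $p_k$'s is a somewhat cleaner bookkeeping of the same computation, but not a genuinely different route.
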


\begin{proof}
The first step is to show that $Q_i$ can be expressed as follows:
\begin{equation}\label{eq:Q_zeta}
Q_i = \frac{1}{2}\left[1 - t_d(1 - 2\zeta_i)\right],
\end{equation}
where $\zeta_i = 2c_i/s_i$, $t_d = ((q_1-q_2)/(q_1+q_2+q_3))^2$. Since we have shown that $s_i=s=a+b+2c$, $i \ge 2$, and $s_1 = a_e+b_e+2c_e$, we only need to solve for $c_i$, in order to evaluate $Q_i$. In order to prove~\eqref{eq:Q_zeta}, we need to evaluate
\begin{eqnarray}
Q_i &=& \frac{1}{P_1} \biggl( {\rm Tr} \bigl[ \rho_i(M_{0101} + M_{1010} + \nonumber\\
&& \frac{1}{2}\left\{M_{1101}+M_{1110}+M_{0111}+M_{1011}+M_{1111}\right\} ) \bigr] \biggr) \nonumber
\end{eqnarray}
where the denominator $P_1 = {\rm Tr}[\rho_i(M_{0101} + M_{0110} + M_{1001} + M_{1010} + M_{1101} + M_{1110} + M_{0111} + M_{1011} + M_{1111})] = (q_1+q_2+q_3)^2$. We first note that $\rho_i$ is of the form,
\begin{eqnarray}
\rho_i &=& r_1|M^+\rangle\langle M^+| + r_2|M^-\rangle\langle M^-| + r_3|\psi_0\rangle\langle\psi_0| \nonumber\\
&& +\; r_4|\psi_1\rangle\langle\psi_1| + r_5|\psi_2\rangle\langle\psi_2| + r_6|\psi_3\rangle\langle\psi_3|,
\end{eqnarray}
with $\sum_{i=1}^6r_i = 1$. Noting that the relative contributions of $|\psi_0\rangle, |\psi_1\rangle, |\psi_2\rangle, |\psi_3\rangle$ in $\rho_i$ are $r_3, r_4 + (r_1+r_2)/2, r_5 + (r_1+r_2)/2$, and $r_6$ respectively, we now evaluate each of the $7$ terms in the expression for $Q_i$ as follows:
\begin{eqnarray}
{\rm Tr}(\rho_i M_{0101}) &=& q_1^2r_3 + q_1q_2\left[r_4 + \frac12(r_1+r_2) \right. \nonumber \\
&&\left. + r_5 + \frac12(r_1+r_2)\right] + q_2^2r_6, \nonumber\\
{\rm Tr}(\rho_i M_{1010}) &=& q_2^2r_3 + q_1q_2\left[r_4 + \frac12(r_1+r_2)\right. \nonumber\\
&&\left.+\; r_5 + \frac12(r_1+r_2)\right] + q_1^2r_6, \nonumber\\
\frac{1}{2}{\rm Tr}(\rho_i M_{1101}) &=& \frac{1}{2}\bigg[ r_3q_3q_1 + r_4q_3q_2 + r_5q_3q_1 + r_6q_3q_2 \nonumber \\
&& \left. + \left(\frac{r_1+r_2}{2}\right)q_3q_2 + \left(\frac{r_1+r_2}{2}\right)q_3q_1 \right], \nonumber \\
\frac{1}{2}{\rm Tr}(\rho_i M_{1110}) &=& \frac{1}{2}\bigg[ r_3q_3q_2 + r_4q_3q_1 + r_5q_3q_2 + r_6q_3q_1 \nonumber \\
&& \left. + \left(\frac{r_1+r_2}{2}\right)q_3q_2 + \left(\frac{r_1+r_2}{2}\right)q_3q_1 \right], \nonumber \\
\frac{1}{2}{\rm Tr}(\rho_i M_{0111}) &=& \frac{1}{2}\bigg[ r_3q_3q_1 + r_4q_3q_1 + r_5q_3q_2 + r_6q_3q_2 \nonumber \\
&& \left. + \left(\frac{r_1+r_2}{2}\right)q_3q_1 + \left(\frac{r_1+r_2}{2}\right)q_3q_2 \right], \nonumber \\
\frac{1}{2}{\rm Tr}(\rho_i M_{1011}) &=& \frac{1}{2}\bigg[ r_3q_3q_2 + r_4q_3q_2 + r_5q_3q_1 + r_6q_3q_1 \nonumber \\
&& \left. + \left(\frac{r_1+r_2}{2}\right)q_3q_2 + \left(\frac{r_1+r_2}{2}\right)q_3q_1 \right], \nonumber \\
\frac{1}{2}{\rm Tr}(\rho_i M_{1111}) &=& \frac{1}{2}\bigg[ r_3q_3^2 + r_4q_3^2 + r_5q_3^2 + r_6q_3^2 \nonumber \\
&& \left. + \left(\frac{r_1+r_2}{2}\right)q_3^2 + \left(\frac{r_1+r_2}{2}\right)q_3^2 \right]. \nonumber
\end{eqnarray}
Adding the above, and substituting $P_1 = (q_1+q_2+q_3)^2$, we get,
\begin{equation}
Q_i = \frac{(q_1-q_2)^2(r_3+r_6)+2q_1q_2 + (q_1+q_2)q_3 + \frac{q_3^2}{2}}{(q_1+q_2+q_3)^2}.
\end{equation}
Substituting $r_3 = r_6 = c_i/s_i$, defining $\zeta_i = 2c_i/s_i$, we get
\begin{eqnarray}
1-2Q_i &=& \frac{1}{(q_1+q_2+q_3)^2} \bigg[ (q_1+q_2+q_3)^2 - 2\zeta_i (q_1-q_2)^2 \nonumber \\
&& - 4q_1q_2 - 2(q_1+q_2)q_3 - q_3^2 \bigg] \nonumber \\
&=& \frac{1}{(q_1+q_2+q_3)^2} \bigg[ (q_1+q_2+q_3)^2 - 2\zeta_i (q_1-q_2)^2 \nonumber \\
&& - (q_1+q_2+q_3)^2 + (q_1 - q_2)^2 \bigg] \nonumber \\
&=& (1-2\zeta_i)\left(\frac{q_1-q_2}{q_1+q_2+q_3}\right)^2
\end{eqnarray}
Defining $t_d = ((q_1-q_2)/(q_1+q_2+q_3))^2$, Eq.~\ref{eq:Q_zeta} follows.

We now divide $2c_i$ (from Eq.~\eqref{eq:ci}) by $s_i$ to obtain,
\begin{equation}
\zeta_i = \frac{2c_i}{s_i} = \frac{1}{2}\left[1 - \frac{z_i}{\beta s_i} \left[\beta(1-2w_1)\right]^{2^{i-1}}\right].
\end{equation}
Substituting the expression for $z_i$ above, and realizing that $s_i = s$, $i \ge 2$, and $s_1 = a_e + b_e + 2c_e$, it is easy to obtain the expression for $Q_i$ in Eq.~\ref{eq:Q_exactform_app} after some algebraic manipulations. The $i=1$ case must be handled separately (since $s_1 \ne s_i, i \ge 2$), but the final expression in Eq.~\ref{eq:Q_exactform_app} is valid for all $i = 1, 2, \ldots, n+1$.

\end{proof}

The following corollary is an interesting consequence of Eq.~\ref{eq:Q_exactform_app}:
\begin{corollary}
The following law for {\em error propagation} holds through the successive connections of elementary links:
\begin{equation}\label{eq:errorpropagation_supp}
(1 - 2Q_{i+1}) = \frac{t_r}{t_d}(1-2Q_i)^2, 1 \le i \le n.
\end{equation}
An interesting thing to note about the error propagation is the constant $t_r = (1-2w_r)/(1+2w_r)$, which is a function of the parameter $2w_r = 2c/(a+b)$. We saw that when two pure bell states are `connected' by a linear-optic BSM with lossy-noisy detectors, $2c$ is the fractional probability that spills over into classical correlations (the nonentangled part), and $a+b$ is the fractional probability that goes into one of two entangled bell states.
\end{corollary}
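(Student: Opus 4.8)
The plan is to derive the error-propagation law as a direct algebraic consequence of the closed-form expression for the QBER established in Proposition~\ref{prop:Qformula}, namely $Q_i = \tfrac12\left[1 - \tfrac{t_d}{t_r}(t_rt_e)^{2^{i-1}}\right]$. First I would rewrite this as $1-2Q_i = \tfrac{t_d}{t_r}(t_rt_e)^{2^{i-1}}$, which isolates the single geometric-in-$2^{i-1}$ quantity that carries all the $i$-dependence. Squaring both sides gives $(1-2Q_i)^2 = \tfrac{t_d^2}{t_r^2}(t_rt_e)^{2^{i}}$, since $\left((t_rt_e)^{2^{i-1}}\right)^2 = (t_rt_e)^{2^{i}}$. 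Multiplying through by the constant $t_r/t_d$ then yields $\tfrac{t_r}{t_d}(1-2Q_i)^2 = \tfrac{t_d}{t_r}(t_rt_e)^{2^{i}} = \tfrac{t_d}{t_r}(t_rt_e)^{2^{(i+1)-1}}$, which is exactly $1-2Q_{i+1}$ read off from the same closed form with $i\mapsto i+1$. This establishes Eq.~\eqref{eq:errorpropagation_supp} for all $1\le i\le n$.

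A minor bookkeeping point I would address is the transition between the elementary link ($i=1$, where $s_1=a_e+b_e+2c_e$) and the deeper connection levels ($i\ge 2$, where $s_i=s=a+b+2c$); since Proposition~\ref{prop:Qformula} already asserts that the single formula Eq.~\eqref{eq:Q_exactform_app} holds uniformly for all $i=1,\dots,n+1$ (having absorbed the $i=1$ special case into the same expression), the substitution argument above goes through without case distinction. Alternatively, and as a cross-check, one can derive the same recursion without invoking the closed form at all: starting from the recursion for $c_{i+1}$ in Eq.~\eqref{eq:cip1} together with the identity $1-2Q_i = t_d(1-2\zeta_i)$ with $\zeta_i = 2c_i/s_i$ from Eq.~\eqref{eq:Q_zeta}, one writes $1-2\zeta_{i+1}$ in terms of $1-2\zeta_i$ using the $y_i,u_i$ decomposition, and the $t_r$ prefactor emerges as $(1-2w_r)/(1+2w_r)$ exactly as in the definition of $t_r$.

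I do not anticipate a genuine obstacle here: the content is purely the observation that the closed-form QBER is, up to the affine normalization $Q\mapsto 1-2Q$ and the constant multiplier $t_d/t_r$, a pure square-iteration $x_{i+1}\propto x_i^2$. The only thing requiring a little care is making sure the constant in front comes out as $t_r/t_d$ rather than its reciprocal or its square, which the explicit substitution above pins down unambiguously; I would also remark, as the corollary does, on the interpretation of $t_r=(1-2w_r)/(1+2w_r)$ in terms of the fractional probability $2w_r=2c/(a+b)$ spilled into classical (non-entangled) correlations when two pure Bell states are fused by a lossy-noisy linear-optic BSM, and note that $t_r/t_d=1$ precisely when $P_r=P_d=0$.
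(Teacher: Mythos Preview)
Your proposal is correct and follows exactly the same approach as the paper: the corollary is stated there simply as ``an interesting consequence of Eq.~\eqref{eq:Q_exactform_app}'' with no further argument, and your explicit substitution---rewriting $1-2Q_i=\tfrac{t_d}{t_r}(t_rt_e)^{2^{i-1}}$, squaring, and multiplying by $t_r/t_d$---is precisely the algebra that verifies it. Your additional cross-check via the $\zeta_i$ recursion and the remark on $t_r/t_d=1$ when $P_r=P_d=0$ go slightly beyond what the paper spells out, but are consistent with it.
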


Putting everything together, we finally have an expression for the secret-key rate,
\begin{equation}
R = \frac{P_1P_{\rm succ}R_2(Q_{n+1})}{2T_q}\;{\text{secret-key bits/s}},
\end{equation}
where $P_{\rm succ} = \left[4s\left(1-(1-4s_1)^M\right)\right]^{2^n}/4s$, $P_1 = (q_1+q_2)^2$, and $Q_{n+1} = \left[1 - \frac{t_d}{t_r}\left(t_rt_e\right)^{2^{n}}\right]/2$, are all defined in terms of the detector loss and noise parameters, and the total number of elementary links $N = 2^n$.

\section{Solution of the modified logistic map}\label{app:logisticmap}

In this section, we prove the following new variation of the logistic map, whose solutions are known to have chaotic behavior in general.
\begin{theorem}\label{thm:logisticmapproof}
The quadratic difference equation, $w_{i+1} = w_r + 2(1 - 2w_r)w_i(1-w_i)$, which is a variant of the logistic map $w_{i+1} = 2w_i(1-w_i)$ with $R=2$, can be exactly solved, and the solution is given by:
\begin{equation}
w_i = \frac{1}{2}\left[1 - \frac{1}{\mu}\left[\mu(1-2w_1)\right]^{2^{i-1}}\right], i \ge 1,
\end{equation}
where $\mu = 1 - 2w_r$, and the initial value $w_1$ specified.
\end{theorem}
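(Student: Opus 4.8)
The plan is to linearize the quadratic recursion by the affine change of variable that sends the ``mid-point'' $w=1/2$ to the origin. Concretely, I would set $v_i \triangleq 1 - 2w_i$, so that $w_i = (1-v_i)/2$ and $w_i(1-w_i) = (1-v_i^2)/4$. Writing $\mu = 1 - 2w_r$ and substituting into $w_{i+1} = w_r + 2(1-2w_r)w_i(1-w_i)$ gives $w_{i+1} = w_r + \tfrac{\mu}{2}(1 - v_i^2)$, and hence
\[
v_{i+1} = 1 - 2w_{i+1} = 1 - 2w_r - \mu(1 - v_i^2) = \mu v_i^2 .
\]
So in the new variable the recursion collapses to the pure power map $v_{i+1} = \mu v_i^2$.

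The second step is to solve $v_{i+1} = \mu v_i^2$ by one further rescaling: put $t_i \triangleq \mu v_i$. Then $t_{i+1} = \mu v_{i+1} = \mu^2 v_i^2 = t_i^2$, so a one-line induction gives $t_i = t_1^{2^{i-1}}$ for all $i \ge 1$. Undoing both substitutions with $v_1 = 1 - 2w_1$ yields $v_i = t_i/\mu = \big(\mu(1-2w_1)\big)^{2^{i-1}}/\mu$, and therefore
\[
w_i = \frac{1 - v_i}{2} = \frac{1}{2}\left[1 - \frac{1}{\mu}\big(\mu(1-2w_1)\big)^{2^{i-1}}\right],
\]
which is exactly the claimed closed form. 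I would then verify the base case ($i=1$ reproduces $w_1$) and note that setting $w_r = 0$, i.e. $\mu = 1$, recovers Schr\"oder's 1870 formula $w_i = \tfrac12[1 - (1-2w_1)^{2^{i-1}}]$ quoted in Remark~\ref{rem:logeq_specialcase}.

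There is no real obstacle here once the substitution $v_i = 1 - 2w_i$ is spotted; every remaining step is a routine algebraic identity or an immediate induction. The only point worth a comment is the degenerate case $w_r = 1/2$ ($\mu = 0$), where the stated formula must be read as a limit --- the recursion then forces $w_{i+1} = 1/2$ for $i \ge 1$, consistent with $v_i \to 0$. For the parameters relevant to the repeater analysis one has $w_r = c/(a+b) \ge 0$ with $\mu \in (0,1]$, so this degeneracy does not occur and the manipulations above are valid as written.
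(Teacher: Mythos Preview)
Your proof is correct and takes a cleaner, more direct route than the paper. The paper proceeds via an ansatz $w_i = \tfrac12\big[1 - (1-2w_1)^{2^{i-1+\xi_i}}\big]$ with an unknown exponent shift $\xi_i$, substitutes into the recursion, changes variable to $y_i = (1-2w_1)^{2^{i+\xi_i}}$ to obtain $y_{i+1} = \mu^2 y_i^2$, solves that, and then back-solves for $\xi_i$ before reassembling the answer. Your substitution $v_i = 1-2w_i$ gets to the equivalent reduction $v_{i+1} = \mu v_i^2$ in one line, and the further rescaling $t_i = \mu v_i$ makes the doubling-exponent structure $t_{i+1} = t_i^2$ manifest. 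Both routes hinge on the same underlying observation---that the affine shift centering at $w=1/2$ kills the inhomogeneous term and exposes a pure squaring map---but yours isolates it without the ansatz scaffolding and the detour through $\xi_i$. Your remark on the degenerate case $\mu=0$ is also a useful addition not present in the paper.
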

\begin{proof}
We start with the solution to the standard logistic map with $R=2$, i.e., with $w_r=0$. The solution is given by:
\begin{equation}\label{eq:exactsolution}
w_i = \frac{1}{2}\left[1 - (1-2w_1)^{2^{i-1}}\right].
\end{equation}
We use the ansatz that the modified map has the solution of the form
\begin{equation}\label{eq:ansatz}
w_i = \frac{1}{2}\left[1 - (1-2w_1)^{2^{i-1+\xi_i}}\right].
\end{equation}
Inserting this into the difference equation, we get
\begin{eqnarray}
&&\frac{1}{2}\left[1 - (1-2w_1)^{2^{i+\xi_{i+1}}}\right] = \nonumber\\
&&w_r + \frac{(1 - 2w_r)}{2}\left[1 - (1-2w_1)^{2^{i+\xi_i}}\right]\,.
\end{eqnarray}
Letting $y_i=(1-2w_1)^{2^{i+\xi_i}}$ and $\mu=1-2w_r$, we obtain
\begin{equation}
y_{i+1}={\mu}^2y_i^2\,,
\end{equation}
which can be solved to obtain
\begin{equation}
y_i=\frac{1}{{\mu}^2}({\mu}^2y_1)^{2^{i-1}},\, i \ge 1,
\end{equation}
Using this to solve for $\xi_i$, we get
\begin{equation}
\xi_i = i - \log_2\left[ \frac{2^i\log_2({\mu}(1-2w_1))-\log_2({\mu}^2)}{\log_2(1-2w_1)}\right]\,.
\end{equation}
Finally, inserting the expression for $\xi_i$ into the ansatz, we obtain the following expression for $w_i$.
\begin{equation}
w_i=\frac{1}{2}\left[1-\frac{1}{{\mu}}({\mu}(1-2w_1))^{2^{i-1}}\right],\, i \ge 1.
\end{equation}

\end{proof}

\section{Derivation of the rate-loss envelope}\label{app:ratelossanalysis}

In subsection~\ref{app:ratelossenvelope1} of this Appendix, we will show that the key rate achieved over a range $L$, when divided up into $N$ equal segments, $R_N(L)$ can be upper bounded by a three-piece approximation $R_N^{(\rm UB)}(L)$. In subsection~\ref{app:ratelossenvelope2}, we will derive the envelope $R^{(\rm UB)}(L)$ of the three-piece upper bounds $R_N^{(\rm UB)}(L)$, which in turn is an upper bound to the true rate-loss envelope. Finally, in subsection~\ref{app:ratelossenvelope3}, we will derive an exact expression for the rate-loss envelope (assuming all detector dark clicks to be zero) and show that when an optimal number $N^*(L)$ of elementary links are employed at a given range $L$, the resulting rate-loss envelope $R^{(0)}(L) = A\eta^\xi$, where $\eta = e^{-\alpha L}$.

\subsection{Three-piece rate-loss upper bound for a given number of elementary links}\label{app:ratelossenvelope1}

In this section, we will first discuss the intuition behind why it is reasonable to expect that non-zero detector dark clicks cannot increase the secret-key rate achieved by the repeater protocol, i.e., $R_N(L) \le R_N^{(0)}(L)$. We will argue why a mathematically rigorous proof of above is not trivial, despite the fact that the statement sounds intuitively obvious. In the second part of this section, we will provide a proof of Theorem~\ref{thm:ratelossUB}, assuming $R_N(L) \le R_N^{(0)}(L)$ holds for all $N \ge 1$.

\subsubsection{Non-zero dark clicks can only decrease the secret-key rate: an intuitive argument}

Let us consider the model for a non-ideal single photon detector developed in Section~\ref{app:elemlink}. The ``no click'' and ``click'' events at the output of a single photon detector, of detection efficiency $\eta$ and dark click probability $P_d$, correspond to a two-outcome POVM $\{F_0,F_1\}$, with
\begin{eqnarray}
F_0 &=& (1-P_d)\Pi_0 + (1-A_d)\Pi_1 + (1-B_d)\Pi_2
\\
F_1 &=& P_d\Pi_0 + A_d\Pi_1 + B_d\Pi_2\;.
\end{eqnarray}
where,
\begin{eqnarray}
A_d &=& 1-(1-P_d)(1-\eta), \, {\text{and}}
\\
B_d &=& 1-(1-P_d)(1-\eta)^2\;.
\end{eqnarray}
In writing the above POVM elements, we have assumed that the quantum state $\rho$ impinging on the detector has no more than $2$ photons, which holds true for all the theoretical analysis in Section~\ref{sec:analysis} that assumed $p(2)=0$. Pictorially, this detection model is elucidated in Fig.~\ref{fig:detmodel}(a), where the lossy-noisy detector is modeled as outputting the Boolean OR of two binary-valued random variables $X$ and $Y$, where $X$ is the output of an ideal single photon detector ($\left\{|0\rangle\langle 0|, {\hat I}-|0\rangle\langle 0|\right\}$) preceded by a pure-loss beamsplitter of transmissivity $\eta$ upon which the input state $\rho$ is incident, and $Y$ is a binary-valued random variable that models dark clicks, is statistically independent of $X$, and satisfies ${\rm Pr}[Y=1] = P_d$. It is easy to see that this model is equivalent to the detection model shown in Fig.~\ref{fig:detmodel}(b), where a lossy-noiseless detector (detection efficiency $\eta$, zero dark-click probability) is followed by a binary-input binary-output discrete memoryless ``Z" channel.
\begin{figure}
\centering
\includegraphics[width=\columnwidth]{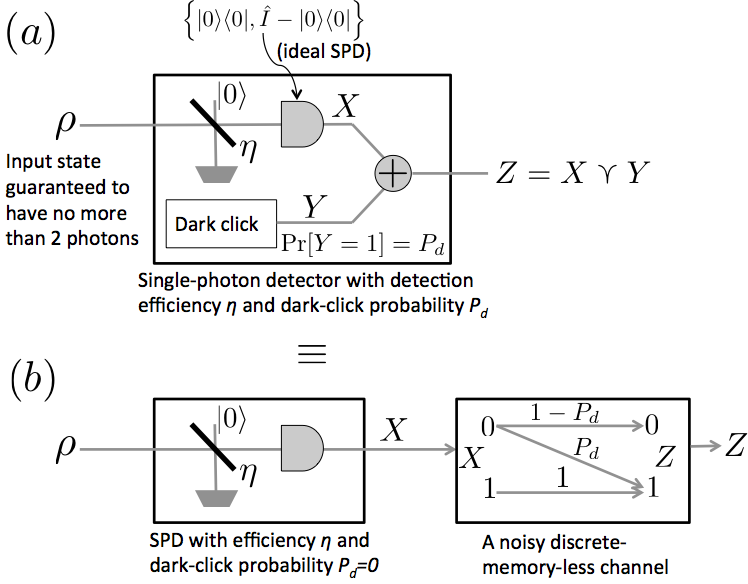}
\caption{Two equivalent models of a lossy-noisy single photon detector. $X, Y, Z \in \left\{0, 1\right\}$ are binary-valued random variables, and `$\curlyvee$' is the logical OR operation.}
\label{fig:detmodel}
\end{figure}

With the above two detection models applied to both single photon detectors of Alice, and both detectors of Bob, it is easy to see that a non-zero dark click probability at Alice's and Bob's detectors can be interpreted as a (random) local post processing of the raw classical data obtained by Alice and Bob when they (hypothetically) use zero-dark-click detectors. Since any local post-processing of their detection outcomes cannot increase the extractable secret-key rate, one concludes $R_N(L)$ is bounded above by the rate achieved with an $N$ link chain when Alice's and Bob's detectors have zero dark clicks. However, we need to prove $R_N(L) \le R_N^{(0)}(L)$, where$R_N^{(0)}(L)$ is the secret key rate when {\em all} the detectors in the system have zero dark click probability. So, we continue the argument above---that of using the equivalent interpretation of lossy-noise single photon detection depicted in Fig.~\ref{fig:detmodel}---for all the detectors used at the $N-1$ repeater nodes ($4(N-1)$ detectors) and at the centers of $N$ elementary links ($4NM$ single-frequency single-photon detectors, or $4N$ single-photon detectors that can spectrally resolve the $M$ orthogonal frequencies). Let us define $R_N^{(0), {\rm opt}}(L)$ to be the rate achievable when (a) {\em all} detectors in the system have zero dark clicks, and (b) optimal post-processing of all the detector outputs is used (note that Eve has access to most of these outputs as well except for those at Alice's and Bob's stations). Let us define $R_N^{{\rm opt}}(L)$ to be the rate achievable when (a) {\em all} detectors in the system have non-zero dark click probabilities ($P_e, P_r, P_d$, depending upon which detector), and (b) optimal post-processing of all the detector outputs is used. Note that not only Eve has access to most of these detector outputs (ones at repeater nodes and elementary link centers), she could in fact be using noiseless detectors and simulating dark clicks locally. Again, we can rigorously argue that:
\begin{equation}
R_N^{{\rm opt}}(L) \le R_N^{(0), {\rm opt}}(L),\label{eq:app1}
\end{equation}
since classical post-processing of the raw detector outputs (which affects only Alice's and Bob's raw classical data) cannot increase their extractable key rate. However, in our repeater protocol, we use a specific post-processing of the vector of detection outcomes at all the single photon detectors. Hence we have:
\begin{eqnarray}
R_N(L) &\le& R_N^{{\rm opt}}(L), \label{eq:app2}\,{\text{and}}\\
R_N^{(0)}(L) &\le& R_N^{(0), {\rm opt}}(L).\label{eq:app3}
\end{eqnarray}
Equations~\eqref{eq:app1},~\eqref{eq:app2} and~\eqref{eq:app3} are insufficient to conclude that $R_N(L) \le R_N^{(0)}(L)$.

\subsubsection{Proof of Theorem~\ref{thm:ratelossUB}}

In this section, we will prove that:
\begin{equation}
R_N^{(0)}(L) \le R_N^{({\rm UB})}(L) = \left\{
\begin{array}{ll}
R_{\rm max}, & {\text{for}}\; 0 \le L \le L^\prime, \\
\eta \left(AB^N\right), & {\text{for}}\; L^\prime < L < L_{\rm max}, \\
0, & {\text{for}}\; L \ge L_{\rm max},
\end{array}
\right.\label{eq:ratelossUB_app}
\end{equation}
with $L^\prime = -\log(\eta^\prime)/\alpha$, $\eta^\prime = (2/M\eta_e^2)^N$, and $R_{\rm max} = A\,(\eta_r^2 \lambda_m^2/2)^N$, where the constants $A$ and $B$ are given by, $A = \eta_d^2/(\eta_r^2 \lambda_m^2 T_q)$ and $B = \eta_r^2 \lambda_m^2 \eta_e^2 M/4$. Assuming that $R_N(L) \le R_N^{(0)}(L)$ holds $\forall N \ge 1$, the bound in Theorem~\ref{thm:ratelossUB} will follow.

The rate $R_N^{(0)}(L)$ assumes that $P_d = P_r = P_e = 0$, which implies $Q(N) = 0$, and hence $R_2(Q(N))=1$, $4s = \eta_r^2 {\lambda}_m^2 / 2$, and $4s_1 = \eta_e^2 \lambda^2 / 2 = \eta_e^2 \eta^{1/N}/2$, since $\lambda = \eta^{1/2N}$. Also, $P_1 = (q_1 + q_2)^2 = \eta_d^2$. Since $P_{s0} = 4s_1 < 1$, since it is a probability (of a BSM `success' on one of the frequency modes of one elementary link), with $M \ge 1$ and $N \ge 1$, we have that $\left(1-(1-4s_1)^M\right)^N \le 1$. Therefore,
\begin{eqnarray}
P_{\rm succ} &=& (4s)^{N-1} \left(1-(1-4s_1)^M\right)^N\\
&\le& (4s)^{N-1}.
\end{eqnarray}
It is now easy to derive a constant ($L$-independent) upper bound to $R_N^{(0)}(L)$, the first segment of $R_N^{({\rm UB})}(L)$.
\begin{eqnarray}
R_N^{(0)}(L) &=& \frac{P_1P_{\rm succ}R_2(Q(N))}{2T_q}\\
&=& \frac{\eta_d^2}{2T_q}P_{\rm succ} \\
&\le& \left(\frac{\eta_d^2}{\eta_r^2 \lambda_m^2 T_q}\right)\;\left(\frac{\eta_r^2 \lambda_m^2}{2}\right)^N \\
&=& A \left(\frac{\eta_r^2 \lambda_m^2}{2}\right)^N \equiv R_{\rm max},
\end{eqnarray}
where $A = \eta_d^2/(\eta_r^2 \lambda_m^2 T_q)$. Next, we observe that $(1-4s_1)^M \ge 1-4Ms_1$ for $M \ge 1$. In other words, $1 - (1-4s_1)^M \le 4Ms_1$. Hence, we have
\begin{eqnarray}
P_{\rm succ} &=& (4s)^{N-1} \left(1-(1-4s_1)^M\right)^N\\
&\le& (4s)^{N-1}\; \left(4Ms_1\right)^N \\
&=& (4s)^{N-1}\; \left(\frac{M\eta_e^2 \eta^{1/N}}{2}\right)^N\\
&=& (4s)^{N-1}\; \left(\frac{M\eta_e^2}{2}\right)^N \eta \\
&=& \eta \left(\frac{1}{4s}\right) \left(4s\frac{M\eta_e^2}{2}\right)^N\\
&=& \eta \left(\frac{2}{\eta_r^2 \lambda_m^2}\right) \left(\frac{M\eta_e^2\eta_r^2\lambda_m^2}{4}\right)^N.
\end{eqnarray}
Therefore, we have,
\begin{eqnarray}
R_N^{(0)}(L) &=& \frac{P_1P_{\rm succ}R_2(Q(N))}{2T_q}\\
&=& \frac{\eta_d^2}{2T_q}P_{\rm succ} \\
&\le& \eta \left(A B^N\right),
\end{eqnarray}
where $A = \eta_d^2/(\eta_r^2 \lambda_m^2 T_q)$, and $B = \eta_r^2 \lambda_m^2 \eta_e^2 M/4$, which gives us the linear rate-transmittance (second segment) of the upper bound $R_N^{({\rm UB})}(L)$. The third segment of $R_N^{({\rm UB})}(L)$ is trivial since $R_N(L) = 0$ for $L \ge L_{\max}$.

\subsection{Envelope of the three-piece rate-loss upper bounds}\label{app:ratelossenvelope2}

In this section, we will prove Theorem~\ref{thm:ratelossUB_exp}, i.e., derive the envelope of $R_N^{({\rm UB})}(L)$ over all $N \ge 1$. The main step will be to prove (see below) that the locus of the corner points $\left\{X_N\right\}$ is given by $A\eta^t$ with $t = {\log\left(\eta_r^2 \lambda_m^2/2\right)}/{\log\left(2/M\eta_e^2\right)} \le 1$. Next we argue that since the line segments connecting $X_N$ and $Y_N$ are proportional to $\eta$ (i.e., $\eta \left(AB^N\right)$), that the locus of the corner points $\left\{Y_N\right\}$ cannot be above the locus of the corner points $\left\{X_N\right\}$ (since $t \le 1$). We thereby conclude that the envelope of the functions $R_N^{({\rm UB})}(L)$ over all $N \ge 1$, is given by $A\eta^t$. Finally, since $R_N(L) \le R_N^{(0)}(L) \le R_N^{(\rm UB)}(L)$, given $R(L)$ is the envelope of $R_N(L)$ over all $N \ge 1$ and given $R^{(\rm UB)}(L)$ is the envelope of $R_N^{(\rm UB)}(L)$ over all $N \ge 1$, we get the statement of Theorem~\ref{thm:ratelossUB_exp}, i.e., $R(L) \le R^{(\rm UB)}(L) = A\eta^t$.

Let us now prove the only step we left open above, that the locus of the corner points $\left\{X_N\right\}$ is given by $A\eta^t$ with $t = {\log\left(\eta_r^2 \lambda_m^2/2\right)}/{\log\left(2/M\eta_e^2\right)}$. The proof follows simply by calculating the coordinates of $X_N(\eta^\prime, R^\prime)$, where $\eta^\prime$ is given by equating the first two segments of $R_N^{(\rm UB)}(L)$, and solving for $\eta$:
\begin{equation}
(AB^N)\eta^\prime = A\left(\frac{\eta_r^2\lambda_m^2}{2}\right)^N,
\end{equation}
which yields $\eta^\prime = (\frac{2}{M\eta_e^2})^N$. Clearly, $R^\prime = R_{\rm max} = A({\eta_r^2\lambda_m^2}/2)^N$. Eliminating $N$ from the expressions of $\eta^\prime(N)$ and $R^\prime(N)$ by taking logarithms and dividing, it is simple to obtain the solution of the locus of the points $\left\{X_N\right\}$ as $R^\prime = A(\eta^\prime)^t$, where $A = \eta_d^2/(\eta_r^2 \lambda_m^2 T_q)$, and $t = {\log\left(\eta_r^2 \lambda_m^2/2\right)}/{\log\left(2/M\eta_e^2\right)}$. Hence proved.

\subsection{Exact expression for the rate-loss envelope}\label{app:ratelossenvelope3}

In this section, we will prove Theorem~\ref{thm:xiformula}, i.e., derive $R^{(0)}(L) = A\eta^\xi$, the exact solution of the envelope of $R_N^{(0)}(L)$ over all $N \ge 1$, where $A = \eta_d^2/(\eta_r^2 \lambda_m^2 T_q)$, and the exponent $\xi$ is given by:
\begin{equation}
\xi = \frac{\log\left[\beta \left(1 - (1-\gamma z)^M\right)\right]}{\log z},
\label{eq:xi_app}
\end{equation}
where $z$ is the unique solution of the following transcendental equation in the interval $(0, 1)$:
\begin{eqnarray}
&&\left(1-(1-\gamma z)^M\right)\log \left[\beta(1 - (1-\gamma z)^M)\right] \nonumber\\
&&= \gamma Mz\log z \left(1-\gamma z\right)^{M-1},
\end{eqnarray}
with, $\beta = \eta_r^2\lambda_m^2/2$, and $\gamma = \eta_e^2/2$.

We can express $R_N^{(0)}(L) \equiv y = P_1P_{\rm succ}/2T_q = \eta_d^2 P_{\rm succ}/2T_q$ as:
\begin{equation}
y = A\left[\beta\left(1 - \left(1 - \gamma x^{1/N}\right)^M\right)\right]^N,
\end{equation}
where $x = \eta$ is the channel transmittance, $A = \frac{\eta_d^2}{\eta_r^2\lambda_m^2T_q}$, $\beta = \eta_r^2\lambda_m^2/2$, and $\gamma = \eta_e^2/2$. Substituting $t = 1/N$, the envelope of $R_N^{(0)}(L)$ over $N \ge 1$ is given by the simultaneous solution of $f(x,y,t) = 0$ and $\partial f(x,y,t) / \partial t = 0$, where
\begin{equation}
f(x,y,t) = \left(\frac{y}{A}\right)^t - \beta\left(1 - (1-\gamma x^t)^M\right),
\end{equation}
with $t \equiv 1/N \in (0,1]$. The two simultaneous equations are thus given by:
\begin{eqnarray}
z^t &=& \beta\left(1 - (1-\gamma x^t)^M\right),\,{\text{and}}\label{eq:z1}\\
z^t \log z &=& \beta\gamma Mx^t \log x \left(1-\gamma x^t\right)^{M-1},\label{eq:z2}
\end{eqnarray}
where $z \equiv y/A$. We will next argue that the unique solution to Eqs.~\eqref{eq:z1} and~\eqref{eq:z2} must be of the form, $z = x^\xi$. To do so, let us differentiate $z$ with respect to $x$ in Eq.~\eqref{eq:z1}, which yields
\begin{equation}
z^{t-1}\frac{dz}{dx} = \beta\gamma M\left(1-\gamma x^t\right)^{M-1}x^{t-1}.
\end{equation}
Substituting $\beta\gamma Mx^t \left(1-\gamma x^t\right)^{M-1} = z^t \log z/ \log x$ from Eq.~\eqref{eq:z2}, we get
\begin{equation}
\frac{dz}{z\log z} = \frac{dx}{x\log x},
\end{equation}
taking an indefinite integral of which yields:
\begin{equation}
\log \log z - \log \log z_0 = \log \log x - \log \log x_0,
\end{equation}
where $z_0$ and $x_0$ are constants to be determined, by substituting the solution back into $f(x,y,t) = 0$. Simplifying the above, we obtain,
\begin{equation}
\log\left(\frac{\log z}{\log x}\right) = \log\left(\frac{\log z_0}{\log x_0}\right),
\end{equation}
or $z = x^\xi$, with $\xi = {\log z_0}/{\log x_0}$. Finally, we substitute $z = x^\xi$ into Eq.~\eqref{eq:z1} and solve to obtain the expression for $\xi$ as shown in Eq.~\eqref{eq:xi_app}, and hence obtaining $y = Ax^\xi$. Hence, we have $R^{(0)}(L) = A\eta^\xi$, the exact solution of the envelope of $R_N^{(0)}(L)$ over all $N \ge 1$.
\end{document}